\definecolor{darkgreen}{rgb}{0,0.5,0}
\tikzset{cross/.style={cross out, draw=black, minimum size=2*(#1-\pgflinewidth), inner sep=0pt, outer sep=0pt},
cross/.default={1pt}}
\definecolor{plum(web)}{rgb}{0.8, 0.6, 0.8}
\definecolor{plum(web)}{rgb}{0.5, 0.0, 0.5}
\newtheorem{proposition}{Proposition}
\newtheorem{lemma}{Lemma}
\newtheorem{theorem}{Theorem}
\newtheorem{definition}{Definition}
\newtheorem{remark}{Remark}
\title[Hybrid systems]{Quantum integrable systems on a classical integrable background.}
\author{Andrii Liashyk}
\address{A.L.: BIMSA, Beijing, China}
\email{a.liashyk@gmail.com}
\author{Nicolai Reshetikhin}
\address{N.R.: YMSC, Tsinghua University, Beijing, China; BIMSA, Beijing, China;
Saint Petersburg University, Saint Petersburg, Russia}
\email{reshetik@math.berkeley.edu}
\author{Ivan Sechin}
\address{I.S.: BIMSA, Beijing, China}
\email{sechin@bimsa.cn}
\begin{document}

\maketitle

\begin{abstract}
In this paper, we develop the framework for quantum integrable systems on an integrable
classical background. We call them hybrid quantum integrable systems (hybrid integrable systems),
and we show that they occur naturally in the semiclassical limit of quantum integrable systems.
We start with an outline of the concept of hybrid dynamical systems. Then, we give several examples
of hybrid integrable systems. The first series of examples is a class of hybrid integrable systems
that appear in the semiclassical limit of quantum spin chains. Then, we look at the semiclassical
limit of the quantum spin Calogero--Moser--Sutherland (CMS) system.
The result is a hybrid integrable system driven by usual classical Calogero--Moser--Sutherland dynamics.
This system at the fixed point of the multi-time classical dynamics CMS system gives the commuting spin
Hamiltonians of Haldane--Shastry model.
\end{abstract}

\tableofcontents

\section{Introduction}

The systems where quantum dynamics is mixed with the classical one were considered in physics a long
time ago. Perhaps most well-known is the Born--Oppenheimer approximation \cite{BO}, where classical
mechanics describes the motion of atoms, and the dynamics of electrons in the classical background of
these atoms is quantum. We use the term  \textbf{hybrid quantum systems}, or \textbf{hybrid systems}
for such dynamical systems.

In this paper, we formulate the general mathematical setting for hybrid systems, show how they appear
naturally in deformation families of associative algebras,
introduce the notion of hybrid integrable systems and give some examples.

One of the first examples of a hybrid integrable system is the discrete Sine-Gordon equation \cite{BBR}.
The discrete time evolution in this system is of the hybrid type: a quantum dynamics is ''driven''
by a background discrete time classical evolution.
For special, minimally periodic, classical solutions, the quantum evolution operator in this model
is equivalent to the transfer matrix of the Chiral Potts model.

After a brief outline of the general framework of hybrid quantum systems, we focus on examples of
integrable systems. One of the examples is derived from the semiclassical analysis of the quantum
spin Calogero--Moser--Sutherland (CMS) model describing $n$ quantum particles with internal degrees of freedom
(spins). The multi-time dynamics of this system in the semiclassical limit become
quantum time-dependent multi-time dynamics of the corresponding hybrid systems \cite{HW, MP}
\footnote{
This model is different from spin Calogero--Moser--Sutherland models obtained by the quantum version of
Hamiltonian reduction, see for example \cite{Reshetikhin-reduction, RStock}. These models
are related, but we will not discuss this relation here.
}.

We also show that in the semiclassical limit, integrable quantum spin chains provide examples of hybrid
integrable systems. The classical system in this case is the corresponding classical spin chain.
The quantum multitime dynamics is given by $M$-operators. We expect that multitime
hybrid evolution in such systems can be effectively studied by the semiclassical limit of
Bethe vectors in the spirit of work \cite{RS} and using Baker--Akhiezer type functions \cite{DKN}.

Here is an outline of the paper.

In section \ref{basics} we define the algebra of observables for hybrid systems and its
representations. Here we focus on matrix hybrid systems. The algebra of observables in such a
system is the algebra of sections of the bundle of matrix algebras over a symplectic
manifold with pointwise multiplication. The base of this bundle of algebras is the phase
space of the underlying classical system. This bundle is equipped with a Hermitian connection.
Thus, the algebra of observables of a hybrid system is an algebra that is finite-dimensional
over its center. The center is a Poisson algebra that acts by derivation on the whole algebra.
Algebras that are finite-dimensional over their centers are known as Azumaya algebras. So we call
algebras of observables in hybrid matrix systems Poisson Azumaya algebras.

In this section, we also describe representations of hybrid algebras of observables,
derivations of such algebras, homomorphisms, and the relation to the deformation quantization.

In section \ref{sec:hybrid-states}, we describe hybrid states, hybrid pure states, and
Lagrangian states. Pure Lagrangian states and Lagrangian representations of hybrid algebras
of observables appear naturally in matrix Schr\"odinger equations.

In section \ref{evolution}, we focus on hybrid evolution. A hybrid dynamical system is
described by two Hamiltonians: the classical Hamiltonian $H^{(0)}$ which defines the underlying
classical dynamics, and the quantum Hamiltonian $H^{(1)}$ which defines the quantum evolution
in the fibers. We also describe the evolution of states and how the evolution in
observables is related to deformation quantization.

The notion of a hybrid integrable system is introduced in section \ref{sec:hybrid-integrable-systems}.
Here we introduce hybrid multitime integrable dynamics and show that it appears naturally in the
semiclassical limit of quantum integrable systems.

In section \ref{Matrix} we describe the semiclassical asymptotic of matrix Schr\"odinger
operators and show how hybrid dynamics appear naturally in this context.

Hybrid integrable systems related to integrable spin chains are described in section
\ref{spin-chains}.

In section \ref{CM} we describe the hybrid system that emerges in the semiclassical limit
of spin Calogero--Moser--Sutherland (CMS) systems. In this case, the classical background is the
usual (spinless) CMS system. The quantum part of this system can be called dynamical
Haldane--Shastry system. Indeed, we show that the multitime flow in the CMS model has a fixed
point. It is also known in the literature as the freezing point. Quantum Hamiltonians at this
point commute and coincide with commuting Hamiltonians for the Haldane--Shastry model of
long-range interactions \cite{Haldane, Shastry}. The fixed point is a zero-dimensional
Liouville tori. In \cite{LMRS} we describe all low-dimensional degenerations in the CMS model.
Corresponding hybrid dynamics is a dynamical version of the Haldane--Shastry model.

To conclude the introduction, let us make a notational clarification. When we write $C(\mathcal{M})$
where $\mathcal{M}$ is a smooth manifold, we mean $C^\infty$-functions. When $\mathcal{M}$
is an affine algebraic variety, $C(\mathcal{M})$ is the algebra of polynomial functions
on $\mathcal{M}$.

The results of this paper were presented at a number of conferences. The earliest one was a
talk at the conference ''Integrable Systems and Field Theory'', Jussieu, Paris, October 2023.
The authors are grateful to S. Dobrokhotov, D. Freed, L. Feher, S. Gukov, A. Kapustin, A. Mikhailov,
G. Papayanov, and P. Wiegmann for discussions and valuable comments.
We are also grateful to A. Mikhailov for pointing out the reference \cite{MV}.
Our special thanks to an anonymous reviewer for multiple useful remarks.

The research of A.L. was supported by the Beijing Natural Science Foundation (IS24006).
The research of N.R. was supported by the Collaboration Grant ”Categorical Symmetries” from
the Simons Foundation, by the Changjiang fund, and by the project 075-15-2024-631 funded
by the Ministry of Science and Higher Education of the Russian Federation.

\newpage
\section{Hybrid algebra of observables and its representations} \label{basics}

\subsection{Hybrid algebra of observables}\label{sec: hybrid algebra}
We start with an example of a hybrid algebra. Let $(\mathcal{M}, \omega)$ be a symplectic manifold.
Think of it as the phase space of a classical Hamiltonian system.
We want to define a hybrid system, i.e., a quantum system on the background of this classical system.

The underlying structure in a hybrid system is the \textbf{bundle of observables}. It is a vector
bundle over a symplectic manifold $(\mathcal{M}, \omega)$
\begin{equation}
    \begin{tikzcd}
        E \arrow["\pi", d] & \arrow[l] A_x \\
        \mathcal{M} &
    \end{tikzcd}
\end{equation}
Here fibers $A_x = \pi^{-1}(x)$ are $*$-algebras.
Here by $*$-algebra we mean an associative unital algebra over $\mathbb{C}$ with a
$\mathbb{C}$-antilinear involution $* \colon A \to A, \ a \mapsto a^*$ such that
$(a b)^* = b^* a^*, \ (a^*)^* = a, \ (\lambda a)^* = \bar{\lambda} a^*$,
where $a, b \in A$ and $\lambda \in \mathbb{C}$.
We also require that $E$ is equipped with a connection $\alpha$, compatible with $*$-structure.

The space of smooth sections $A = \Gamma(\mathcal{M}, E)$ has a natural pointwise multiplication
\begin{equation}
    (s_1 s_2)_x = (s_1)_x (s_2)_x.
\end{equation}
The identity $\mathbf{1}$ in this algebra is the section $\mathbf{1} \colon x \mapsto (1_x, x)$,
where $1_x$ is the identity in $A_x$. The center of $A$ is
$Z(A) = C(\mathcal{M}) \cdot \mathbf{1}$, the subalgebra of sections of the form
\begin{equation}
    s(x) = f(x) \cdot 1_x,  \quad f(x) \in C(\mathcal{M}).
\end{equation}
We will identify $Z(A)$ with $C(\mathcal{M})$, the space of smooth functions on $\mathcal{M}$.
It has a natural Poisson structure
\begin{equation}
    \label{Poisson-brackets-center}
    \{z_1, z_2\} = \omega^{-1}(dz_1 \wedge dz_2), \quad z_1, z_2 \in Z(A).
\end{equation}
It also acts by derivations on $A$
\begin{equation}
    \label{Poisson-module-structure}
    \{z, s\} = \omega^{-1}(dz \wedge d_\alpha s), \quad z \in Z(A), \ s \in A.
\end{equation}
Here $d_\alpha$ is the de Rham differential twisted by $\alpha$.
It can be rewritten as
\begin{equation}
    \label{eq: derivation-Z-A}
    \left\{ z, s \right\} = \iota_{v(z)}\,  d_{\alpha} s, \quad z \in Z(A), \ s \in A,
\end{equation}
where $v(z)$ is a Hamiltonian vector field for $z \in C(\mathcal{M})$.

Let $E|_U \simeq U \times A_{x_0}$ be a local trivialization of $E$ over an open
neighborhood $U \subset \mathcal{M}$ of the arbitrary point $x_0 \in \mathcal{M}$. 
Then the connection \(\alpha\) is locally represented via Hermitian one-form
\(a^{(\alpha)} \in \Omega^1(U, E|_U)\)
\begin{equation}
    d_\alpha s = ds + i \, [a^{(\alpha)}, s],
\end{equation}
where $d$ is the de Rham differential.
In local coordinates, \(a^{(\alpha)} = a^{(\alpha)}_j dx^j, \ (a^{(\alpha)}_j)^* = a^{(\alpha)}_j\), and
\begin{equation}
    \{z, s\} = (\omega^{-1})^{jk} \partial_j z \partial_k s +
        i (\omega^{-1})^{jk} \partial_j z [a^{(\alpha)}_k, s].
\end{equation}
It is easy to check that \((\{z, s\})^* = \{z, s^*\}\) for all \(z \in C(\mathcal{M})\) and
\(s \in \Gamma(\mathcal{M}, E)\).

The independence of (\ref{Poisson-module-structure}) on the trivialization of $E$ is easy to check.
Two trivializations $E|_U \simeq A_{x_0} \times U$ are related by a gauge transformation
\begin{equation}
    a^{(\alpha)} \mapsto g^{-1} a^{(\alpha)} g - i g^{-1} dg, \quad
    s \mapsto g^{-1} s g.
\end{equation}
Because $d_\alpha$ is gauge invariant
\begin{gather}
    d_\alpha s \mapsto
        d(g^{-1} s g) + i \, [g^{-1} a^{(\alpha)} g, g^{-1} s g] + [g^{-1} dg, g^{-1} s g] = \\ =
        -g^{-1} dg g^{-1} s g + g^{-1} s g g^{-1} dg + i g^{-1} [a^{(\alpha)}, s] g +
        [g^{-1} dg, g^{-1} s g] + g^{-1} ds g = g^{-1} (d_\alpha s) g,
\end{gather}
the bracket (\ref{Poisson-module-structure}) is gauge invariant, i.e., globally defined.

Thus, we have defined an action of $Z(A)$ on $A$ by derivations
\begin{equation} \label{eq: Z(A)-A-Leibniz-second}
    \{z, s_1 s_2\} = \{z, s_1\} s_2 + s_1 \{z, s_2\}, \quad
        z \in C(\mathcal{M}), \ s_1, s_2 \in \Gamma(\mathcal{M}, E).
\end{equation}
It is also easy to check that
\begin{equation} \label{eq: Z(A)-A-Leibniz-first}
    \{z_1 z_2, s\} = z_1 \{z_2, s\} + z_2 \{z_1, s\}, \quad
        z_1, z_2 \in C(\mathcal{M}), \ s \in \Gamma(\mathcal{M}, E).
\end{equation}

However, $A$ is not a Poisson module
\footnote{
Recall that $V$ is a Poisson module over a Poisson algebra $P$ if $V$ is a module over a
commutative algebra $P$ endowed with a bilinear map $P \times V \to V, \ (p, v) \mapsto \{p, v\}$
such that for $p, \tilde{p} \in P, \ v \in V$
\begin{equation}
    \{\{p, \tilde{p}\}, v\} = \{p, \{\tilde{p}, v\}\} - \{\tilde{p}, \{p, v\}\}, \quad
    \{p, \tilde{p}\} \cdot v = p \cdot \{\tilde{p}, v\} - \{\tilde{p}, p \cdot v\}.
\end{equation}
}
over $Z(A)$, because it is not a module over the Lie algebra $Z(A)$ 
with Lie bracket induced by the Poisson structure on $Z(A)$. Instead, for all
$z_1, z_2 \in C(\mathcal{M})$ and $s \in \Gamma(\mathcal{M}, E)$ we have
\begin{equation} \label{eq: not-a-module-condition}
    \{\{z_1, z_2\}, s\} =
        \{z_1, \{z_2, s\}\} - \{z_2, \{z_1, s\}\} + i \, [\{z_1, z_2\}_2, s],
\end{equation}
here $\{z_1, z_2\}_2 = \iota_{v(z_1) \wedge v(z_2)} F_\alpha \in \Gamma(\mathcal{M}, E)$,
where $F_\alpha \in \Omega^2(\mathcal{M}, E)$ is the curvature form of the
connection $\alpha$: \(d_\alpha^2 s = i [F_\alpha, s]\). 
Note that \((\{z_1, z_2\}_2)^* = \{z_1, z_2\}_2\) for every \(z_1, z_2 \in Z(A)\).

The Bianchi identity for $F_\alpha$ implies
\begin{equation} \label{eq: probably_cocycle_condition}
    \left\{ x, \left\{ y, z \right\}_2 \right\} +
        \left\{ y, \left\{ z, x \right\}_2 \right\} +
            \left\{ z, \left\{ x, y \right\}_2 \right\} +
    \left\{ x, \left\{ y, z \right\} \right\}_2 +
        \left\{ y, \left\{ z, x \right\} \right\}_2 +
            \left\{ z, \left\{ x, y \right\} \right\}_2 = 0.
\end{equation}

Conditions \eqref{eq: not-a-module-condition} and \eqref{eq: probably_cocycle_condition}
allows to construct a non-abelian extension of \(Z(A)\) considered as a Lie algebra
(see, e.g., a review \cite{AMR} and references therein).
As a vector space, this extension is a direct sum
\begin{equation} \label{eq: P(A)}
    \mathcal{P}(A) = Z(A) \oplus A/Z(A),
\end{equation}
and the bracket is given by
\begin{equation} \label{eq: lie-bracket-P(A)}
    \{(z_1, \overline{a_1}), (z_2, \overline{a_2})\} = 
        (\{z_1, z_2\}, 
            \overline{\{z_1, a_2\} - \{z_2, a_1\} + i \, [a_1, a_2] - \{z_1, z_2\}_2}), 
\end{equation}
where \(\overline{a}\) is a class of an element \(a \in A\) in the factor \(A/Z(A)\).
It is easy to check that the bracket \eqref{eq: lie-bracket-P(A)} does not depend on
the choice of a representative in the class. The Jacobi identity for \eqref{eq: lie-bracket-P(A)}
follows directly from Jacobi identity for Poisson bracket on \(Z(A)\) and properties
\eqref{eq: not-a-module-condition} and \eqref{eq: probably_cocycle_condition}, 
and does not depend on the particular form of the bracket.

At \cite{MV}, Mikhailov and Vanhaecke introduced the Poisson algebra structure on 
the Lie algebra \(\mathcal{P}(A)\), extending the commutative algebra structure on \(Z(A)\).
In our example, \(Z(A) = C(\mathcal{M})\) and \(A = \Gamma(\mathcal{M}, E)\)
one can define the commutative multiplication on \(\mathcal{P}(\Gamma(\mathcal{M}, E))\)
\begin{equation} \label{eq: multiplication-P(Gamma)}
    (z_1, \overline{a_1}) \cdot (z_2, \overline{a_2}) = (z_1 z_2, \overline{z_1 a_2 + z_2 a_1}).
\end{equation}

\begin{proposition}
A Lie algebra \((\mathcal{P}(\Gamma(\mathcal{M}, E)), \{\cdot, \cdot\})\) endowed with the multiplication
\eqref{eq: multiplication-P(Gamma)} is a Poisson algebra.
\end{proposition}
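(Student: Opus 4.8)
The plan is to verify the three axioms of a Poisson algebra for $(\mathcal{P}(\Gamma(\mathcal{M}, E)), \cdot, \{\cdot, \cdot\})$: that $\cdot$ is a commutative, associative, unital product; that $\{\cdot, \cdot\}$ is a Lie bracket; and that the two are tied together by the Leibniz rule $\{p, q \cdot r\} = \{p, q\} \cdot r + q \cdot \{p, r\}$. The Lie-algebra structure is already in hand: the bracket \eqref{eq: lie-bracket-P(A)} is independent of the choice of representatives, and its Jacobi identity was reduced above to \eqref{eq: not-a-module-condition} and \eqref{eq: probably_cocycle_condition}. So only the commutative-algebra axioms and the Leibniz compatibility remain to be checked.

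First I would dispatch the commutative-algebra part. Commutativity of \eqref{eq: multiplication-P(Gamma)} is immediate from the symmetry of $z_1 a_2 + z_2 a_1$. Well-definedness on the quotient $A/Z(A)$ holds because $Z(A) = C(\mathcal{M}) \cdot \mathbf{1}$ is stable under multiplication by central elements: replacing $a_i$ by $a_i + f_i \mathbf{1}$ changes $z_1 a_2 + z_2 a_1$ by $(z_1 f_2 + z_2 f_1)\mathbf{1} \in Z(A)$. Associativity is a one-line expansion: both $((z_1, \overline{a_1})(z_2, \overline{a_2}))(z_3, \overline{a_3})$ and $(z_1, \overline{a_1})((z_2, \overline{a_2})(z_3, \overline{a_3}))$ equal $(z_1 z_2 z_3, \overline{z_1 z_2 a_3 + z_2 z_3 a_1 + z_3 z_1 a_2})$, which is manifestly symmetric in the three factors. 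The unit is $(1, \overline{0})$.

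The substance is the Leibniz rule. I would fix $(z_1, \overline{a_1}), (z_2, \overline{a_2}), (z_3, \overline{a_3})$, compute $\{(z_1, \overline{a_1}), (z_2 z_3, \overline{z_2 a_3 + z_3 a_2})\}$ from \eqref{eq: lie-bracket-P(A)}, then expand $\{(z_1,\overline{a_1}),(z_2,\overline{a_2})\}(z_3,\overline{a_3})$ and $(z_2,\overline{a_2})\{(z_1,\overline{a_1}),(z_3,\overline{a_3})\}$ using \eqref{eq: lie-bracket-P(A)} and \eqref{eq: multiplication-P(Gamma)}, and match components. In the $Z(A)$-component this is exactly the Leibniz rule for the Poisson bracket \eqref{Poisson-brackets-center} on $C(\mathcal{M})$, namely $\{z_1, z_2 z_3\} = z_2 \{z_1, z_3\} + z_3 \{z_1, z_2\}$. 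In the $A/Z(A)$-component one needs three Leibniz-type identities: $\{z_1, z_2 a_3 + z_3 a_2\} = \{z_1,z_2\} a_3 + z_2\{z_1,a_3\} + \{z_1,z_3\}a_2 + z_3\{z_1,a_2\}$, which is \eqref{eq: Z(A)-A-Leibniz-second} with one factor central; $\{z_2 z_3, a_1\} = z_2\{z_3,a_1\} + z_3\{z_2,a_1\}$, which is \eqref{eq: Z(A)-A-Leibniz-first}; and $\{z_1, z_2 z_3\}_2 = z_2\{z_1, z_3\}_2 + z_3\{z_1,z_2\}_2$, which follows from $v(z_2 z_3) = z_2 v(z_3) + z_3 v(z_2)$ and the $C(\mathcal{M})$-linearity of $\iota_{v(z_1)\wedge(-)}F_\alpha$. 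The commutator term splits as $[a_1, z_2 a_3 + z_3 a_2] = z_2[a_1,a_3] + z_3[a_1,a_2]$ because $z_2, z_3$ are central. Substituting all of these, the target expression and the sum of the two product terms agree term by term.

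The computation has no genuine obstacle; the only non-bookkeeping input is the Leibniz identity for the curvature cocycle $\{\cdot,\cdot\}_2$ in its second argument, which I would record as a short preliminary observation. It is worth noting that the argument never uses the explicit geometric origin of $\{\cdot,\cdot\}$, of $\{\cdot,\cdot\}_2$, and of the product beyond these bilinearity and Leibniz properties, so this statement is really the specialization to $(Z(A), A) = (C(\mathcal{M}), \Gamma(\mathcal{M}, E))$ of the general construction of \cite{MV}.
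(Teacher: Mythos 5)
Your proof is correct and follows essentially the same route as the paper's: the paper also reduces the statement to the Leibniz rule and derives it from the Leibniz rule on $Z(A)$, the derivation properties \eqref{eq: Z(A)-A-Leibniz-second} and \eqref{eq: Z(A)-A-Leibniz-first}, and the derivation property of $\{\cdot,\cdot\}_2$ coming from the $C(\mathcal{M})$-bilinearity of the curvature form evaluated on Hamiltonian vector fields. Your additional checks of commutativity, associativity, well-definedness on $A/Z(A)$, and the unit are routine details the paper leaves implicit.
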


\begin{proof}
We need to check that the Poisson bracket \eqref{eq: lie-bracket-P(A)} and the multiplication
\eqref{eq: multiplication-P(Gamma)} satisfy Leibniz rule, i.e. for every
\((z_1, \overline{a_1}), (z_2, \overline{a_2}), (z_3, \overline{a_3}) \in 
\mathcal{P}(\Gamma(\mathcal{M}, E))\)
\begin{equation} \label{eq: Leibniz-rule-P(A)}
    \{(z_1, \overline{a_1}) \cdot (z_2, \overline{a_2}), (z_3, \overline{a_3}) \} =
    (z_1, \overline{a_1}) \cdot \{(z_2, \overline{a_2}), (z_3, \overline{a_3})\} +
    (z_2, \overline{a_2}) \cdot \{(z_1, \overline{a_1}), (z_3, \overline{a_3})\}.
\end{equation}

This follows directly from the Leibniz rule for \(Z(A)\), derivation properties 
\eqref{eq: Z(A)-A-Leibniz-second} and \eqref{eq: Z(A)-A-Leibniz-first}, and the fact
that the curvature form satisfies
\begin{align}
    \{z_1 z_2, z_3\}_2 &= 
        F_\alpha(v(z_1 z_2), v(z_3)) = F_\alpha(z_1 v(z_2) + z_2 v(z_1), v(z_3)) = \\ &=
        z_1 F_\alpha(v(z_2), v(z_3)) + z_2 F_\alpha(v(z_1), v(z_3)) = 
        z_1 \{z_2, z_3\}_2 + z_2 \{z_1, z_3\}_2.
\end{align}
\end{proof}

Although $A$ is not a Poisson module over $Z(A)$, conditions \eqref{eq: not-a-module-condition}
and \eqref{eq: probably_cocycle_condition} guarantee that it is a Poisson module over the extended
Poisson algebra \(\mathcal{P}(A)\). 

Now let us give the general definition of a hybrid algebra.

\begin{definition} \label{def: hybrid-algebra}
Define a \textbf{hybrid algebra} as an associative algebra $A$ such that
\begin{enumerate}
    \item The center $Z(A)$ is a Poisson algebra with Poisson brackets
        $\{\cdot, \cdot\} \colon Z(A) \times Z(A) \to Z(A)$.
    \item There exists a skew-symmetric bilinear operation
        $\{\cdot, \cdot\}_2 \colon Z(A) \times Z(A) \to A$
        such that
    $$
        \left\{ x, \left\{ y, z \right\}_2 \right\} +
            \left\{ y, \left\{ z, x \right\}_2 \right\} +
                \left\{ z, \left\{ x, y \right\}_2 \right\} +
        \left\{ x, \left\{ y, z \right\} \right\}_2 +
            \left\{ y, \left\{ z, x \right\} \right\}_2 +
                \left\{ z, \left\{ x, y \right\} \right\}_2 = 0
    $$
    for all $x, y, z \in Z(A)$.
    \item $Z(A)$ acts on $A$ by derivations $\{\cdot, \cdot\} \colon Z(A) \times A \to A$:
    $$
        \{z, ab\} = \{z, a\}b + a\{z, b\},
    $$
    for all $z \in Z(A), \ a, b \in A$, and this action is connected with
    the Poisson algebra structure on \(Z(A)\) as
    \begin{gather}
        \{z w, a\} = z \{w, a\} + w \{z, a\}, \\
        \{\{z, w\}, a\} = \{z, \{w, a\}\} - \{w, \{z, a\}\} + i \, [\{z, w\}_2, a],
    \end{gather}
    for all $z, w \in Z(A), \ \forall a \in A$.
    \item The extension \(\mathcal{P}(A) = Z(A) \oplus A/Z(A)\) is a Poisson algebra
    with the Poisson bracket
    \begin{equation}
        \{(z_1, \overline{a_1}), (z_2, \overline{a_2})\} =
            \big(\{z_1, z_2\}, 
                \overline{\{z_1, a_2\} - \{z_2, a_1\} + i \, [a_1, a_2] - \{z_1, z_2\}_2}
            \big)
    \end{equation}
    and a commutative multiplication, satisfying Leibnitz rule
    \begin{equation}
        \{(z_1, \overline{a_1}) \cdot (z_2, \overline{a_2}), (z_3, \overline{a_3}) \} =
        (z_1, \overline{a_1}) \cdot \{(z_2, \overline{a_2}), (z_3, \overline{a_3})\} +
        (z_2, \overline{a_2}) \cdot \{(z_1, \overline{a_1}), (z_3, \overline{a_3})\}.
    \end{equation}
    for all \((z_1, \overline{a_1}), (z_2, \overline{a_2}), (z_3, \overline{a_3}) \in \mathcal{P}(A)\).
\end{enumerate}
\end{definition}

If for all $z, w \in Z(A) \ \{z, w\}_2 \in Z(A)$, we call the corresponding
hybrid algebra $A$ \textbf{flat}.
In this case, $A$ is a Poisson module over $Z(A)$
$$
    \{\{z, w\}, a\} = \{z, \{w, a\}\} - \{w, \{z, a\}\}.
$$

If the algebra $A$ is finite-dimensional and simple over its center, it is called Azumaya algebra.
If $A$ also has a hybrid structure, we call it \textbf{Poisson Azumaya algebra}.
In this paper, we mainly consider flat Poisson Azumaya algebras. There are many non-flat examples
related to quantum groups at a root of unity \cite{DCK} and related integrable systems,
for example, \cite{BBR}. Quantized universal enveloping algebras of affine Kac-Moody algebras
at a root of unity \cite{BK, DHR} and at the critical value of central extension \cite{FrR}
provide examples of hybrid algebras of non-Azumaya type (infinite-dimensional algebra over
its center).

Clearly, $A = \Gamma(\mathcal{M}, E)$ is a Poisson Azumaya algebra
\footnote{
    In general, the base $\mathcal{M}$ of the vector bundle $E$
    does not have to be a symplectic manifold; it can have a degenerate Poisson structure.
    Also, $E$ can be a sheaf of algebras as it happens in quantum groups at roots of unity
    \cite{DCK}.
}, and if the connection $\alpha$ is projectively flat, $\{z_1, z_2\}_2 \in Z(A)$,
$A$ is a flat Poisson Azumaya algebra.

\begin{remark} \label{rem: two-connections}
Note that if $\alpha$ is a connection on $E$, compatible with the \(*\)-structure, 
and $\lambda$ is $E$-valued Hermitian one-form on $\mathcal{M}$, 
then $\tilde{\alpha} = \alpha + i \lambda$ is also a connection on $E$, compatible with the \(*\)-structure.
Thus, we can define another hybrid algebra structure on $A = \Gamma(\mathcal{M}, E)$
using the connection $\tilde{\alpha}$. Poisson algebra structure on $Z(A) = C(\mathcal{M})$ does not depend
on the choice of connection, however, the action of $Z(A)$ on $A$ 
\eqref{eq: derivation-Z-A} shifts on the inner derivation
\begin{equation}
    \{z, s\}^{(\tilde{\alpha})} = \{z, s\}^{(\alpha)} + i \, [\eta(z), s],
\end{equation}
where $\eta \colon Z(A) \to A$ is the linear map $\eta(z) = \iota_{v(z)} \lambda$.

The curvature form changes as follows
\begin{equation}
    \{z, w\}_2^{(\tilde{\alpha})} = \{z, w\}_2^{(\alpha)} -
        \{z, \eta(w)\}^{(\alpha)} + \, \{w, \eta(z)\}^{(\alpha)} -
            i \, [\eta(z), \eta(w)] +\eta(\{z, w\}).
\end{equation}
\end{remark}

\subsection{Representation of a hybrid algebra of observables}
\label{sec: representations-hybrid-module}

Let $V$ be a Hermitian vector bundle
\begin{equation}
\begin{tikzcd}
    V \arrow[d] & \arrow[l] V_x \\
    \mathcal{M} &
\end{tikzcd}
\end{equation}
with a fiberwise module structure over $E$, i.e. for each $x \in \mathcal{M}$ we have a homomorphism
of algebras
\begin{equation}
    \rho_x \colon A_x \to \mathrm{End}(V_x).
\end{equation}
The space of sections of $V$, $\mathcal{H} = \Gamma(\mathcal{M}, V)$ has a natural structure
of an $A = \Gamma(\mathcal{M}, E)$-module with $\rho \colon A \to \mathrm{End}(\mathcal{H})$
\begin{equation}
    (\rho(s) v)_x = \rho_x(s_x) v_x.
\end{equation}
This is a $*$-representation of $A$ if
\begin{equation}
    \rho_x(s_x^*) = \rho_x(s_x)^+,
\end{equation}
where $a^+ \colon V_x \to V_x$ is Hermitian conjugate to an operator $a \colon V_x \to V_x$
\begin{equation}
    (a v_x, w_x)_x = (v_x, a^+ w_x)_x, \qquad v_x, w_x \in V_x,
\end{equation}
and $(\cdot, \cdot)_x$ is the Hermitian structure on $V_x$.

We also assume that $V$ has a connection $\beta$, which is compatible with the
connection $\alpha$, i.e.
\begin{equation}
    d_\beta (s v) = (d_\alpha s) v + s d_\beta v.
\end{equation}
Then $C(\mathcal{M})$ acts on $\mathcal{H} = \Gamma(\mathcal{M}, V)$ as
\begin{equation}
    \{z, v\} = \omega^{-1}(dz \wedge d_\beta v),
\end{equation}
here $z \in C(\mathcal{M})$, $v \in \mathcal{H}$.
Compatibility of connections $\alpha$ and $\beta$ gives
\begin{equation}\label{eq:PAmodule_LeiblinzJacobi}
    \{z, s v\} = \{z, s\} v + s \{z, v\},  \qquad
    \{\{z_1, z_2\}, v\} = \{z_1, \{z_2, v\}\} - \{z_2, \{z_1, v\}\} + i \, \{z_1, z_2\}_2 v.
\end{equation}

\begin{definition}
Let $A$ be a hybrid Poisson algebra, and $\mathcal{H}$ be a module over $A$ as 
an associative algebra. The module $\mathcal{H}$ is called a \textbf{hybrid module} 
if the center $Z(A)$ acts on $\mathcal{H}$ 
$\{\cdot, \cdot\} \colon Z(A) \times \mathcal{H} \to \mathcal{H}$ 
compatibly with $A$-module structure
\begin{equation} \label{eq: hybrid-module-Leibniz}
      \{z, a v\} = \{z, a\}v + a \{z, v\}
\end{equation}
for all $z \in Z(A), \ a \in A, \ v \in \mathcal{H}$ with the additional condition
\begin{equation} \label{eq: hybrid-module-Jacobi}
    \{\{z, w\}, v\} = \{z, \{w, v\}\} - \{w, \{z, v\}\} + i \, \{z, w\}_2 \, v
\end{equation}
for all $z, w \in Z(A)$ and $v \in \mathcal{H}$.
\end{definition}

Note that gauge transformations $v \mapsto e^{i\theta} v$, where $\theta$ is a scalar, 
change $\beta$ as $\beta \mapsto \beta + i \, d\theta$. If $\beta$ is compatible with $\alpha$,
clearly $\beta + i \, d\theta$ is also compatible with $\alpha$. Thus, we have a natural
"gauge group" $G_{\mathcal{M}} = \mathrm{Maps}(\mathcal{M}, U(1))$ acting on a hybrid module.

\begin{remark}
If for every $z, w \in Z(A) \ \{z, w\}_2 = 0$, the hybrid module $\mathcal{H}$ is a Poisson module
over $Z(A)$
$$
    \{\{z, w\}, v\} = \{z, \{w, v\}\} - \{w, \{z, v\}\}.
$$
For $A = \Gamma(\mathcal{M}, E)$, this means that the connection $\alpha$ is flat.
It is possible only if the first Chern class $c_1(E) = 0$. Let $A = \Gamma(\mathcal{M}, E)$
be a flat Poisson Azumaya algebra with projectively flat connection $\tilde{\alpha}$ 
represented in a hybrid module $\mathcal{H} = \Gamma(\mathcal{M}, V)$. Assuming that the
rank of vector bundle $V$ is finite $\mathrm{rk}(V) = N$, and $c_1(E) = 0$, we
can always make the connection $\tilde{\alpha}$ flat by adding a one-form. The
curvature 2-form $F_{\tilde{\alpha}} \in \Omega^2(\mathcal{M}, E)$ in this case is 
represented as an $N \times N$ matrix of 2-forms. The vanishing of the first Chern class
$c_1(E) = 0$ implies that the matrix trace of the curvature 2-form 
$F_{\tilde{\alpha}} \in \Omega^2(\mathcal{M}, E)$ is exact
\begin{equation}
    \mathrm{Tr}(F_{\tilde{\alpha}}) = df,
\end{equation}
where $f \in \Omega^1(\mathcal{M})$. Since $\tilde{\alpha}$ is projectively flat, 
\begin{equation}
    F_{\tilde{\alpha}} = \tfrac{1}{N} df \cdot \mathbf{1},
\end{equation}
where $\mathbf{1}$ is the identity $N \times N$ matrix. Thus, the connection
$\alpha = \tilde{\alpha} - \frac{1}{N} f \cdot \mathbf{1}$ is flat.
\end{remark}

An example of a hybrid algebra of observables is the trivial bundle of matrix algebras,
i.e. $A_x \simeq \mathrm{End}(\mathbb{C}^N), \ E = \mathcal{M} \times \mathrm{End}(\mathbb{C}^N)$ 
with a trivial connection $\alpha = 0$. In this case the trivial vector bundle 
$V = \mathcal{M} \times \mathbb{C}^N$ with a trivial connection $\beta = 0$ is an example
of a hybrid module.

\subsection{Lagrangian modules}
\label{sec:lagrangian_modules}

Let $\mathcal{M} \stackrel{\pi}{\rightarrow} B$ be a Lagrangian fibration on $\mathcal{M}$, i.e.
a surjective mapping s.t. $\pi^{-1}(b) \subset \mathcal{M}$ is a Lagrangian submanifold for
generic $b \in B$.

Let $L \subset \mathcal{M}$ be a Lagrangian submanifold such that the intersection
\begin{equation}
    L \cap \pi^{-1}(b) = \{x_{L, b}\}
\end{equation}
is a point for all $b$ in an open dense subset $B' \subset B$. In this case the mapping
\begin{equation}
    s_L \colon b \mapsto x_{L, b} \in \mathcal{M},
\end{equation}
is a section of $\pi$ over $B'$, i.e. $\pi \circ s_L = id$.

For example, if $Q_n$ is a smooth manifold of dimension $n$ and $\mathcal{M} = T^* Q_n$, then
$\pi \colon T^* Q_n \to Q_n$ is a Lagrangian fibration with fibers being $T^*_q Q_n$. An
example of a global Lagrangian section of $\pi$ is $L_f = \{(p = df(q), q)\}$, where
$f \in C(Q_n)$ is such that $df(q_1) = df(q_2)$ iff $q_1 = q_2$
(for example a monotonic function on $\mathbb{R}^n$).
Note that $\pi(L_f) = Q_n$ is a diffeomorphism $L_f \simeq Q_n$.

\begin{definition}
Let $\mathcal{H} = \Gamma(\mathcal{M}, V)$ be a hybrid module over $A = \Gamma(\mathcal{M}, E)$.
Define the vector bundle $V^{B, L}$ over a dense open subset $B'$ of generic points of $B$
as a vector bundle with the fiber $V^{B, L}_b = V_{x_{L, b}}$ over generic $b \in B'$.

The space of sections of $\pi^L_B$, $\mathcal{H}_B^L = \Gamma(B, V^{B, L})$ is called a
\textbf{Lagrangian module} over $A$. The $A$-module structure on $\mathcal{H}^L_B$ is
\begin{equation}
    (s f)(b) = s(x_{L, b}) f(b).
\end{equation}
Here $s \in A = \Gamma(\mathcal{M}, E)$, $f \in \mathcal{H}_B^L$ and $f(b) \in V_{x_{L, b}}$.
\end{definition}

\subsection{Derivations and automorphisms}
Here we outline some basic facts about the derivations of hybrid algebras.

\begin{definition} A \textbf{derivation of a hybrid algebra} is a derivation
$D \colon A \to A$ of the associative algebra $A$, i.e. a linear map $A \to A$ such that
$D(ab) = D(a) b + a D(b)$ for $a, b \in A$, 
which is also a derivation of the Poisson structure on $Z(A)$, i.e.
\begin{equation}
    D(\{z, w\}) = \{D(z), w\} + \{z, D(w)\}
\end{equation}
for any $z, w \in Z(A)$.
\end{definition}

Note that from the definition of a derivation of an associative algebra
it follows that if $z \in Z(A)$, its derivation $D(z)$ also belongs to the center,
therefore, the Poisson brackets $\{D(z), w\}$ and $\{z, D(w)\}$ are correctly defined.

Derivations of a hybrid algebra \(A\) form a Lie algebra \(\mathrm{Der}(A)\)
with the bracket given by the commutator \([D, E](a) = D(E(a)) - E(D(a))\).

We will use the following terminology:
\begin{itemize}
\item $D$ is a \textbf{quantum derivation} if it is an inner derivation,
    i.e. $D(a) = i \, [H^{(1)}_D, a]$ for some $H^{(1)}_D \in A$
    \footnote{
    Note that $H^{(1)}_D$ is determined by the derivation $D$ only up to a central element.
    }.
\item $D$ is a \textbf{Hamiltonian derivation} if $D(a) = \{H^{(0)}_D, a\}$
    for some $H^{(0)}_D  \in Z(A)$.
\item $D$ is a \textbf{hybrid derivation}
    if $D(a) = \{H^{(0)}_D, a\} + i \, [H_D^{(1)}, a]$ for
    $H^{(0)}_D \in Z(A)$ and $H^{(1)}_D \in A \ $
    \footnote{
    In \cite{MV} this derivation of algebra $A$ also appeared,
    but it is interpreted as a Hamiltonian derivation.}.
\end{itemize}

Let \(\mathrm{HDer}(A)\) be the space of hybrid derivations of a hybrid algebra \(A\).
We have a natural isomorphism of vector spaces
\begin{equation}
    \mathrm{HDer}(A) \simeq Z(A) \oplus A/Z(A),
\end{equation}
given by \(D \mapsto (H^{(0)}_D, \overline{H^{(1)}_D})\), where \(\overline{H^{(1)}_D}\)
is the class of \(H^{(1)}_D \ \mathrm{mod} \ Z(A)\).

Let \(D\) and \(E\) be two hybrid derivations of a hybrid algebra \(A\)
\begin{equation}
    D(a) = \{H^{(0)}_D, a\} + i \, [H^{(1)}_D, a], \quad
    E(a) = \{H^{(0)}_E, a\} + i \, [H^{(1)}_E, a].
\end{equation}
Then the commutator of these derivations is also a hybrid derivation
\begin{equation}
    \label{eq: hybrid-derivations-commutator}
    [D, E](a) = \{\{H^{(0)}_D, H^{(0)}_E\}, a\} + 
        i \, \big[\{H^{(0)}_D, H^{(1)}_E\} - \{H^{(0)}_E, H^{(1)}_D\} + 
            i \, [H^{(1)}_D, H^{(1)}_E] - \{H^{0}_D, H^{(0)}_E\}_2, a \big].
\end{equation}
Formula \eqref{eq: hybrid-derivations-commutator} implies that \(\mathrm{HDer}(A)\)
is a Lie subalgebra in \(\mathrm{Der}(A)\). Moreover, we have an isomorphism
of Lie algebras
\begin{equation}
    \mathrm{HDer}(A) \simeq \mathcal{P}(A),
\end{equation}
where \(\mathcal{P}(A)\) is defined in the definition \ref{def: hybrid-algebra}.

A derivation $D \colon A \to A$ is \textbf{represented} in an $A$-module ($\mathcal{H}$,
$\rho \colon A \to  \mathrm{End}(\mathcal{H})$), if $\mathcal{H}$ is equipped with a linear map
$D_{\mathcal{H}} \colon \mathcal{H} \to \mathcal{H}$, such that
\begin{equation}
    \label{derivation-rep}
    D_{\mathcal{H}}(a v) = D(a) v + a D_{\mathcal{H}}(v), \quad a \in A, \ v \in \mathcal{H}.
\end{equation}
In particular, a hybrid derivation $D$ is represented in an $A$-module $\mathcal{H}$ by
\begin{equation}
    D_{\mathcal{H}}(v) = \{H^{(0)}_D, v\} + i \tilde{H}_D^{(1)} v, \quad v \in \mathcal{H}.
\end{equation}
where $\tilde{H}_D^{(1)} = H_D^{(1)} + \delta$ and $\delta$ is any operator such that 
$[\delta, \rho(a)] = 0$ for any $a \in A$, i.e. it is an element of the centralizer
of $\rho(A) \subset \mathrm{End}(\mathcal{H})$.

\begin{definition}
A linear mapping $\varphi \colon A \to B$ is called a
\textbf{homomorphism of hybrid algebras} if it is a homomorphism of associative algebras
\begin{equation}
    \varphi(ab) = \varphi(a) \varphi(b)
\end{equation}
and a morphism of Poisson structures on $Z(A)$ and $Z(B)$
\begin{equation}
    \varphi(\{z, w\}_A) = \{\varphi(z), \varphi(w)\}_B,
\end{equation}
for all $z, w \in Z(A)$.
\end{definition}

Note that the homomorphism of associative algebras maps center to center, so the Poisson
bracket $\{\varphi(z), \varphi(w)\}_B$ in this definition is the Poisson bracket on
$Z(B)$.

An invertible homomorphism of hybrid algebras $\varphi \colon A \to A$ is called
an \textbf{automorphism} of $A$.

\subsection{The relation to deformation quantization} \label{sec: deformation_quantization}
Let $A_0$ be an associative algebra and $Z(A_0)$ be its center. Let $A_\hbar$ be a flat deformation family
of $A_0$, i.e. a family of associative algebras $A_\hbar$ together with linear isomorphisms
$\phi_\hbar \colon A_\hbar \to A_0$ such that $\phi_0 = id$.
In many practically interesting cases, such linear isomorphisms are given by an identification
of linear bases or PBW bases in $A_\hbar$ and in $A_0$.

Let $* \colon A_0 \times A_0 \to A_0$ be the corresponding $*$-product on $A_0$
\begin{equation}\label{eq: star-product}
    a * b = \phi_\hbar \left( \phi_\hbar^{-1}(a) \phi_\hbar^{-1}(b)\right).
\end{equation}

Define the $*$-commutator as
\begin{equation} \label{eq: star-commutator}
    \left[a, b \right]_{*} = a * b - b * a =
        \phi_\hbar \big([\phi_\hbar^{-1}(a), \phi_\hbar^{-1}(b)]\big).
\end{equation}
In these formulas, the product on the right-hand side is taken in $A_\hbar$.
Let us assume that the product \eqref{eq: star-product} is given by the analytic coefficient
functions in an appropriate topological sense (for example, coefficients in a PBW basis
where coefficients depend analytically on \(\hbar\), or the algebra of 
\(\hbar\)-differential operators, etc.). Then for small \(\hbar\)
\begin{equation} \label{eq: star-product-A0}
    a * b = a b + \sum_{k \ge 1} \hbar^k m_k(a, b).
\end{equation}

If one wants to separate the algebraic and analytical aspects of deformation quantization, a natural
setting is formal deformation quantization, where the product \eqref{eq: star-product} is defined
over \(A_0 [[\hbar]]\) and is a formal power series in \(\hbar\).

Restricted on a commutative subalgebra $Z(A_0) \subset A_0$, the $*$-commutator in the leading order
in $\hbar$ induces a Poisson structure on $Z(A_0)$. Define a skew-symmetric bilinear operation
\begin{equation}
    \{z, w\} = i \big(m_1(z, w) - m_1(w, z)\big), \qquad z, w \in Z(A_0).
\end{equation}
The Jacobi identity for the commutator in $A_\hbar$ guarantees that $\{z, w\} \in Z(A_0)$,
and satisfies Leibniz and Jacobi identities:
\begin{equation}
    \{zw, x\} = \{z, x\} w + z \{w, x\}, \qquad
    \{z, \{w, x\}\} + \{w, \{x, z\}\} + \{x, \{z, w\}\} = 0
\end{equation}
for each $z, w, x \in Z(A_0)$.

Let us show that $A_0$ is a hybrid algebra. Define skew-symmetric bilinear operations
\begin{align} \label{eq: Poisson-bracket-Z-A}
    \{z, a\} &= i \, \big(m_1(z, a) - m_1(a, z)\big), \qquad z \in Z(A_0),\ a \in A_0.
\end{align}
and
\begin{equation} \label{eq: second_bracket-Z}
    \{z, w\}_2 = -i \, \big(m_2(z, w) - m_2(w, z) \big), \qquad z, w \in Z(A_0).
\end{equation}

Let us start from auxilary lemmas describing the properties of these operations.

\begin{lemma} \label{lem:C}
For all $z, w, x \in Z(A_0)$ and $a, b, c \in A_0$,
\begin{gather}
    \{z, ab\} = \{z, a\} b + a \{z, b\}, \\
    \{zw, a\} = z \{w, c\} + w \{z, c\}, \\
    \{\{z, w\}, a\} = \{z, \{w, a\}\} - \{w, \{z, a\}\} + i \, [\{z, w\}_2, a], \\
    \{z, \{w, x\}_2\} + \{w, \{x, z\}_2\} + \{x, \{z, w\}_2\} +
        \{z, \{w, x\}\}_2 + \{w, \{x, z\}\}_2 + \{x, \{z, w\}\}_2 = 0.
\end{gather}
\end{lemma}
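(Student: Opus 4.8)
The plan is to derive all four identities in Lemma \ref{lem:C} directly from the associativity of the product in $A_\hbar$, expanded order by order in $\hbar$. The key observation is that associativity $(a*b)*c = a*(b*c)$ translates, using \eqref{eq: star-product-A0}, into a hierarchy of bilinear relations among the $m_k$. At order $\hbar^1$ one gets the Hochschild cocycle condition
\begin{equation}
    a\, m_1(b,c) - m_1(ab, c) + m_1(a, bc) - m_1(a,b)\, c = 0,
\end{equation}
and at order $\hbar^2$ one gets
\begin{equation}
    a\, m_2(b,c) - m_2(ab,c) + m_2(a,bc) - m_2(a,b)\, c = m_1(a, m_1(b,c)) - m_1(m_1(a,b), c).
\end{equation}
These two relations, together with the definitions \eqref{eq: Poisson-bracket-Z-A}, \eqref{eq: second_bracket-Z}, are the only inputs needed.

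First I would prove the Leibniz rule $\{z, ab\} = \{z,a\}b + a\{z,b\}$: writing $\{z,a\} = i(m_1(z,a) - m_1(a,z))$ and applying the order-$\hbar^1$ cocycle identity twice — once with $(z, a, b)$ and once with $(a, b, z)$ — and using that $z$ is central so that $za = az$, the non-$m_1$ terms cancel and one is left exactly with the Leibniz rule. Essentially the same manipulation, now holding the central slot in the middle, gives the second identity $\{zw, a\} = z\{w,a\} + w\{z,a\}$ (I note the excerpt's right-hand side has a typo, writing $c$ for $a$; I would state it with $a$ throughout). The third identity, relating $\{\{z,w\}, a\}$ to the double bracket plus the curvature term $i[\{z,w\}_2, a]$, is where the order-$\hbar^2$ associativity relation enters: expanding $\{\{z,w\},a\}$ means first forming $\{z,w\} = i(m_1(z,w)-m_1(w,z)) \in Z(A_0)$ and then applying $\{-,a\}$; one rewrites everything in terms of $m_1$ of $m_1$'s and $m_2$'s, then uses the order-$\hbar^2$ identity (with central arguments in various slots) to convert the $m_2$-terms into the commutator $[\{z,w\}_2, a]$ and the remaining $m_1\circ m_1$ terms into the iterated brackets $\{z,\{w,a\}\} - \{w,\{z,a\}\}$.

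The fourth identity — the Bianchi-type relation \eqref{eq: probably_cocycle_condition} for $\{\cdot,\cdot\}_2$ — is the main obstacle, since it is cubic in the sense that it mixes $m_2$'s with $m_1$'s and requires the order-$\hbar^3$ associativity relation
\begin{equation}
    a\, m_3(b,c) - m_3(ab,c) + m_3(a,bc) - m_3(a,b)\, c = m_1(a, m_2(b,c)) - m_1(m_2(a,b),c) + m_2(a, m_1(b,c)) - m_2(m_1(a,b), c),
\end{equation}
restricted to all-central arguments $z, w, x$ (so $ab = ba$ etc.). The strategy is: write the six terms of the claimed identity in terms of $m_1$ and $m_2$, antisymmetrize the order-$\hbar^3$ relation over the cyclic permutations of $(z,w,x)$, and observe that the $m_3$-terms on the left cancel completely under cyclic antisymmetrization (because for central arguments $m_3(zw, x) = m_3(z, wx)$ up to relabeling — more precisely the telescoping cancels). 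What survives is precisely the sum of the $m_1\circ m_2$ and $m_2\circ m_1$ terms, which, after using \eqref{eq: Poisson-bracket-Z-A} and \eqref{eq: second_bracket-Z} and the already-proved identities to reorganize, equals the left-hand side of the fourth equation. The bookkeeping is heavy, so I would organize it by first establishing a "restricted associativity" lemma for all-central arguments and then feeding it the antisymmetrization; the main risk is sign errors from the factors of $i$ and $-i$ in the definitions, which I would pin down by checking the flat/commutative special case $m_2 = 0$ first.
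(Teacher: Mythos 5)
Your strategy is sound and would get there, but it is a noticeably harder road than the one the paper takes, even though both rest on the same single input (associativity of the $*$-product). Where you expand raw associativity into the Hochschild-type relations on the $m_k$ and then antisymmetrize by hand, the paper instead expands two identities that hold \emph{exactly} in $A_\hbar$ and come pre-antisymmetrized: the Leibniz rule $[a*b,c]_* = a*[b,c]_* + [a,c]_* * b$ and the Jacobi identity for the $*$-commutator. Because $[z,w]_* = -i\hbar\{z,w\} + i\hbar^2\{z,w\}_2 + O(\hbar^3)$ for central arguments and $[z,a]_* = -i\hbar\{z,a\} + O(\hbar^2)$ otherwise, the four identities fall out as the first non-trivial coefficients: the first two from the order-$\hbar$ term of the Leibniz rule (with the central element placed in the appropriate slot), the third from the order-$\hbar^2$ term of the Jacobi identity with one non-central argument, and the fourth from the order-$\hbar^3$ term with all three arguments central. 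This completely sidesteps the part you correctly identify as the main obstacle — the order-$\hbar^3$ associativity relation and the cyclic antisymmetrization needed to kill the $m_3$-terms — since that cancellation is exactly what the commutator Jacobi identity has already done for you. Two small cautions if you do carry out your version: your order-$\hbar^2$ and order-$\hbar^3$ associativity relations have the right-hand sides with reversed sign relative to what $(a*b)*c = a*(b*c)$ actually gives (you flagged the sign risk yourself), and the first identity needs the order-$\hbar$ cocycle relation applied a third time, with arguments $(a,z,b)$, to dispose of the cross terms $m_1(za,b) - m_1(a,bz)$. Your reading of the $c$ in the second identity as a typo for $a$ is correct.
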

\begin{proof}
The $*$-commutator satisfies the Leibniz rule
\begin{equation}
    [a * b, c]_* = a * [b, c]_* + [a, c]_* * b
\end{equation}
and Jacobi identity
\begin{equation}
    [a, [b, c]_*]_* + [b, [c, a]_*]_* + [c, [a, b]_*]_* = 0
\end{equation}
for any $a, b, c \in A_0$.

We obtain the proof of lemma \ref{lem:C} by expanding the Leibniz rule and the
Jacobi identity to the first non-trivial terms.
The first identity we obtain in the order \(\hbar\) in the
Leibniz rule for \(a, b \in A_0\), and \(c = z \in Z(A_0\),
the second one appears in the order \(\hbar\) in the
Leibniz rule for \(c \in A_0\), and \(a = z, b = w\) in \(Z(A_0)\).
The third and the last properties follows from the Jacobi identity:
in the order $\hbar^2$ for $a \in A_0$, and $b = z, \ c = w$ in $Z(A_0)$,
and in the order $\hbar^3$ for $a = z, \ b = w, \ c = x$, where $z, w, x \in Z(A_0)$.
\end{proof}

\begin{lemma} (Mikhailov--Vanhaecke Poisson algebra) 
\(\mathcal{P}(A_0) = Z(A_0) \oplus A_0/Z(A_0)\) with the Poisson bracket
and the multiplication defined as
\begin{gather}
    \label{eq: P(A)-deformation-bracket}
    \{(z_1, \overline{a_1}), (z_2, \overline{a_2})\} =
        \big(\{z_1, z_2\}, 
            \overline{\{z_1, a_2\} - \{z_2, a_1\} + i \, [a_1, a_2] - \{z_1, z_2\}_2}
        \big), \\
    \label{eq: P(A)-deformation-multiplication}
    (z_1, \overline{a_1}) \cdot (z_2, \overline{a_2}) = 
        \big(z_1 z_2, \overline{z_1 a_2 + z_2 a_1 + m_1(z_1, z_2)} \big),
\end{gather}
where \(z_1, z_2 \in Z(A_0)\), and \(\overline{a_1}, \overline{a_2}\) are classes of
elements \(a_1, a_2 \in A_0\) in \(A_0/Z(A_0)\), is the Poisson algebra.
\end{lemma}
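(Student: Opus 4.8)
The plan is to verify that $\mathcal{P}(A_0) = Z(A_0) \oplus A_0/Z(A_0)$ with the bracket \eqref{eq: P(A)-deformation-bracket} and multiplication \eqref{eq: P(A)-deformation-multiplication} satisfies all the axioms of a Poisson algebra: the bracket is a Lie bracket (skew-symmetry, Jacobi), the multiplication is commutative and associative, and the two are compatible via the Leibniz rule. I would organize the proof into these three blocks and lean heavily on Lemma \ref{lem:C}, which already packages the identities coming from the order-by-order expansion of the $*$-commutator.

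First I would deal with the bracket. Skew-symmetry of \eqref{eq: P(A)-deformation-bracket} is immediate from skew-symmetry of $\{\cdot,\cdot\}$ on $Z(A_0)$, of $\{\cdot,\cdot\}_2$, and of the commutator $[\cdot,\cdot]$. Well-definedness modulo $Z(A_0)$ in the second slot uses that $\{z, c\} = 0$ and $[a,c]=0$ modulo the appropriate ideal when $c \in Z(A_0)$ — more precisely, replacing $a_i$ by $a_i + c_i$ with $c_i$ central changes the second component by $\{z_1, c_2\} - \{z_2, c_1\} + i[c_1, a_2] + i[a_1,c_2]$; since $c_i$ is central the commutator terms vanish and $\{z_j, c_i\} \in Z(A_0)$, so the class in $A_0/Z(A_0)$ is unchanged. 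For Jacobi, I would note — as the excerpt already remarks for \eqref{eq: lie-bracket-P(A)} — that the Jacobi identity of this bracket is a formal consequence of: the Jacobi identity for $\{\cdot,\cdot\}$ on $Z(A_0)$; the derivation property $\{\{z,w\},a\} = \{z,\{w,a\}\} - \{w,\{z,a\}\} + i[\{z,w\}_2, a]$ (third identity of Lemma \ref{lem:C}); the Jacobi-type identity for $[\cdot,\cdot]$ together with the fact that $\{z,\cdot\}$ is a derivation of the associative product (first identity of Lemma \ref{lem:C}); and the "cocycle" identity (fourth identity of Lemma \ref{lem:C}). Concretely, expanding the cyclic sum of triple brackets, the $Z(A_0)$-component is the ordinary Jacobi identity on $Z(A_0)$, and the $A_0/Z(A_0)$-component splits into four groups that vanish separately by exactly these four inputs.

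Next the multiplication. Commutativity of \eqref{eq: P(A)-deformation-multiplication} is clear once we symmetrize: the first component $z_1 z_2$ is symmetric, and the second, $\overline{z_1 a_2 + z_2 a_1 + m_1(z_1,z_2)}$, is symmetric provided $\overline{m_1(z_1,z_2)} = \overline{m_1(z_2,z_1)}$, i.e. $m_1(z_1,z_2) - m_1(z_2,z_1) \in Z(A_0)$ — which holds because that difference is $-i\{z_1,z_2\}$ by the definition of the Poisson bracket, and $\{z_1, z_2\} \in Z(A_0)$. Well-definedness modulo $Z(A_0)$ and the presence of the unit $(1, \overline{0})$ are routine. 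For associativity I would compute $((z_1,\overline{a_1})\cdot(z_2,\overline{a_2}))\cdot(z_3,\overline{a_3})$ and its reassociated partner; the $Z(A_0)$-components agree by associativity in $Z(A_0)$, and the $A_0/Z(A_0)$-components differ by $\overline{z_3 m_1(z_1,z_2) + m_1(z_1 z_2, z_3) - z_1 m_1(z_2,z_3) - m_1(z_1, z_2 z_3)}$, which must vanish. This is exactly the statement that $m_1$ restricted to $Z(A_0) \times Z(A_0)$ is a Hochschild $2$-cocycle on the commutative algebra $Z(A_0)$ modulo $Z(A_0)$-valued terms; it follows from the order-$\hbar$ term of associativity of the $*$-product, $a*(b*c) = (a*b)*c$, specialized to $a=z_1, b=z_2, c=z_3$ in $Z(A_0)$, which reads $z_1 m_1(z_2,z_3) + m_1(z_1, z_2 z_3) = z_3 m_1(z_1,z_2) + m_1(z_1 z_2, z_3)$.

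Finally the Leibniz compatibility. I would take $(z_1,\overline{a_1}),(z_2,\overline{a_2}),(z_3,\overline{a_3})$ and expand both sides of the Leibniz rule. The $Z(A_0)$-component reduces to the Leibniz rule $\{z_1 z_2, z_3\} = z_1\{z_2,z_3\} + z_2\{z_1,z_3\}$, already in Lemma \ref{lem:C}. The $A_0/Z(A_0)$-component, after collecting terms, becomes a combination of: the Leibniz rule $\{z_1 z_2, a_3\} = z_1\{z_2, a_3\} + z_2 \{z_1, a_3\}$ (second identity of Lemma \ref{lem:C}, where I would read "$c$" as "$a$"); the derivation property of $\{z_3,\cdot\}$ with respect to the associative product applied to $z_1 a_2 + z_2 a_1 + m_1(z_1,z_2)$; the fact that $[z_1 a_2 + z_2 a_1 + m_1(z_1,z_2), a_3] = z_1[a_2,a_3] + z_2[a_1,a_3] + [m_1(z_1,z_2),a_3]$ and that the last term is absorbed; and the Leibniz rule $\{z_1 z_2, z_3\}_2 = z_1\{z_2,z_3\}_2 + z_2\{z_1,z_3\}_2$ for the curvature-type bracket, which follows from the definition \eqref{eq: second_bracket-Z} of $\{\cdot,\cdot\}_2$ together with the order-$\hbar^2$ Leibniz rule for the $*$-product restricted to central elements. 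The bookkeeping here is the main obstacle: one has to track the $m_1(z_i,z_j)$ correction terms carefully, use that $\{z, m_1(z_1,z_2)\}$ and $[m_1(z_1,z_2), a]$ contribute precisely the $\{z_1 z_2, \cdot\}$ versus $z_1\{z_2,\cdot\} + z_2\{z_1,\cdot\}$ discrepancies, and show everything cancels modulo $Z(A_0)$. No genuinely new identity beyond what can be extracted from low-order expansion of the $*$-product associativity and its Leibniz/Jacobi structure is needed; the content is that these expansions assemble exactly into the Poisson axioms. I expect the cleanest route is to prove the abstract statement from Definition \ref{def: hybrid-algebra} once and for all and then observe that $A_0$ with the operations \eqref{eq: Poisson-bracket-Z-A}, \eqref{eq: second_bracket-Z} and the multiplication correction $m_1|_{Z(A_0)\times Z(A_0)}$ is an instance of it, so that this lemma becomes a corollary of the already-established Proposition together with Lemma \ref{lem:C} and the associativity cocycle identity for $m_1$.
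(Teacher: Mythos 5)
Your proposal is correct, but it takes a genuinely different route from the paper: the paper does not prove this lemma at all — it defers the proof to the cited work of Mikhailov and Vanhaecke and only records, in the remark that follows, that the two operations are well defined on classes in $A_0/Z(A_0)$ and that the multiplication is commutative (via $m_1(z_1,z_2)-m_1(z_2,z_1)=-i\{z_1,z_2\}\in Z(A_0)$). You instead give a self-contained verification, and the two extra identities you need beyond Lemma \ref{lem:C} do check out: associativity reduces to the order-$\hbar$ Hochschild cocycle relation $z_1 m_1(z_2,z_3)+m_1(z_1,z_2z_3)=z_3 m_1(z_1,z_2)+m_1(z_1z_2,z_3)$, and the Leibniz rule reduces, after the $a$-dependent terms cancel, to the order-$\hbar^2$ expansion of $[z_1*z_2,z_3]_*=z_1*[z_2,z_3]_*+[z_1,z_3]_**z_2$, which gives precisely
\begin{equation}
    \{z_1z_2,z_3\}_2 \equiv z_1\{z_2,z_3\}_2+z_2\{z_1,z_3\}_2-\{z_3,m_1(z_1,z_2)\}-m_1(z_1,\{z_2,z_3\})-m_1(z_2,\{z_1,z_3\}) \pmod{Z(A_0)}.
\end{equation}
One point deserves emphasis: the second identity of Lemma \ref{lem:C} as stated, $\{zw,a\}=z\{w,a\}+w\{z,a\}$, actually acquires a correction $-i\,[m_1(z,w),a]$ when read off from the order-$\hbar$ Leibniz rule, and it is exactly this correction that is absorbed by the $i\,[m_1(z_1,z_2),a_3]$ term coming from the commutator of the product — you note this cancellation, and your bookkeeping is in fact more careful here than a literal reading of the lemma would suggest. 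Your direct computation buys transparency about which low-order consequences of associativity are used where; the alternative you mention at the end — realizing $\mathcal{P}(A_0)$ as the order-$(\hbar^0,\hbar^1)$ shadow of the product and rescaled commutator on the subalgebra $A_\hbar^{\mathrm{SH}}$ of semiclassically hybrid elements, so that all axioms descend from exact identities in $A_\hbar$ once well-definedness is checked — is shorter and is essentially how the paper implicitly uses the lemma in its later sections on multi-time dynamics.
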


The proof of this lemma is given at \cite{MV}.
Note that both operations \eqref{eq: P(A)-deformation-bracket} and 
\eqref{eq: P(A)-deformation-multiplication} are correctly defined
(the results do not depend on the choice of the representatives \(a_1, a_2 \in A_0\)
in classes \(\overline{a_1}, \overline{a_2}\)) 
and that the multiplication \eqref{eq: P(A)-deformation-multiplication}
is commutative (because the element $m_1(z_1, z_2) - m_1(z_2, z_1) = -i \{z_1, z_2\} \in Z(A_0)$
for two central elements \(z_1, z_2\)).

Mikhailov and Vanhaecke also showed that the algebra $A_0$ is a Poisson module over
$\mathcal{P}(A_0)$, with
\begin{equation}
    (z, \overline{a}) \cdot b = z b, \qquad
    \{(z, \overline{a}), a\} = \{z, b\} - [a, b], \qquad
    (z, \overline{a}) \in \mathcal{P}(A_0), \ b \in A_0.
\end{equation}
Note that $A_0/Z(A_0) \subset \mathcal{P}(A_0)$ acts trivially on $A_0$ as a commutative subalgebra.  

Thus, these two lemmas show that a flat deformation family of $A_0$ induces a hybrid
algebra structure on $A_0$.
If we add the assumption that $A_0$ is finite-dimensional and simple over $Z(A_0)$,
we arrive at the definition of a Poisson Azumaya algebra. 
In this context of deformation quantization, 
hybrid Poisson algebras appear in quantum groups at roots of unity \cite{DCK}
and in quantum affine algebras at the critical level \cite{FrR}.

\begin{remark}\label{rem: two quantum strusture}
Note that the Poisson bracket on $Z(A_0)$ does not depend on the changes in the
identification of vector spaces $A_0$ and $A_\hbar$. However, the hybrid algebra
structure on $A_0$ changes. Indeed, let $\psi_\hbar \colon A_\hbar \to A_0$ be
another linear isomorphism such that
\begin{equation}
    \phi_\hbar = \eta_\hbar \circ \psi_\hbar, \qquad
        \eta_\hbar \colon A_0 \to A_0, \quad
            \eta_\hbar(a) = a + \sum_{k \ge 1} \hbar^k \eta^{(k)}(a).
\end{equation}
Then, for $z, w \in Z(A_0)$
\begin{equation}
    \psi_\hbar([\psi_\hbar^{-1}(z), \psi_\hbar^{-1}(w)]) =
        -i \hbar {\{z, w\}}^{(\psi)} +
            i \hbar^2 \{z, w\}_2^{(\psi)} + O(\hbar^3),
\end{equation}
and for $z \in Z(A_0), \ a \in A_0$
\begin{equation}
    \psi_\hbar([\psi_\hbar^{-1}(z), \psi_\hbar^{-1}(a)]) =
        -i \hbar \{z, a\}^{(\psi)} + O(\hbar^2),
\end{equation}
where
\begin{gather}\label{eq: psi-phi brackets relations}
    \{z, w\}^{(\psi)} = \{z, w\}^{(\phi)}, \\ 
    \label{eq: psi-phi brackets relation - 1}
    \{z, a\}^{(\psi)} = \{z, a\}^{(\phi)} + i [\eta^{(1)}(z), a], \\ 
    \label{eq: psi-phi brackets relation - 2}
    \{z, w\}_2^{(\psi)} = \{z, w\}_2^{(\phi)} - \{z, \eta^{(1)}(w)\}^{(\phi)} +
        \{w, \eta^{(1)}(z)\}^{(\phi)} - i \, \big[\eta^{(1)}(z), \eta^{(1)}(w)\big] + 
            \eta^{(1)}\big(\{z, w\}^{(\phi)}\big).
\end{gather}
\end{remark}

Note that the derivations of $A_\hbar$ naturally induce the derivations of $A_0$.

\section{Hybrid states}
\label{sec:hybrid-states}

\subsection{Classical states}
Recall that a \textbf{classical state} on $\mathcal{M}$ is a probability distribution on $\mathcal{M}$.
An example of such a state is a distribution given by a nonnegative normalized density function $\rho_c$\ \footnote{
    Recall that a density function is a function only on open neighborhoods of $\mathcal{M}$. On the
    intersection $U \cap V$ we have
    $\rho_{U, c}(x) = \rho_{V, c}(y) \left| \tfrac{\partial y}{\partial x} \right|$.
    The Euclidean volume $\rho_{U, c}(x) d^{2n} x$ in this case is globally defined. An orientation of
    $\mathcal{M}$ gives an identification of densities with top forms on $\mathcal{M}$.
}
\begin{equation}
    \int_\mathcal{M} \rho_c\, d^{2n} x = 1.
\end{equation}
The value of a classical observable $f$ on the classical state with the density function $\rho_c$ is
\begin{equation}
    \mathbb{E}_{\rho_c}(f) = \int_\mathcal{M} f(x) \rho_c(x) d^{2n} x.
\end{equation}
Because $\mathcal{M}$ is symplectic, we have the symplectic volume form $\omega_x^n$. The density
function $\rho_c$ of the classical state can now be identified with a function
$\rho^c \in C(\mathcal{M})$, such that
\begin{equation}
    \rho^c \omega^n = \rho_c d^{2n} x.
\end{equation}
Then for the expectation value of an observable, we have
\begin{equation}
    \mathbb{E}_{\rho_c}(f) = \int_\mathcal{M} \rho^c(x) f(x) \omega_x^n, \quad
        n = \frac{\dim \mathcal{M}}{2}.
\end{equation}

\subsection{Hybrid states}
\label{sec: hybrid states}
Define the bundle of local quantum states as a fiber bundle
\begin{equation}
\begin{tikzcd}
    S \arrow[d] & \arrow[l] S_x \\
    \mathcal{M} &
\end{tikzcd}
\end{equation}
where $S_x$ is the space of positive linear functions on $A_x$,
$\lambda_x \colon A_x \to \mathbb{C}$ such that
\begin{equation}
    \int_\mathcal{M} \lambda_x(1_x) \omega_x^n = 1.
\end{equation}
We will call such positive linear functionals \textbf{normalized}.

In the case
$A_x \cong \mathrm{End}(\mathbb{C}^N)$, the space $S_x$ can be identified with the space of Hermitian matrices with nonnegative eigenvalues, i.e. with the space of density matrices.
For a density matrix $\rho = \{\rho_x\}_{x \in \mathcal{M}}$ its trace $\mathrm{Tr}(\rho_x) = \rho^c_x$ is a
positive-valued function on $\mathcal{M}$.

A hybrid state with the density matrix $\rho \in \Gamma(\mathcal{M}, S)$ is \textbf{normalized} if
\begin{equation}
    \int_\mathcal{M} \mathrm{Tr}(\rho_x) \omega_x^n = 1,
\end{equation}
i.e. if $\rho_x^c = \mathrm{Tr}(\rho_x)$ is a classical state.

For a given $\rho^c_x$ the space of density matrices with $\mathrm{Tr}(\rho_x) = \rho^c_x$ is a compact
convex subset $S_x(\rho^c_x) \subset \mathrm{End}(\mathbb{C}^N)$. A \textbf{pure hybrid state}
with given $\rho^c_x$ is an extremal point of $S_x(\rho^c_x)$,
normalized as above. Density matrices for such states are
one-dimensional orthogonal projectors. They can be written as
\begin{equation}
    \rho_x = v_x \otimes v_x^*,
\end{equation}
where $v_x \in V_x$.

The value of an observable $s \in A$ on the hybrid state with the density matrix $\rho$ is
\begin{equation}
    \mathbb{E}_\rho(s) =
        \int_\mathcal{M} \mathrm{Tr}_{V_x}(s_x \rho_x) \omega_x^n.
\end{equation}

Note that pure hybrid states with \( v_x \) and \( e^{i \alpha_{x}} v_x \) are equivalent,
i.e. the space of pure states can be identified with \( \Gamma(V) / G_\mathcal{M} \),
where \( G_\mathcal{M} = \mathrm{Maps}(\mathcal{M}, U(1)) \) is the ``gauge group'',
and \( V \) is a hybrid module over \( A = \Gamma(\mathcal{M}, E) \).
Thus, pure hybrid states are parametrized by gauge classes of vectors in a hybrid module over the hybrid algebra of observables.

\subsection{Lagrangian states}
\subsubsection{Classical Lagrangian states}
Fix a Lagrangian fibration on $\mathcal{M}$, i.e. fix a projection $\pi \colon \mathcal{M} \to B$,
where generic fiber $\pi^{-1}(b)$ is a Lagrangian submanifold.

Let $\rho_B$ be a density function on $B$, and the measure $\mu(U) = \int_U \rho_B(b) d^n b$ is globally
defined. Assume that $\rho_B$ is normalized, i.e. $\int_B \rho_B(b) d^n b = 1$.

The classical state with the density function
\begin{equation}
    \label{classical-density}
    \rho_x^{\mathrm{cl}} = \int_B \rho_B(b) \delta(x, x_{L, b}) d^n b.
\end{equation}
is called \textbf{classical Lagrangian state}.
Here $\delta(x, y)$ is a distribution supported on the diagonal of $\mathcal{M} \times \mathcal{M}$, i.e.
\begin{equation}
    \int\limits_{\mathcal{M} \times \mathcal{M}}
        \delta(x, y) g(x, y) \omega_x^n \omega_y^n =
    \int\limits_\mathcal{M} g(x, x) \omega_x^n
\end{equation}
for every test function $g(x, y)$. In other words
\begin{equation}
    \int_\mathcal{M} f(x) \delta(x, x') \omega_x^n = f(x').
\end{equation}

The expectation value of a classical observable on a classical Lagrangian state is
\begin{equation}
    \mathbb{E}_{\rho^{\mathrm{cl}}}(f) =
        \int_\mathcal{M} \rho_x^{\mathrm{cl}} f(x) \omega_x^n =
        \int_B f(x_{L, b}) \rho_B(b) d^n b.
\end{equation}

\subsubsection{Hybrid Lagrangian states}
Let $x_{L, b} = L \cap \pi^{-1}(b)$ be as above and
\begin{equation}
    \rho^L_B(b) \colon V_{x_{L, b}} \to V_{x_{L, b}}
\end{equation}
be a Hermitian nonnegative operator. Assume that $\mathrm{Tr}(\rho^L_B(b))$ is a density on $B$. Define
a \textbf{hybrid Lagrangian state} as the following linear functional on $A$
\begin{equation}
    \mathbb{E}_{\rho_B^L}(s) =
        \int_B \mathrm{Tr}_{V_{x_{L, b}}}
            (\rho_B^L(b) s_{x_{L, b}}) d^n b,
\end{equation}
assuming that $\rho_B^L$ is normalized, i.e. $\mathbb{E}_{\rho^L_B}(1) = 1$.

\subsubsection{Pure hybrid Lagrangian states}
For a Lagrangian fibration $\pi \colon \mathcal{M} \to B$ define the ''space of wavefunctions''
$\mathcal{H}_B^L$ as the space of $1/2$-density sections of the vector bundle $V^{B, L} \to B$
with the fibers $V_{x_{L, b}}$.
For generic $b \in B$,
$V_{x_{L, b}} \simeq \mathbb{C}^N$ with the natural Hermitian structure inherited from $V$.

For $\varphi \in \mathcal{H}_B^L$ define the density matrix of the corresponding hybrid Lagrangian pure
state as the one-dimensional orthogonal projector
\begin{equation}
    \rho_B^L(b) = \varphi(b) \otimes \varphi^*(b)
        \colon V_{x_{L, b}} \to V_{x_{L, b}}
\end{equation}
normalized as
\begin{equation}
    \int_B \mathrm{Tr}_{V_{x_{L, b}}}(\rho_B^L(b)) d^n b =
        \int_B \| \varphi(b) \|^2_{x_{L, b}} d^n b = 1,
\end{equation}
here $\varphi(b) \in V_{x_{L, b}}$ and $\| \varphi(b) \|^2_{x_{L, b}}$ is the norm in $V_{x_{L, b}}$.

The expectation value of an observable $s$ on this state is
\begin{equation}
    \mathbb{E}_{\rho_B^L} (s) =
        \int_B (\varphi(b), s_{x_{L, b}} \varphi(b))_{x_{L, b}} d^n b.
\end{equation}

Note that Lagrangian hybrid states and pure Lagrangian hybrid states are hybrid states
supported (in the sense of distributions) on the Lagrangian subspace \( L \subset M \)
(the Lagrangian section of \( \pi \colon M \to B \)) which is projectable to \( B \). 
So we can write it as an integral over \( L \).

\section{The hybrid evolution} \label{evolution}

\subsection{The time evolution of observables} \label{sec: observables-evolution}

A derivation $D$ of a hybrid algebra $A$  defines a $1$-parametric family of
automorphisms of $A$, $\varphi_t \colon A \to A$ such that $a(t) = \varphi_t(a)$ is
a solution of the differential equation
\begin{equation}\label{eq: a evo}
    \frac{\partial a(t)}{\partial t} = D(a(t)), \quad \text{with} \quad a(0)= a.
\end{equation}

If \( D \) is a \( * \)-derivation, i.e. \( D(a)^* = D(a^*), \) the family 
\( \varphi_t \colon A \to A \) is a family of unitary \( * \)-automorphisms 
\( \varphi_t(a)^* = \varphi_t^{-1}(a^*) = \varphi_{-t}(a) \).

\begin{definition}
A \textbf{hybrid system} is a hybrid \( * \)-algebra with a classical Hamiltonian
\( H^{(0)} \in Z(A) \) and a quantum Hamiltonian \( H^{(1)} \in A \),
such that \( H^{(0)} = (H^{(0)})^*, H^{(1)} = (H^{(1)})^* \).
In this case, the algebra \( A \) is the algebra of hybrid observables.
The space of observables is a real subspace of \( * \)-invariant elements in \( A \).
\end{definition}

The \textbf{time evolution} in a hybrid system is a 1-parametric family of automorphisms
\( \varphi_t(a) = a(t) \) \eqref{eq: a evo} given by the derivation
\( D(a) = \{ H^{(0)}, a \} + i \, [H^{(1)}, a] \). This is the hybrid analog of the Heisenberg
evolution in quantum mechanics and of the Hamiltonian evolution in classical mechanics.

Note that hybrid evolutions with \( H^{(1)} \) and with \( H^{(1)} + z \), where
\( z \in Z(A) \), are identical.
Thus, the subspace of fixed points of the \( * \)-involution in Lie algebra 
\( Z(A)^{\vee} = Z(A) \oplus A / Z(A) \) (see the section \ref{sec: hybrid algebra}) 
is the space of possible hybrid evolutions.

In the example of a bundle of algebras over a symplectic manifold 
the \textbf{hybrid evolution} of $s \in \Gamma(\mathcal{M}, E)$ is
\begin{equation}
    \frac{\partial s(t)}{\partial t} =
        \{H^{(0)}, s(t)\} + i \, [H^{(1)}, s(t)], \quad
    s(0) = s.
\end{equation}
Fiberwise on $A_x$ we have
\begin{equation}
    \frac{\partial s_x(t)}{\partial t} =
        \{H^{(0)}, s(t)\}_x + i \, [H^{(1)}_x, s_x(t)].
\end{equation}
Note that $H^{(0)}$ is an integral of motion for this evolution, but $H^{(1)}$ is not.

\subsection{The classical case}

Assume $H^{(1)} = 0$; in this case, the classical dynamics is lifted to quantum fibers
using the connection $\alpha$
\begin{equation}
    \frac{\partial s(t)}{\partial t} = \{H^{(0)}, s(t)\}.
\end{equation}

\begin{proposition} \label{classical-evolution-connection}
The formula
\begin{equation}
    s_x(t) = h_{x, x(t)} s_{x(t)} h_{x(t), x},
\end{equation}
where $h_{x, x(t)}$ and
$h_{x(t), x}$ are parallel transport operators along a classical trajectory defined by the connection $\alpha$:
\begin{equation}
    \frac{\partial h_{x, x(t)}}{\partial t} =
        h_{x, x(t)} \alpha_j(x(t)) \dot{x}^j(t), \quad
    \frac{\partial h_{x(t), x}}{\partial t} =
        -\dot{x}^j(t) \alpha_j(x(t)) h_{x(t), x}.
\end{equation}
gives the solution to the Cauchy problem
\begin{equation}
    \frac{\partial s(t)}{\partial t} = \{H^{(0)}, s(t)\}, \quad
    s(0) = s.
\end{equation}
\end{proposition}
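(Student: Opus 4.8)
The plan is to verify the claimed formula directly by differentiating and checking the initial condition. First I would recall that $x(t)$ denotes the classical trajectory through $x$ generated by the Hamiltonian vector field $v(H^{(0)})$, so $\dot x^j(t) = (\omega^{-1})^{jk}\partial_k H^{(0)}(x(t))$, and that $h_{x,x(t)}$, $h_{x(t),x}$ are the parallel transport operators for $\alpha$ along this trajectory, mutually inverse: $h_{x,x(t)} h_{x(t),x} = 1_x$. At $t=0$ the trajectory is stationary, $x(0)=x$, and both transport operators are the identity, so $s_x(0) = s_x$, which settles the initial condition.

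Next I would differentiate $s_x(t) = h_{x,x(t)}\, s_{x(t)}\, h_{x(t),x}$ in $t$ using the Leibniz rule. Three terms appear: the derivative hitting $h_{x,x(t)}$, which by the stated transport equation gives $h_{x,x(t)}\,\alpha_j(x(t))\dot x^j(t)\, s_{x(t)}\, h_{x(t),x}$; the derivative hitting $h_{x(t),x}$, which gives $-h_{x,x(t)}\, s_{x(t)}\,\dot x^j(t)\alpha_j(x(t))\, h_{x(t),x}$; and the derivative hitting the middle factor $s_{x(t)}$, which gives $h_{x,x(t)}\,\bigl(\dot x^j(t)\,\partial_j s\bigr)(x(t))\, h_{x(t),x}$. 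Combining the first and third terms, the object sandwiched between the transport operators is $\dot x^j(t)\bigl(\partial_j s + [\alpha_j, s]\bigr)(x(t))$ — up to the factor of $i$ this is exactly the twisted derivative $d_\alpha s$ contracted with the velocity vector. So the bracket should be reorganized to read $h_{x,x(t)}\,\bigl(\iota_{\dot x(t)} d_\alpha s\bigr)(x(t))\, h_{x(t),x}$, where I am using the local expression $d_\alpha s = ds + i[a^{(\alpha)},s]$ from the excerpt and absorbing constants appropriately (one should track the $i$ in $d_\alpha s$ against the convention $\alpha_j = i a^{(\alpha)}_j$ implicit in the transport equations as written — this is a bookkeeping point, not a real difficulty).

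Finally I would identify $\iota_{\dot x(t)} d_\alpha s$ with $\{H^{(0)}, s\}$ evaluated at $x(t)$: since $\dot x(t) = v(H^{(0)})(x(t))$ is the Hamiltonian vector field of $H^{(0)}$, formula \eqref{eq: derivation-Z-A}, $\{z,s\} = \iota_{v(z)} d_\alpha s$, gives precisely $\iota_{\dot x(t)} d_\alpha s = \{H^{(0)}, s\}(x(t))$. Therefore
\begin{equation}
    \frac{\partial s_x(t)}{\partial t} = h_{x,x(t)}\,\{H^{(0)},s\}_{x(t)}\,h_{x(t),x}.
\end{equation}
On the other hand, the same conjugation formula applied to the section $\{H^{(0)},s\}$ instead of $s$ reads $\{H^{(0)},s(t)\}_x = h_{x,x(t)}\,\{H^{(0)},s\}_{x(t)}\,h_{x(t),x}$ — using that conjugation by parallel transport intertwines the twisted derivative with the ordinary one along the flow, equivalently that $\{H^{(0)},\cdot\}$ commutes with $\varphi_t$ because $H^{(0)}$ is an integral of motion (noted just before the proposition). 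Matching the two right-hand sides yields $\partial_t s(t) = \{H^{(0)}, s(t)\}$, as desired. The main obstacle is the last identification: making precise that the $\alpha$-conjugation of $\{H^{(0)},s\}$ along the trajectory agrees with $\{H^{(0)}, s(t)\}$, i.e. that the Poisson-type derivation $\{H^{(0)},\cdot\}$ is compatible with the parallel transport it itself generates; this is essentially the statement that the covariant derivative along $v(H^{(0)})$ annihilates the "dynamical" part, and I expect it to follow either from a short computation with $d_\alpha^2 s = i[F_\alpha,s]$ or from uniqueness of solutions to the Cauchy problem once both sides are shown to solve it.
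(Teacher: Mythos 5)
Your computation of $\partial_t s_x(t)$ up to
\begin{equation}
    \frac{\partial s_x(t)}{\partial t} =
        h_{x,x(t)}\,\bigl(\iota_{v(H^{(0)})} d_\alpha s\bigr)_{x(t)}\, h_{x(t),x} =
        h_{x,x(t)}\,\{H^{(0)},s\}_{x(t)}\,h_{x(t),x}
\end{equation}
is correct and coincides with the first half of the paper's argument. But the gap you flag at the end is a genuine one: the identity
\begin{equation}
    h_{x,x(t)}\,\{H^{(0)},s\}_{x(t)}\,h_{x(t),x} = \{H^{(0)},\,s(t)\}_x
\end{equation}
is the entire content of the proposition beyond Leibniz-rule bookkeeping, and neither of your two suggested routes closes it as stated. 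The curvature identity $d_\alpha^2 s = i[F_\alpha,s]$ is a red herring: the needed statement concerns transport along a single trajectory and holds for an arbitrary connection, flat or not. The uniqueness argument is circular, since showing that $\Phi_t(s) := h_{x,x(t)}\, s_{x(t)}\, h_{x(t),x}$ solves the Cauchy problem is precisely what is being proved. The paper closes the gap by explicit computation: a lemma showing
$\nabla_j\bigl(h_{x,x(t)}\, s_{x(t)}\, h_{x(t),x}\bigr) = \tfrac{\partial x^k(t)}{\partial x^j}\, h_{x,x(t)}\,\nabla_k^{x(t)} s_{x(t)}\, h_{x(t),x}$
(covariant differentiation at $x$ of the conjugated section picks up the Jacobian of the flow), combined with conservation of $H^{(0)}$ along the flow and invariance of $\omega^{-1}$ under the symplectomorphism $x\mapsto x(t)$, so that the two Jacobian factors cancel against $(\omega^{-1})^{ij}(x(t))$ and leave $(\omega^{-1})^{kl}(x)\,\partial_l H^{(0)}\,\nabla_k s_x(t) = \{H^{(0)}, s(t)\}_x$.

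There is also a shorter route that would rescue your version without the Jacobian computation: the maps $\Phi_t$ form a one-parameter group, $\Phi_{t+\epsilon} = \Phi_\epsilon\circ\Phi_t$, by composition of parallel transports along concatenated arcs of the same trajectory together with the flow property $x(\epsilon)(t)=x(t+\epsilon)$. Your own computation, evaluated at $t=0$, says $\frac{d}{d\epsilon}\big|_{\epsilon=0}\Phi_\epsilon(u) = \{H^{(0)},u\}$ for any section $u$; applying this with $u=\Phi_t(s)$ gives $\frac{d}{dt}\Phi_t(s) = \{H^{(0)},\Phi_t(s)\}$ directly. Either way, the intertwining of $\{H^{(0)},\cdot\}$ with the transport it generates has to be argued, not inferred from ``$H^{(0)}$ is an integral of motion,'' which is a statement about the evolution of $H^{(0)}$ itself and not about the derivation commuting with the flow.
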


\begin{proof}
Consider $s_x(t) = h_{x, x(t)} s_{x(t)} h_{x(t), x}$ and derive the equation for $s_x(t)$
\begin{multline} \notag
    \frac{\partial s_x(t)}{\partial t} =
        h_{x, x(t)} \alpha_j(x(t)) \dot{x}^j(t) s_{x(t)} h_{x(t), x} +
        h_{x, x(t)} \dot{x}^j(t) \frac{\partial s}{\partial x^j} \Big|_{x(t)} h_{x(t), x} -
        h_{x, x(t)} s_{x(t)} \dot{x}^j(t) \alpha_j(x(t)) h_{x(t), x} = \\ =
        (\omega^{-1})^{ij}(x(t)) \frac{\partial H^{(0)}}{\partial x^i} \Big|_{x(t)}
            h_{x, x(t)} [\alpha_j(x(t)), s_{x(t)}] h_{x(t), x} +
        (\omega^{-1})^{ij}(x(t)) \frac{\partial H^{(0)}}{\partial x^i} \Big|_{x(t)}
            h_{x, x(t)} \frac{\partial s}{\partial x^j} \big|_{x(t)} h_{x(t), x} = \\ =
        (\omega^{-1})^{ij}(x(t)) \frac{\partial H^{(0)}}{\partial x^i} \Big|_{x(t)}
            h_{x, x(t)} \nabla_j^{x(t)} s_{x(t)} h_{x(t), x} =
              (\omega^{-1})^{ij}(x(t))
            \left( J^{-1}\right)^l_i
            \frac{\partial H^{(0)}}{\partial x^l} \,
            h_{x, x(t)} \nabla_j^{x(t)} s_{x(t)} h_{x(t), x}.
\end{multline}
where $J = \tfrac{\partial x(t)}{\partial x} $ is Jacobian.

\begin{lemma} The following holds:
    \begin{equation}
        \nabla_j (h_{x, x(t)}  s_{x(t)} h_{x(t), x}) =
            \frac{\partial x^k(t)}{\partial x^j}
                h_{x, x(t)} \nabla_k^{x(t)} s_{x(t)} h_{x(t), x}.
    \end{equation}
\end{lemma}

\begin{proof}
By definition
\begin{equation}
    \nabla_j (h_{x, x(t)}  s_{x(t)} h_{x(t), x}) =
        \frac{\partial}{\partial x^j} (h_{x, x(t)}  s_{x(t)} h_{x(t), x}) +
        [\alpha_j(x), h_{x, x(t)}  s_{x(t)} h_{x(t), x}].
\end{equation}
By definition of the holonomy $h_{x, y}$
\begin{equation}
    \frac{\partial h_{x, y}}{\partial x^j} = -\alpha_j(x) h_{x, y}, \quad
    \frac{\partial h_{x, y}}{\partial y^j} = h_{x, y} \alpha_j(y).
\end{equation}
Thus,
\begin{multline} \notag
    \frac{\partial}{\partial x^j} (h_{x, x(t)}  s_{x(t)} h_{x(t), x}) =
        \frac{\partial h_{x, x(t)}}{\partial x^j} s_{x(t)} h_{x(t), x} +
        h_{x, x(t)}  s_{x(t)} \frac{\partial h_{x(t), x}}{\partial x^j} + \\ +
        \frac{\partial x^k(t)}{\partial x^j} \left(
            \frac{\partial h_{x, x(t)}}{\partial x^k(t)} s_{x(t)} h_{x(t), x} +
            h_{x, x(t)}  \frac{\partial s_{x(t)}}{\partial x^k(t)} h_{x(t), x} +
            h_{x, x(t)}  s_{x(t)} \frac{\partial h_{x(t), x}}{\partial x^k(t)}
        \right) = \\ =
        -\alpha_j(x) h_{x, x(t)}  s_{x(t)} h_{x(t), x} +
            h_{x, x(t)}  s_{x(t)} h_{x(t), x} \alpha_j(x) + \\ +
            \frac{\partial x^k(t)}{\partial x^j}
            h_{x, x(t)} \left(
                \alpha_k(x(t)) s_{x(t)} +
                \frac{\partial s_{x(t)}}{\partial x^k(t)} -
                s_{x(t)} \alpha_k(x(t))
            \right) h_{x(t), x} = \\ =
        -[\alpha_k(x), h_{x, x(t)} s_{x(t)} h_{x(t), x}] +
        \frac{\partial x^k(t)}{\partial x^j} h_{x, x(t)}
            \nabla_k^{x(t)} s_{x(t)} h_{x(t), x}.
\end{multline}
Then,
\begin{equation}
    \nabla_j(h_{x, x(t)}  s_{x(t)} h_{x(t), x}) =
        J^k_j h_{x, x(t)} \nabla_k^{x(t)} s_{x(t)} h_{x(t), x}.
\end{equation}
\end{proof}

Applying this lemma, we have
\begin{multline}\nonumber
    \frac{\partial s_x(t)}{\partial t} =
        (\omega^{-1})^{ij}(x(t))  \left( J^{-1}\right)^l_i
        \frac{\partial H^{(0)}}{\partial x^l}
        \left( J^{-1}\right)^k_j
        \nabla_k (h_{x, x(t)} s_{x(t)} h_{x(t), x}) = \\ =
          (\omega^{-1})^{kl}(x)  \frac{\partial H^{(0)}}{\partial x^l}
        \nabla_k (h_{x, x(t)} s_{x(t)} h_{x(t), x})  =
        (\omega^{-1})^{kl}(x) \frac{\partial H^{(0)}}{\partial x^l}
        \nabla_k s_x(t) =
        \{H^{(0)}, s(t)\}_x.
\end{multline}

Thus, we proved the proposition.
\end{proof}

\subsection{The hybrid case}
Now assume that $H^{(1)} \ne 0$.

Let $\mathcal{U}(t)$ be a solution to the Cauchy problem
\begin{equation}
    \frac{\partial \mathcal{U}(t)}{\partial t} =
        \{H^{(0)}, \mathcal{U}(t)\} + i H^{(1)} \mathcal{U}(t), \quad
    \mathcal{U}(0) = 1.
\end{equation}
\textbf{Remark:}
Note that if $\{H^{(0)}, H^{(1)}\} = 0$, we have $\mathcal{U}_x(t) = e^{i H^{(1)}_x t}$.
But since this is generally not the case, $\mathcal{U}(t)$ has a more complicated form.

\begin{theorem}\label{thr:s_x(t)}
The solution to
\begin{equation}
    \frac{\partial s_x(t)}{\partial t} = \{H^{(0)}, s(t)\}_x + i [H^{(1)}_x, s_x(t)], \quad
    s(0) = s
\end{equation}
in the $H_1 \ne 0$ case is given by
\begin{equation}
    \label{s(t)-motion}
    s_x(t) = \mathcal{U}_x(t) h_{x, x(t)} s_{x(t)}  h_{x(t), x}\, \mathcal{U}_x(t)^{-1}.
\end{equation}
\end{theorem}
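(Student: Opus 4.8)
The plan is to verify directly that the proposed formula \eqref{s(t)-motion} solves the Cauchy problem by differentiating in $t$ and checking that the resulting equation matches. The key structural observation is that the combined evolution naturally factorizes: the classical background flow $x \mapsto x(t)$ together with the parallel transport $h_{x,x(t)}$ already produces the Hamiltonian part $\{H^{(0)}, s(t)\}_x$ (this is exactly the content of Proposition \ref{classical-evolution-connection}), while the conjugation by $\mathcal{U}_x(t)$ should contribute the quantum commutator part $i[H^{(1)}_x, s_x(t)]$. So I would write $s_x(t) = \mathcal{U}_x(t)\, \sigma_x(t)\, \mathcal{U}_x(t)^{-1}$, where $\sigma_x(t) = h_{x,x(t)} s_{x(t)} h_{x(t),x}$ is the purely classical (Hamiltonian) evolution of $s$, which by Proposition \ref{classical-evolution-connection} satisfies $\partial_t \sigma_x(t) = \{H^{(0)}, \sigma(t)\}_x$ with $\sigma_x(0) = s_x$.

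First I would record the derivative of $\mathcal{U}_x(t)^{-1}$: differentiating $\mathcal{U}_x(t)\,\mathcal{U}_x(t)^{-1} = 1$ and using the defining equation $\partial_t \mathcal{U}(t) = \{H^{(0)}, \mathcal{U}(t)\} + iH^{(1)}\mathcal{U}(t)$, one gets $\partial_t(\mathcal{U}_x(t)^{-1}) = -\mathcal{U}_x(t)^{-1}\{H^{(0)}, \mathcal{U}(t)\}_x \mathcal{U}_x(t)^{-1} - i\mathcal{U}_x(t)^{-1} H^{(1)}_x$. Then I would apply the Leibniz rule to $\partial_t\big(\mathcal{U}_x(t)\sigma_x(t)\mathcal{U}_x(t)^{-1}\big)$, producing three terms. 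The middle term, involving $\partial_t \sigma_x(t) = \{H^{(0)},\sigma(t)\}_x$, must be handled carefully: since $\{H^{(0)},\cdot\}_x$ acts by the connection-twisted derivation, I need the Leibniz-type identity $\{H^{(0)}, \mathcal{U}\sigma\mathcal{U}^{-1}\}_x = \{H^{(0)},\mathcal{U}\}_x \sigma\mathcal{U}^{-1} + \mathcal{U}\{H^{(0)},\sigma\}_x\mathcal{U}^{-1} + \mathcal{U}\sigma\{H^{(0)},\mathcal{U}^{-1}\}_x$, which holds because $\{H^{(0)},\cdot\}$ is a derivation of the associative algebra $A$ (equation \eqref{eq: Z(A)-A-Leibniz-second}). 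Substituting this and collecting, the terms involving $\{H^{(0)},\mathcal{U}\}_x$ and $\{H^{(0)},\mathcal{U}^{-1}\}_x$ cancel against the contributions from $\partial_t\mathcal{U}$ and $\partial_t(\mathcal{U}^{-1})$, leaving exactly $\{H^{(0)}, s_x(t)\}_x + iH^{(1)}_x s_x(t) - i s_x(t) H^{(1)}_x = \{H^{(0)}, s(t)\}_x + i[H^{(1)}_x, s_x(t)]$, as required. The initial condition is immediate: $\mathcal{U}_x(0) = 1$ and $h_{x,x} = 1$ give $s_x(0) = s_x$.

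The main obstacle I anticipate is bookkeeping the interaction between the derivation $\{H^{(0)},\cdot\}$ and the time-dependent conjugation, since $\{H^{(0)},\cdot\}$ is not an inner derivation and involves the connection; one has to be disciplined about treating $\mathcal{U}_x(t)$, $\sigma_x(t)$, and $\mathcal{U}_x(t)^{-1}$ all as sections of $E$ and applying the derivation property \eqref{eq: Z(A)-A-Leibniz-second} to each product. A secondary subtlety is justifying that $\partial_t(\mathcal{U}_x^{-1})$ has the stated form even though $\mathcal{U}_x(t)$ solves an equation that itself mixes the Poisson action with left multiplication; but this is the standard variation-of-constants manipulation and goes through since $\{H^{(0)},\cdot\}$ is a derivation. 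Once the cancellations are organized cleanly, the verification is a routine computation, so the heart of the argument is really the factorization ansatz $s_x(t) = \mathcal{U}_x(t)\,\sigma_x(t)\,\mathcal{U}_x(t)^{-1}$ together with the prior Proposition \ref{classical-evolution-connection}.
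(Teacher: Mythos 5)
Your proposal is correct and follows essentially the same route as the paper: differentiate the ansatz $s_x(t)=\mathcal{U}_x(t)\,\sigma_x(t)\,\mathcal{U}_x(t)^{-1}$, invoke Proposition \ref{classical-evolution-connection} for $\partial_t\sigma_x(t)$, expand $\{H^{(0)},\mathcal{U}\sigma\mathcal{U}^{-1}\}$ via the derivation property, and cancel against the terms coming from the defining ODE for $\mathcal{U}(t)$ and $\mathcal{U}(t)^{-1}$. The only step you leave implicit is the identity $\{H^{(0)},\mathcal{U}^{-1}\}=-\mathcal{U}^{-1}\{H^{(0)},\mathcal{U}\}\mathcal{U}^{-1}$ (from applying the derivation to $\mathcal{U}\mathcal{U}^{-1}=\mathbf{1}$), which is routine.
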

\begin{proof}
Differentiating (\ref{s(t)-motion}) in time, we get
\begin{multline}\nonumber
    \frac{\partial s_x(t)}{\partial t} =
        \frac{\partial \mathcal{U}_x(t)}{\partial t}
            h_{x, x(t)} s_{x(t)} h_{x(t), x}\, \mathcal{U}_x(t)^{-1} + \\ +
        \mathcal{U}_x(t) \frac{\partial}{\partial t}
            \big(h_{x, x(t)} s_{x(t)} h_{x(t), x}\big) \mathcal{U}_x(t)^{-1} +
        \mathcal{U}_x(t) h_{x, x(t)} s_{x(t)} h_{x(t), x} \frac{\partial \mathcal{U}_x(t)^{-1}}{\partial t}.
\end{multline}
We have already proven in proposition \ref{classical-evolution-connection} that
\begin{equation}
    \frac{\partial}{\partial t} (h_{x, x(t)} s_{x(t)} h_{x(t), x}) =
        \{H^{(0)}, h_{x, x(t)} s_{x(t)} h_{x(t), x}\}.
\end{equation}
This implies
\begin{multline}\nonumber
    \frac{\partial}{\partial t} (h_{x, x(t)} s_{x(t)} h_{x(t), x}) =
        \{H^{(0)}, h_{x, x(t)} s_{x(t)} h_{x(t), x}\} =
            \{H^{(0)}, \mathcal{U}_x(t)^{-1} s_x(t) \mathcal{U}_x(t)\} = \\ =
    \mathcal{U}_x(t)^{-1} \{H^{(0)}, s(t)\}_x \,\mathcal{U}_x(t) -
        \mathcal{U}_x(t)^{-1} \{H^{(0)}, \mathcal{U}(t)\}_x \,
            \mathcal{U}_x(t)^{-1} s_x(t) \mathcal{U}_x(t) +
            \mathcal{U}_x(t)^{-1} s_x(t) \{H^{(0)}, \mathcal{U}(t)\}_x,
\end{multline}
and therefore,
\begin{equation}
    \frac{\partial s_x(t)}{\partial t} =
        \{H^{(0)}, s(t)\}_x +
        \left[\frac{\partial \mathcal{U}_x(t)}{\partial t} \mathcal{U}_x(t)^{-1} -
            \{H^{(0)}, \mathcal{U}(t)\}_x \, \mathcal{U}_x(t)^{-1}, s_x(t) \right].
\end{equation}
Using the equation on $\mathcal{U}(t)$, we obtain
\begin{equation}
    \frac{\partial s_x(t)}{\partial t} =
        \{H^{(0)}, s(t)\}_x + i[H^{(1)}_x, s_x(t)],
\end{equation}
which proves the theorem.
\end{proof}

\subsection{The evolution of hybrid states}
\label{sec:The evolution of hybrid states}
By definition, density matrices evolve as
\begin{equation}
    \label{expectation-value-time}
    \mathbb{E}_{\rho(t)}(s) = \mathbb{E}_\rho(s(t)),
\end{equation}
where $s$ is any observable and $\mathbb{E}_\rho(s)$ is the expectation value
of $s$ with the density matrix $\rho$.

\begin{proposition} States evolve according to solutions to the differential equation
\begin{equation}
    \frac{\partial \rho_x(t)}{\partial t} =
        -\{H^{(0)}, \rho(t)\}_x - i [H^{(1)}_x, \rho_x(t)].
\end{equation}
\end{proposition}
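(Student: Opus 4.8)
The plan is to start from the defining relation of state evolution, $\mathbb{E}_{\rho(t)}(s) = \mathbb{E}_{\rho}(s(t))$, write both sides as integrals over $\mathcal{M}$ against the symplectic volume, differentiate in $t$, and then move the time derivative off of $s(t)$ and onto $\rho(t)$ by a duality/integration-by-parts argument. Concretely, the right-hand side is $\int_\mathcal{M}\mathrm{Tr}_{V_x}(s_x(t)\rho_x)\,\omega_x^n$; using Theorem \ref{thr:s_x(t)} (or simply the evolution equation $\partial_t s_x(t) = \{H^{(0)},s(t)\}_x + i[H^{(1)}_x,s_x(t)]$), we get
\begin{equation}
    \frac{d}{dt}\mathbb{E}_{\rho(t)}(s) = \int_\mathcal{M}\mathrm{Tr}_{V_x}\big(\{H^{(0)},s(t)\}_x\,\rho_x\big)\,\omega_x^n + \int_\mathcal{M}\mathrm{Tr}_{V_x}\big(i[H^{(1)}_x,s_x(t)]\,\rho_x\big)\,\omega_x^n.
\end{equation}
The left-hand side equals $\int_\mathcal{M}\mathrm{Tr}_{V_x}(s_x\,\dot\rho_x(t))\,\omega_x^n$ after the substitution $s(t)\mapsto s$, $\rho\mapsto\rho(t)$ permitted by \eqref{expectation-value-time}, so it suffices to show the two integrals on the right can be rewritten with the derivations acting on $\rho$ instead, picking up the stated minus signs.

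First I would handle the quantum term: by cyclicity of the fiberwise trace, $\mathrm{Tr}_{V_x}(i[H^{(1)}_x,s_x]\rho_x) = \mathrm{Tr}_{V_x}(s_x\cdot(-i[H^{(1)}_x,\rho_x]))$, which immediately produces the $-i[H^{(1)}_x,\rho_x]$ term. For the classical (Hamiltonian) term, the key identity is an integration-by-parts statement for the bracket $\{z,\cdot\}$ against the symplectic volume: for sections $s_1,s_2$ one has
\begin{equation}
    \int_\mathcal{M}\mathrm{Tr}_{V_x}\big(\{H^{(0)},s_1\}_x\, s_2{}_x\big)\,\omega_x^n = -\int_\mathcal{M}\mathrm{Tr}_{V_x}\big(s_1{}_x\,\{H^{(0)},s_2\}_x\big)\,\omega_x^n,
\end{equation}
because $\{H^{(0)},-\}$ is the Lie derivative along the Hamiltonian vector field $v(H^{(0)})$ (twisted by $\alpha$, but the connection piece is a commutator and disappears under the trace), $\mathcal{L}_{v(H^{(0)})}\omega^n = 0$ since $v(H^{(0)})$ is symplectic, and $\mathcal{M}$ has no boundary (or the states are compactly supported / decay). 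Applying this with $s_1 = s(t)$ and $s_2 = \rho$ converts the classical term into $-\int_\mathcal{M}\mathrm{Tr}_{V_x}(s_x(t)\,\{H^{(0)},\rho\}_x)\,\omega_x^n$, and then re-running the $s(t)\leftrightarrow\rho$ relabeling gives $-\{H^{(0)},\rho(t)\}_x$ as desired.

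Putting the two pieces together yields $\int_\mathcal{M}\mathrm{Tr}_{V_x}\big(s_x\,(\dot\rho_x(t) + \{H^{(0)},\rho(t)\}_x + i[H^{(1)}_x,\rho_x(t)])\big)\,\omega_x^n = 0$ for all observables $s$, and since the pairing $(\rho,s)\mapsto\int_\mathcal{M}\mathrm{Tr}_{V_x}(s_x\rho_x)\,\omega_x^n$ is nondegenerate on sections, the bracketed expression vanishes, giving the claimed equation. The main obstacle I anticipate is making the integration-by-parts step fully rigorous: one must check that the twisted bracket $\{H^{(0)},s\}$ really is $\iota_{v(H^{(0)})}d_\alpha s$ (which is \eqref{eq: derivation-Z-A}), that under $\mathrm{Tr}_{V_x}$ the connection term $i[a^{(\alpha)}(v(H^{(0)})),-]$ integrates to zero (cyclicity), and that the symplectic flow preserves $\omega^n$ with no boundary contributions — i.e., either $\mathcal{M}$ is closed, or the densities $\rho$ are assumed to have suitable support/decay so Stokes applies; this is exactly the analytic hypothesis implicit in the definition of normalized hybrid states. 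Everything else is routine.
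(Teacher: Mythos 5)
Your argument is correct, but it is the infinitesimal version of what the paper does in integrated form. The paper does not differentiate the pairing and integrate by parts; instead it substitutes the explicit solution $s_x(t)=\mathcal{U}_x(t)\,h_{x,x(t)}\,s_{x(t)}\,h_{x(t),x}\,\mathcal{U}_x(t)^{-1}$ from Theorem \ref{thr:s_x(t)} into $\mathrm{Tr}_{V_x}(\rho_x s_x(t))$, uses cyclicity of the trace to move the holonomies and $\mathcal{U}$ onto $\rho$, and then performs the change of variables $y=x(t)$ in the integral (using that the Hamiltonian flow preserves $\omega^n$, which is the integrated form of your $\mathcal{L}_{v(H^{(0)})}\omega^n=0$). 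This yields the closed-form expression $\rho_y(t)=h_{y,y(-t)}\,\mathcal{U}_{y(-t)}(t)^{-1}\,\rho_{y(-t)}\,\mathcal{U}_{y(-t)}(t)\,h_{y(-t),y}$, from which the differential equation follows by differentiation. The two routes rest on the same two facts (cyclicity of the fiberwise trace and Liouville invariance of the symplectic volume), but the paper's version buys an explicit transport formula for $\rho(t)$ that is reused later (e.g.\ for the evolution of Lagrangian states), whereas yours only produces the ODE; on the other hand, your duality argument is more elementary and does not require the solution formula for $s(t)$ as input. Two small points to tighten: the "re-running the relabeling" step is cleanest if phrased via the group property of the evolution, namely $\mathbb{E}_{\rho(t_0+\epsilon)}(s)=\mathbb{E}_{\rho(t_0)}(s(\epsilon))$, differentiated at $\epsilon=0$; and the connection term does not literally "disappear under the trace" --- rather $\mathrm{Tr}\bigl(i[a^{(\alpha)}(v),s]\rho\bigr)=-\,\mathrm{Tr}\bigl(s\,i[a^{(\alpha)}(v),\rho]\bigr)$, so it is transferred onto $\rho$ with the correct sign, while only $\mathrm{Tr}(d_\alpha X)=d\,\mathrm{Tr}(X)$ is what lets you reduce the volume argument to the untwisted Lie derivative.
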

\begin{proof}
The local value of a state on evolving observable is
\begin{equation}
 \mathrm{Tr}_{V_x}(\rho_x s_x(t)) =
        \mathrm{Tr}_{V_x} \left( \rho_x \,\mathcal{U}_x(t) h_{x, x(t)} s_{x(t)}
            h_{x(t), x} \,\mathcal{U}_x(t)^{-1} \right) =
        \mathrm{Tr}_{V_x} \left(h_{x(t), x} \,\mathcal{U}_x(t)^{-1} \rho_x
            \,\mathcal{U}_x(t) h_{x, x(t)} s_{x(t)}\right).
\end{equation}
The global value of a state on an observable
\begin{gather}
    \mathbb{E}_\rho(s(t)) =
        \int_\mathcal{M} \mathrm{Tr}_{V_x}\left(\rho_x\, s_x(t) \right) \omega_x^n =
        \int_\mathcal{M} \mathrm{Tr}_{V_{x(t)}} \left(h_{x(t), x}
            \,\mathcal{U}_x(t)^{-1} \rho_x \,\mathcal{U}_x(t) h_{x, x(t)}\, s_{x(t)} \right) \omega_{x(t)}^n.
\end{gather}
Changing the variables $y = x(t), x = y(-t)$, we obtain
\begin{multline}
    \mathbb{E}_\rho(s(t)) =
        \int_\mathcal{M} \mathrm{Tr}_{V_y} \left(h_{y, y(-t)} \left( \mathcal{U}_{y(-t)}(t)\right)^{-1}
            \rho_{y(-t)} \,\mathcal{U}_{y(-t)}(t)\, h_{y(-t), y} \, s_y \right) \omega_y^n = \\ =
        \int_\mathcal{M} \mathrm{Tr}_{V_y}\left(\rho_y(t)\, s_y \right) \omega_y^n = \mathbb{E}_{\rho(t)}(s) .
\end{multline}
This implies
\begin{equation}
    \rho_y(t) = h_{y, y(-t)} \left( \mathcal{U}_{y(-t)}(t)\right)^{-1}  \rho_{y(-t)} \,\mathcal{U}_{y(-t)}(t)\, h_{y(-t), y},
\end{equation}
which gives the differential equation for $\rho_x(t)$.
\end{proof}

Note that the evolution of hybrid states is also defined by \(*\)-invariant elements of
\(\mathcal{P}(A) = Z(A) \oplus A / Z(A)\).

\subsection{The evolution of pure hybrid states}
\label{sec:The evolution of pure hybrid states}
Let \( V \) be a representation bundle over the hybrid algebra of observables 
\( \Gamma(\mathcal{M}, E) \) and let \( \rho_x = v_x \otimes v_x^* \) be the density matrix 
corresponding to \( v \in \Gamma(\mathcal{M}, V) \) (as in section \ref{sec: hybrid states}).

Vectors in the representation \( V \) evolve according to the hybrid analog of the Schr\"{o}dinger picture of the evolution of \( v \):
\begin{equation}\label{eq: schrodinger}
    \frac{\partial v(t)}{\partial t} =
        -\{ H^{(0)}, v(t) \} - i H^{(1)} v(t), \quad v(0) = v.
\end{equation}

The evolution of the density is invariant with respect to transformations 
\( H^{(1)} \mapsto H^{(1)} + z \), \( z \in Z(A) = C(\mathcal{M}) \). 
The Schr\"{o}dinger evolution is also invariant if we take into account gauge transformations.

\begin{proposition} 
Let \(v(t)\) be a solution to \eqref{eq: schrodinger} with \(H^{(0)}\) and \(H^{(1)}\),
\(z \in C(\mathcal{M})\), and \(\theta(t) = \int_0^t z(x(\tau)) \, d\tau\),
where \(x(\tau)\) is the flow line of the Hamiltonian vector field generated by \(H^{(0)}\).
Then \(e^{i \theta(t)} v(t)\) is a solution to \eqref{eq: schrodinger}  with 
\(H^{(0)}\) and \(H^{(1)} + z\).
\end{proposition}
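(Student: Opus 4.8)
The plan is to verify directly that $w(t) := e^{i\theta(t)}\,v(t)$ solves the Schr\"odinger equation \eqref{eq: schrodinger} associated with $H^{(0)}$ and $H^{(1)}+z$. For each fixed $t$ the factor $e^{i\theta(t)}$ is a unit-modulus element of $Z(A)=C(\mathcal{M})$, i.e.\ an element of the gauge group $G_{\mathcal{M}}=\mathrm{Maps}(\mathcal{M},U(1))$ acting on the hybrid module $\mathcal{H}$. Hence multiplication by it commutes with the $A$-action on $\mathcal{H}$ (in particular with $H^{(1)}\in A$), and the action of the center on $\mathcal{H}$ satisfies the Leibniz rule \eqref{eq: hybrid-module-Leibniz} with respect to it. Since $\theta(0)=0$ we have $w(0)=v(0)=v$, so the initial condition is automatic and only the evolution equation has to be checked.

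Differentiating and substituting \eqref{eq: schrodinger} for $\partial_t v$,
\[
    \partial_t w \;=\; i\,(\partial_t\theta)\,w \;+\; e^{i\theta}\,\partial_t v
              \;=\; i\,(\partial_t\theta)\,w \;-\; e^{i\theta}\{H^{(0)},v\} \;-\; i\,e^{i\theta}H^{(1)}v .
\]
In the last term $e^{i\theta}$ is central, so $e^{i\theta}H^{(1)}v = H^{(1)}w$. For the middle term I apply the Leibniz rule \eqref{eq: hybrid-module-Leibniz} with $a=e^{i\theta}\cdot\mathbf{1}$ together with the chain rule for the Poisson bracket on $Z(A)$, namely $\{H^{(0)},e^{i\theta}\}=i\,e^{i\theta}\{H^{(0)},\theta\}$, to obtain
\[
    e^{i\theta}\{H^{(0)},v\} \;=\; \{H^{(0)},w\} \;-\; \{H^{(0)},e^{i\theta}\}\,v
                            \;=\; \{H^{(0)},w\} \;-\; i\,\{H^{(0)},\theta\}\,w .
\]
Collecting the terms yields
\[
    \partial_t w \;=\; -\{H^{(0)},w\} \;-\; i\,H^{(1)}w \;+\; i\big(\partial_t\theta + \{H^{(0)},\theta\}\big)\,w .
\]

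Thus the proposition reduces to the scalar transport identity $\partial_t\theta(t) + \{H^{(0)},\theta(t)\} = -z$ on $\mathcal{M}$ (the precise sign and the forward-versus-backward orientation of the trajectory in the definition of $\theta$ being dictated by the paper's sign conventions; equivalently, $\theta(t)$ is characterized by this first-order equation together with $\theta(0)=0$). Since $\{H^{(0)},\theta(t)\}=\mathcal{L}_{v(H^{(0)})}\theta(t)$ is precisely the derivative of $\theta(t)$ along the Hamiltonian flow of $H^{(0)}$, the left-hand side is the total $t$-derivative of $\theta$ transported along that flow, and the identity is immediate from the fundamental theorem of calculus and the defining formula $\theta(t)=\int_0^t z(x(\tau))\,d\tau$: one differentiates under the integral sign using the group property of the flow, after which the $\tau$-integral of the resulting total derivative collapses to boundary terms. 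Substituting back gives $\partial_t w = -\{H^{(0)},w\} - i(H^{(1)}+z)w$, which is \eqref{eq: schrodinger} for $H^{(0)}$ and $H^{(1)}+z$; together with $w(0)=v$ this proves the proposition.

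The computation itself is short. The only delicate point is the sign/orientation bookkeeping in the transport identity: the module term $\{H^{(0)},\cdot\}$ and the phase $e^{i\theta}$ enter the evolution with opposite ``chirality'', so one must make sure $\theta$ accumulates $z$ along the classical trajectory in the direction consistent with the Schr\"odinger (rather than Heisenberg) picture. One should also note that $\theta(t)$ is a well-defined element of $C(\mathcal{M})$ only for those $t$ for which the Hamiltonian flow of $H^{(0)}$ exists, so the statement is read locally in $t$ unless that flow is complete.
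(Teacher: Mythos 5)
Your algebraic reduction is correct and follows essentially the same route as the paper's proof: after expanding $\partial_t\bigl(e^{i\theta(t)}v(t)\bigr)$ and $\{H^{(0)},e^{i\theta(t)}v(t)\}$ by the Leibniz rule and the centrality of $e^{i\theta}$, everything hinges on the scalar transport identity
\begin{equation}
    \frac{\partial\theta(t)}{\partial t}+\{H^{(0)},\theta(t)\}=-z,
\end{equation}
and the two ingredients of this identity, $\partial_t\theta(t)=z(x(t))$ and $\{H^{(0)},\theta(t)\}=z(x(t))-z(x)$, are exactly what the paper computes.

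The gap is in your final step, where you declare the transport identity ``immediate from the fundamental theorem of calculus.'' It is not. For $\theta(t)(x)=\int_0^t z(x(\tau))\,d\tau$ with $x(\tau)$ the forward flow line starting at $x$, the two terms give
\begin{equation}
    \frac{\partial\theta(t)}{\partial t}+\{H^{(0)},\theta(t)\}
    = z(x(t))+\bigl(z(x(t))-z(x)\bigr)=2\,z(x(t))-z(x)\neq -z(x)
\end{equation}
for generic $z$. This is not the ``sign/orientation bookkeeping'' you defer to the paper's conventions: the two contributions $z(x(t))$ \emph{add} rather than cancel, so no overall sign flip of $\theta$ repairs it; one needs $\theta$ built from the backward flow, e.g.\ $\theta(t)(x)=-\int_0^t z(x(-\tau))\,d\tau$, for the $t$-derivative and the Poisson-bracket term to cancel and leave $-z(x)$. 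In fairness, the paper's own proof stops at the identity $\{H^{(0)},e^{i\theta}v\}+i(H^{(1)}+z)e^{i\theta}v=e^{i\theta}\bigl(iz(x(t))v+\{H^{(0)},v\}+iH^{(1)}v\bigr)$ and asserts the conclusion, whereas combining it with $\partial_t(e^{i\theta}v)=e^{i\theta}\bigl(iz(x(t))v-\{H^{(0)},v\}-iH^{(1)}v\bigr)$ leaves a residual term $2iz(x(t))\,e^{i\theta}v$; so the discrepancy originates in the statement's conventions rather than in your algebra. But a complete proof must either verify the transport identity for the stated $\theta$ (it fails) or correct the definition of $\theta$; asserting that $\theta$ is ``characterized by this first-order equation'' replaces the claim by its own conclusion, and is precisely the point where the argument breaks.
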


\begin{proof}
For a classical trajectory $\{x(\tau)\}_0^t$ with $x(0) = x$
\begin{equation}
    \{ H^{(0)}, \ \theta(t) \} = 
        \int_0^t \{ H^{(0)}, \ z(x(\tau)) \} \, d\tau = z(x(t)) - z(x).
\end{equation}
We have
\begin{equation}
    \frac{\partial}{\partial t} \left( e^{i \theta(t)} v(t) \right) =
        i \frac{\partial \theta(t)}{\partial t} e^{i \theta(t)} v(t) +
        e^{i \theta(t)} \frac{\partial v(t)}{\partial t} =
    i z(x(t)) e^{i \theta(t)} v(t) + e^{i \theta(t)} \frac{\partial v(t)}{\partial t}.
\end{equation}
On the other hand
\[
    \{ H^{(0)}, e^{i \theta(t)} v(t) \} =
        i \{ H^{(0)}, \theta(t) \} e^{i \theta(t)} v(t) + e^{i \theta(t)} \{ H^{(0)}, v(t) \} =
    i (z(x(t)) - z(x)) e^{i \theta(t)} v(t) + e^{i \theta(t)} \{ H^{(0)}, v(t) \}.
\]
Thus
\[
    \{ H^{(0)}, e^{i \theta(t)} v(t) \} + i (H^{(1)} + z(x)) e^{i \theta(t)} v(t) =
        e^{i \theta(t)} \left( i z(x(t)) v(t) + \{ H^{(0)}, v(t) \} + i H^{(1)} v(t) \right).
\]
The proposition follows.
\end{proof}

\subsection{The evolution of pure Lagrangian states}
\label{sec:The evolution of Lagrangian states}
Define the evolution of Lagrangian density matrices as
\begin{equation}
    \mathbb{E}_{\rho_B^{L(t)}(t)}(s) = \mathbb{E}_{\rho_B^L}(s(t)),
\end{equation}
where $s(t)$ is the evolution of the observable $s$.
According to the theorem \ref{thr:s_x(t)}
\begin{equation}
    s_x(t) =
        \mathcal{U}_x(t) h_{x, x(t)} s_{x(t)} h_{x(t), x} \,\mathcal{U}_x(t)^{-1}.
\end{equation}
For a pure hybrid Lagrangian state
\begin{multline}\nonumber
    \mathbb{E}_{\rho_B^{L(t)}(t)}(s) =
        \int_B \left(\varphi(b),\, \mathcal{U}_{x_{L, b}}(t) h_{x_{L, b}, x(t)} s_{x(t)}
            h_{x(t), x_{L, b}} \left(\mathcal{U}_{x_{L, b}}(t)\right)^{-1}
                \varphi(b) \right)_{x_{L, b}} d^n b = \\ =
        \int_B \left(h_{x(t), x_{L, b}} \left(\mathcal{U}_{x_{L, b}}(t)\right)^{-1} \varphi(b),\,
            s_{x(t)} h_{x(t), x_{L, b}} \left(\mathcal{U}_{x_{L, b}}(t)\right)^{-1}
                \varphi(b) \right)_{x(t)} d^n b = \\ =
        \int_B \left(\varphi(b(t), t),\, s_{x(t)} \varphi(b(t), t) \right)_{x(t)} d^n b.
\end{multline}
The endpoint $x(t)$ coincides with $L(t) \cap \pi^{-1}(b(t))$ where $L(t)$ is the evolution
of Lagrangian section $L \subset \mathcal{M}$ along classical trajectories. The scalar
product $(\cdot, \cdot)_{x(t)}$ is the scalar product in $V_{x(t)}$.
Note, that $\varphi(b(t), t) \in V_{x(t)}$ is defined as
\begin{equation}
    \varphi(b(t), t) = h_{x(t), x_{L, b}} \left(\,\mathcal{U}_{x_{L, b}}(t)\right)^{-1} \varphi(b) =
    \left(\,\mathcal{U}_{x(t)}(t)\right)^{-1} \varphi(b) \in \mathcal{H}_B^{L(t)}.
\end{equation}
For $\varphi(b(t), t)$ we have
\begin{equation}
    \frac{d}{dt} \varphi(b(t), t) = - i H^{(1)}(x(t)) \varphi(b(t), t).
\end{equation}

\subsection{Correlation functions}
Hybrid systems are not conservative, so the natural physical quantities that characterize quantum
dynamics are time-dependent correlation functions.

Time dependent correlation functions for quantum observables $s^{(1)}, \ldots, s^{(n)}$ in
the state with the density matrix $\rho^{(n)}$ are
\begin{equation}
    \mathbb{E}_\rho \left( s^{(1)}(t_1) \otimes \ldots \otimes s^{(n)}(t_n) \right) =
    \int_\mathcal{M} \mathrm{Tr}_{V_x^{\otimes{n}}} \left(
        \rho_x^{(n)} \big( s_x^{(1)}(t_1) \otimes \ldots \otimes s_x^{(n)}(t_n) \big)
    \right) \omega_x^n.
\end{equation}
Here $\rho_x^{(n)}$ is the section of $S^{\otimes n}$ (the $n$-th power of the state bundle
of local quantum states $S$).

In general, eigenvalues of $H^{(1)}(x(t))$ are time-dependent,
so are the spectral functions of $H^{(1)}(x(t))$.
However, if the trajectory is periodic, we have monodromy operators
\begin{equation}
    M_{x_0} = \mathrm{Pexp} \left( i \int_0^T H^{(1)}\big(x(t)\big) dt \right)
        \in \mathrm{End}(V_{x_{0}}),
\end{equation}
where $\{x(t)\}$ is a $T$-periodic trajectory with $x_{0} = x(0) = x(T)$. 
The spectrum of these monodromy operators is similar to the Bloch spectrum 
for periodic potentials.

In the example of the discrete Sine-Gordon model \cite{BBR} such a monodromy operator for a minimal
periodic orbit is the transfer matrix for the Chiral Potts model.

Computation of correlation functions and spectra of monodromy operators in hybrid
integrable systems is an interesting problem, but we will not focus on it here.

\subsection{The hybrid evolution and the deformation quantization}
Let $A_\hbar$ be a flat deformation family of $A_0$ as in section 
\ref{sec: deformation_quantization}.  Let $A_\hbar$ be the algebra of quantum observables
of a quantum system.
Its evolution is determined by the choice of the Hamiltonian $\hat{H} \in A_\hbar$. 
In the Heisenberg picture, an observable $\hat{a} \in A_\hbar$ evolves as 
$\hat{a} \mapsto \hat{a}(t)$ where\
\begin{equation}\label{eq:Heis_dyn}
    -i \hbar \frac{\partial \hat{a}(t)}{\partial t} = 
        [\hat{H}, \hat{a}(t)], \quad \hat{a}(0) = \hat{a}.
\end{equation}
Note that evolutions defined by $\hat{H}$ and $\hat{H} + \hat{z}$, $\hat{z} \in Z(A_\hbar)$,
are identical.

Assume that $\hat{H} \in A_\hbar$ be an element of the deformed algebra such that
\begin{equation}
    \phi_\hbar(\hat{H}) = H^{(0)} + \hbar H^{(1)} + O(\hbar^2), \quad
        H^{(0)} \in Z(A_0), \ H^{(1)} \in A_0,
\end{equation}
where $\phi_\hbar \colon A_\hbar \to A_0$ is a linear isomorphism as in section
\ref{sec: deformation_quantization}.
We will call such element $\hat{H}$ \textbf{semiclassically hybrid}.

Semiclassically hybrid elements form a subalgebra $A_\hbar^{\mathrm{SH}}$ in $A_\hbar$.
Indeed, if $\hat{F}$ and $\hat{G}$ are two semiclassically
hybrid operators
\begin{gather}
    \phi_\hbar(\hat{F}) = F^{(0)} + \hbar F^{(1)} + O(\hbar^2), \\
    \phi_\hbar(\hat{G}) = G^{(0)} + \hbar G^{(1)} + O(\hbar^2), \\
    F^{(0)}, G^{(0)} \in Z(A_0), \quad
    F^{(1)}, G^{(1)} \in A_0,
\end{gather}
then their sum and product are also semiclassically hybrid:
\begin{align}
    \phi_\hbar(\hat{F} + \hat{G}) &= 
        (F^{(0)} + G^{(0)}) + \hbar (F^{(1)} + G^{(1)}) + O(\hbar^2), \\
    \phi_\hbar(\hat{F} \hat{G}) &=
        F^{(0)} G^{(0)} + 
            \hbar (F^{(0)} G^{(1)} + G^{(0)} F^{(1)} + m_1(F^{(0)}, G^{(0)})) +
                O(\hbar^2).
\end{align}
It is also clear that the multiplication by a constant preserves
$A_\hbar^{\mathrm{SH}} \subset A_\hbar$.

Note that the definition of semiclassically hybrid elements does not depend on the choice
of the linear isomorphism \(\phi_\hbar \colon A_\hbar \to A_0\). 

Let $a \in A_0$ be the semiclassical limit of $\hat{a} \in A_{\hbar}$ as $\hbar \to 0$
\begin{equation}
    \phi_\hbar(\hat{a}) = a + O(\hbar), \quad  a \in A_0.
\end{equation}
Then as $\hbar \to 0$, the Heisenberg evolution generated by semiclassically hybrid $\hat{H}$ on $A_\hbar$
becomes a split hybrid Heisenberg evolution on $A_0$
\begin{equation}\label{eq: split hyb dyn}
    \frac{\partial a}{\partial t} = \{H^{(0)}, a\} + i \, [H^{(1)}, a], \quad
        a \in A_0.
\end{equation}

\begin{remark} \label{rem: one-time-deformation-quantization-correctness}
Note that the hybrid Heisenberg evolution \eqref{eq: split hyb dyn} does not depend on
a quantization scheme.
Indeed, let $\phi_\hbar, \psi_\hbar \colon A_\hbar \to A_0$ be two linear isomorphisms such that
\begin{equation}
    \phi_\hbar = \eta_\hbar \circ \psi_\hbar, \qquad
        \eta_\hbar \colon A_0 \to A_0, \quad
            \eta_\hbar(a) = a + \sum_{k \ge 1} \hbar^k \eta^{(k)}(a).
\end{equation}
Then
\begin{equation}
    \phi_\hbar(\hat{a}) = a + O(\hbar), \qquad
    \psi_\hbar(\hat{a}) = a + O(\hbar),
    \quad \hat{a} \in A_\hbar, \quad a \in A_0.
\end{equation}
Along with that, the image of a semiclassically hybrid $\hat{H}$ under $\psi_{\hbar}$ is
\begin{equation}
    \psi_\hbar(\hat{H}) = 
        \eta_\hbar^{-1}\big(\phi_\hbar(\hat{H})\big) = 
            H^{(0)} + \hbar \big(H^{(1)} - \eta^{(1)}(H^{(0)})\big) + O(\hbar^2).
\end{equation}

Applying $\phi_{\hbar}, \psi_{\hbar}$ to \eqref{eq:Heis_dyn}, we have two versions of
hybrid Heisenberg dynamics
\begin{align} \label{eq: two hybrid equations}
    \frac{\partial a}{\partial t} &= 
        \{H^{(0)}, a\}^{(\phi)} + i \, [H^{(1)}, a], \\
    \frac{\partial a}{\partial t} &= 
        \{H^{(0)}, a\}^{(\psi)} + i \, \big[H^{(1)} - \eta^{(1)}(H^{(0)}), a\big],
\end{align}
where $\{\cdot, \cdot\}^{(\phi)}$ and $\{\cdot, \cdot\}^{(\psi)}$ are defined as in 
remark \ref{rem: two quantum strusture}.

Taking into account the relation \eqref{eq: psi-phi brackets relation - 1}, we have
\begin{equation}
    \{H^{(0)}, a\}^{(\psi)} + i \, \big[H^{(1)} - \eta^{(1)}(H^{(0)}), a\big] =
    \{H^{(0)}, a\}^{(\phi)} + i \, [H^{(1)}, a].
\end{equation}
That implies that hybrid evolutions \eqref{eq: two hybrid equations} are identical.
\end{remark}

\section{Compatible hybrid multi-time evolutions}
\label{sec:hybrid-integrable-systems}

\subsection{Compatible hybrid multi-time evolutions.} 
Let $A$ be a hybrid algebra of observables and $Z(A)$ be its center. Fix elements
$H^{(0)}_1, \ldots, H^{(0)}_n \in Z(A)$, $H^{(1)}_1, \ldots, H^{(1)}_n \in A$ and 
define differential operators $D_j$ acting on $C(\mathbb{R}^n, A)$ as
\begin{equation}
    D_j \, a(\mathbf{t}) = 
        \frac{\partial a(\mathbf{t})}{\partial t_j} - 
            \{H^{(0)}_j, a(\mathbf{t})\} - i \, [H^{(1)}_j, a(\mathbf{t})], \quad
                j = 1, \ldots, n, \quad 
                \mathbf{t} = (t_1, \ldots, t_n) \in \mathbb{R}^n.
\end{equation}
It is natural to call the multi-time evolution 
\begin{equation}
    D_j \, a(\mathbf{t}) = 0
\end{equation}
of Cauchy data \(a(0) = a\) \textbf{compatible} if 
\begin{equation}
    [D_j, D_k] = 0
\end{equation}
for all \(j, k = 1, \ldots, n\).

\begin{definition} \label{def: integrable_hybrid_Heisenberg}
Let $A$ be a hybrid algebra of observables and $Z(A)$ be its center. 
A \textbf{hybrid multi-time evolution} of an observable $s \in A$ with the classical background
dynamics generated by classical Hamiltonians $H^{(0)}_1, \ldots, H^{(0)}_n \in Z(A)$ and
with quantum Hamiltonians $H^{(1)}_1, \ldots, H^{(1)}_n \in A$ is the solution to the system
of differential equations
\begin{equation}
    \label{eq:evolution-2}
    \frac{\partial s(\mathbf{t})}{\partial t_k} =
        \{H^{(0)}_k, s(\mathbf{t})\} + i \, [H^{(1)}_k, s(\mathbf{t})],
\end{equation}
with the initial condition $s(0) = s$.
\end{definition}

The equations \eqref{eq:evolution-2} are compatible only if the classical Hamiltonians
$H^{(0)}_1, \ldots, H^{(0)}_n$ and the quantum Hamiltonians $H^{(1)}_1, \ldots, H^{(1)}_n$ 
satisfy the following compatibility conditions 
\begin{gather}
    c_{k, l}^{(1)} = \{H^{(0)}_k, H^{(0)}_l\}, \quad 
        \{c_{kl}^{(1)}, a\} = 0 \ \text{for any } a \in A, \\
    \label{eq: H0-H1-compatibility}
    c_{k, l}^{(2)} = \{H^{(0)}_k, H^{(1)}_l\} - \{H^{(0)}_l, H^{(1)}_k\} + 
            i \, [H^{(1)}_k, H^{(1)}_l] - \{H^{(0)}_k, H^{(0)}_l\}_2 \in Z(A).
\end{gather}
If $Z(A) = C(\mathcal{M})$ is the Poisson algebra of functions on a symplectic manifold
$\mathcal{M}$, the Poisson brackets are non-degenerate, and elements from the center of
Poisson algebra $c_{k, l}^{(1)}$ are necessarily constants.

Note that the hybrid evolution defined by $H^{(1)}_1, \ldots, H^{(1)}_n$
and $\tilde{H}^{(1)}_1 = H^{(1)}_1 + z_1, \ldots, \tilde{H}^{(1)}_n = H^{(1)}_n + z_n$, 
where $z_1, \ldots, z_n \in Z(A)$, are identical. The coefficients $c_{k, l}^{(2)}$ changes
as 
\begin{equation}
    \tilde{c}_{k, l}^{(2)} = c_{k, l}^{(2)} + \{H^{(0)}_k, z_l\} - \{H^{(0)}_l, z_k\},
\end{equation}
remaining central: $\tilde{c}_{k, l}^{(2)} \in Z(A)$, if $c_{k, l}^{(2)} \in Z(A)$.

Equation (\ref{eq:evolution-2}) describes the Heisenberg hybrid integrable multi-time evolution
of an observable $s \in A$. Let us define the the Schr\"odinger picture, which describes
the multi-time evolution of vectors.

\begin{definition}
Let $\mathcal{H}$ be a hybrid module over a hybrid algebra of observables $A$ as in section
\ref{sec: representations-hybrid-module}. A \textbf{hybrid multi-time Schr\"odinger dynamics} of a
vector $f \in \mathcal{H}$ on the classical background dynamics generated by Poisson commuting
classical Hamiltonians $H^{(0)}_1, \ldots, H^{(0)}_n \in Z(A)$ and with quantum Hamiltonians
$H^{(1)}_1, \ldots, H^{(1)}_n \in A$ is the solution to the system of differential equations
\begin{equation}
    \label{evolution-3}
    \frac{\partial f(\mathbf{t})}{\partial t_k} =
        -\{H^{(0)}_k, f(\mathbf{t})\} - i H^{(1)}_k f(\mathbf{t}), \quad f(0) = f.
\end{equation}
\end{definition}

Taking into account \eqref{eq: hybrid-module-Leibniz} and \eqref{eq: hybrid-module-Jacobi}, 
it is easy to show that the multi-time evolution \eqref{evolution-3} is compatible if and only if
\begin{gather}
    \label{eq: c_kl-schroedinger}
    c_{k, l}^{(1)} = \{H^{(0)}_k, H^{(0)}_l\}, \quad
        \{c_{k, l}^{(1)}, f\} = 0 \ \text{for all } f \in \mathcal{H}, \\
    \label{eq:zero-curvature_shr}
    c_{k, l}^{(2)} = \{H^{(0)}_k, H^{(1)}_l\} - \{H^{(0)}_l, H^{(1)}_k\} + 
            i \, [H^{(1)}_k, H^{(1)}_l] - \{H^{(0)}_k, H^{(0)}_l\}_2 = 0.
\end{gather}

Note that \eqref{eq: c_kl-schroedinger} and \eqref{eq: hybrid-module-Leibniz} implies that
$\{c_{k, l}^{(1)}, a\} = 0$ for all $a \in A$. Thus, any hybrid Schr\"odinger dynamics
defines a hybrid Heisenberg dynamics of observables from $A = \mathrm{End}(\mathcal{H})$.

The multi-time Schr\"odinger dynamics is invariant under the gauge transformation
\begin{equation}
    f \mapsto e^{i \theta} f, \quad
        H^{(1)}_k \mapsto H^{(1)}_k + i \, \{H^{(0)}_k, \theta\}, \quad 
            \theta \in Z(A).
\end{equation}
Coefficients $c_{k, l}^{(1)}$ \eqref{eq: c_kl-schroedinger} and $c_{k, l}^{(2)}$
\eqref{eq:zero-curvature_shr} do not change under this transformation
\begin{align}
    c_{k, l}^{(1)} &\mapsto c_{k, l}^{(1)}, \\
    c_{k, l}^{(2)} &\mapsto 
        c_{k, l}^{(2)} + 
            i \, \{H^{(0)}_k, \{H^{(0)}_l, \theta\}\} -
                i \, \{H^{(0)}_l, \{H^{(0)}_k, \theta\}\} =
        c_{k, l}^{(2)} + i \, \{\{H^{(0)}_k, H^{(0)}_l\}, \theta\} =
            c_{k, l}^{(2)}.
\end{align}

The Heisenberg dynamics \eqref{eq:evolution-2} can be evaluated in a representation
$\rho \colon A \to \mathrm{End}(\mathcal{H})$. But to define the Schr\"odinger dynamics on vectors
from $\mathcal{H}$ we have to find $\tilde{H}^{(1)}_k$ in \eqref{evolution-3} that satisfy
\eqref{eq:zero-curvature_shr}, such that $\tilde{H}^{(1)}_k - H^{(1)}_k = \Delta_k$ with $\Delta_k$
being in the centralizer $Z\left(\rho(A), \mathrm{End}(\mathcal{H})\right)$ of $\rho(A)$ in
$\mathrm{End}(\mathcal{H})$. 
\footnote{
    If \(\mathcal{H}\) is irreducible, this means that $\Delta_k = g_k \cdot 1$. 
}
Thus, a hybrid Heisenberg dynamics can be lifted to a 
Schr\"odinger dynamics in the representation space $\mathcal{H}$ if
\begin{equation}
    c_{k,l} = \{H^{(0)}_k, \Delta_l\} + \{\Delta_k, H^{(0)}_l\}
\end{equation}
for some $\Delta_k \in Z(\rho(A), \mathrm{End}(\mathcal{H})),\ k = 1, \ldots, n$.

\subsection{Hybrid multi-time dynamics and deformation quantization}

Let $A_\hbar$ be a flat deformation family of $A_0$. Consider a Poisson structure on
$Z(A_0)$ and a hybrid algebra structure on $A_0$ induced by this deformation as in section
\ref{sec: deformation_quantization}. 

Consider the multi-time Heisenberg evolution of $\hat{a} \in A_\hbar$ generated by 
quantum Hamiltonians $\hat{H}_1, \dots, \hat{H}_n \in A_{\hbar}$. 
It is given by the system of differential equations:
\begin{equation} 
    \label{eq: quantum-multi-time-dynamics}
    -i \hbar \frac{\partial \hat{a}(\mathbf{t})}{\partial t_j} = 
        [\hat{H}_j, \hat{a}(\mathbf{t})], \quad \hat{a}(0) = \hat{a}.
\end{equation}
This system is compatible if
\begin{equation} \label{eq: quantum-compatibility-condition}
    [\hat{H}_k, \hat{H}_l] = \hat{c}_{k,l} \in Z(A_\hbar).
\end{equation}

Assume that each Hamiltonian $\hat{H}_{k}$ is semiclassically hybrid
\begin{equation}
    \phi_\hbar \colon \hat{H}_k \mapsto H_k^{(0)} + \hbar H_k^{(1)} + O(\hbar^2), \quad
        H_k^{(0)} \in Z(A_0), \ H_k^{(1)} \in A_0.
\end{equation}
In this case, the semiclassical expansion of $\hat{c}_{k, l}$ starts from the term
of $\hbar^1$-order
\begin{multline} \notag
    \phi_\hbar(\hat{c}_{k, l}) = \phi_\hbar([\hat{H}_k, \hat{H}_l]) = 
        \big[H_k^{(0)} + \hbar H_k^{(1)} + O(\hbar^2), \,
            H_l^{(0)} + \hbar H_l^{(1)} + O(\hbar^2)\big]_* = \\ =
        - i\hbar \{H_k^{(0)}, H_l^{(0)}\} + \hbar^2 \big(
            -i \, \{H_k^{(0)}, H_l^{(1)}\} + i \, \{H_l^{(0)}, H_k^{(1)}\} + 
                [H_k^{(1)}, H_l^{(1)}] + i \, \{H_k^{(0)}, H_l^{(0)}\}_2
        \big) + O(\hbar^3).
\end{multline}

Denote by $c_{k, l}^{(i)}$ the coefficients in the expansion of $\phi_\hbar(\hat{c}_{k, l})$
as $\hbar \to 0$
\begin{equation}
    \label{eq: hbar-limit-c_kl}
    \phi_{\hbar}(\hat{c}_{k, l}) = 
        - i \hbar\, c^{(1)}_{k, l} - i \hbar^{2} c^{(2)}_{k, l} + O(\hbar^{3}).
\end{equation}
Then, since any element 
$\hat{a} \in A_\hbar, \ \phi_\hbar(\hat{a}) = a + \hbar a^{(1)} + O(\hbar^2)$
commutes with $\hat{c}_{k, l}$ in $A_\hbar$
\begin{multline} \notag
    0 = \phi_\hbar([\hat{c}_{k, l}, \hat{a}]) = 
        \big[-i \hbar c_{k, l}^{(1)} - i \hbar^2 c_{k, l}^{(2)} + O(\hbar^3), \, 
            a + \hbar a^{(1)} + O(\hbar^2)
        \big]_* = \\ =
        -i\hbar [c^{(1)}_{k, l}, a] - 
        i\hbar^2 \Big( [c^{(1)}_{k, l}, \, a^{(1)}] + [c^{(2)}_{k,l}, \, a] + 
            m_{1}\big(c^{(1)}_{k, l}, a\big) - m_{1}\big(a, c^{(1)}_{k, l}\big) \Big) + O(\hbar^{3}),
\end{multline}
which implies that $c^{(1)}_{k, l} \in Z(A_{0})$ and
\begin{equation}
    \label{eq: expansion-ckl-central}
    [c^{(2)}_{k, l}, \, a] - i \{c^{(1)}_{k, l}, \, a\} = 0, \quad \text{for all } a \in A_0.
\end{equation}

Therefore, if we assume that $\{c_{k, l}^{(1)}, a\} = 0$ for every $a \in A_0$, 
\eqref{eq: expansion-ckl-central} guarantees $c_{k, l}^{(2)} \in Z(A_0)$. 
Thus, we obtain compatible hybrid multi-time dynamics on $A_0$ with classical Hamiltonians
$H^{(0)}_1, \ldots, H^{(0)}_n$ and quantum Hamiltonians $H^{(1)}_1, \ldots, H^{(1)}_n$
(as in definition \ref{def: integrable_hybrid_Heisenberg}) as the semiclassical limit of the
quantum multi-time dynamics generated by $\hat{H}_1, \ldots, \hat{H}_n$.

\begin{remark}
The hybrid multi-time dynamics obtained from deformation quantization is defined
correctly and does not depend on the choice of the isomorphism \(\phi_\hbar \colon A_\hbar \to A_0\). 
Indeed, let $\psi_\hbar \colon A_\hbar \to A_0$ be another linear isomorphisms such that
\begin{equation}
    \phi_\hbar = \eta_\hbar \circ \psi_\hbar, \qquad
        \eta_\hbar \colon A_0 \to A_0, \quad
            \eta_\hbar(a) = a + \sum_{k \ge 1} \hbar^k \eta^{(k)}(a).
\end{equation}
From remark \ref{rem: one-time-deformation-quantization-correctness} we see that
every dynamics is defined correctly. Let us check that the compatibility conditions
also do not depend on \(\eta^{(1)}\). We have
\begin{gather}
    \label{eq: c_kl_1_psi_deformation}
    c_{k, l}^{(1), (\psi)} = \{H^{(0)}_k, H^{(0)}_l\}, \quad 
        \{c_{k, l}^{(1), (\psi)}, a\}^{(\psi)} = 0, \text{ for all } a \in A_0, \\
    \label{eq: c_kl_2_psi_deformation}
    c_{k, l}^{(2), (\psi)} = 
        \{H^{(0)}_k, H^{(1)}_l - \eta^{(1)}(H^{(0)}_l)\}^{(\psi)} - 
            \{H^{(0)}_l, H^{(1)}_k - \eta^{(1)}(H^{(0)}_k)\}^{(\psi)} + \\ +
            i \, [H^{(1)}_k - \eta^{(1)}(H^{(0)}_k), H^{(1)}_l - \eta^{(1)}(H^{(0)}_l)] -
            \{H^{(0)}_k, H^{(0)}_l\}_2^{(\psi)} \in Z(A_0).
\end{gather}

The coefficients $c_{k, l}^{(1)}$ do not depend on the choice of the isomorphism, so
\eqref{eq: c_kl_1_psi_deformation} follows from the corresponding identity 
for $\phi_\hbar$.

Applying the relations \eqref{eq: psi-phi brackets relations}, 
\eqref{eq: psi-phi brackets relation - 1}, and \eqref{eq: psi-phi brackets relation - 2}
to the right hand side of \eqref{eq: c_kl_2_psi_deformation} we get
\begin{equation}
    c_{k, l}^{(2), (\psi)} = 
        c_{k, l}^{(2), (\phi)} - \eta^{(1)}(\{H^{(0)}_k, H^{(0)}_l\}) =
        c_{k, l}^{(2), (\phi)} - \eta^{(1)}(c_{k, l}^{(1)}).
\end{equation}

Applying the linear isomorphism $\psi_\hbar$ to \eqref{eq: hbar-limit-c_kl}, we obtain
\begin{gather}
    \psi_\hbar(\hat{c}_{k, l}) = 
        -i \hbar c_{k, l}^{(1), (\psi)} - i \hbar^2 c_{k, l}^{(2), (\psi)} + O(\hbar^2), \\
    c_{k, l}^{(1), (\psi)} = c_{k, l}^{(1), (\phi)} = \{H_k^{(0)}, H_l^{(0)}\}, \\
    c_{k, l}^{(2), (\psi)} = c_{k, l}^{(2), (\phi)} - \eta^{(1)}(c_{k, l}^{(1), (\phi)}).
\end{gather}
Now, applying $\phi_\hbar$ to $[\hat{c}_{k, l}, \hat{a}] = 0$ for $\hat{a} \in A_\hbar$, 
and taking into account that $\{c_{k, l}^{(1)}, a\} = 0$ for $a \in A_0$, we get that
$c_{k, l}^{(2), (\psi)} \in Z(A_0)$, so the multi-time dynamics defined via $\psi_\hbar$ is also
compatible.

\end{remark}

\section{Hybrid integrable systems.}

\subsection{Hybrid integrable systems}

Let $\mathcal{M}$ be a symplectic manifold of dimension $2n$. 
A classical integrable system on $\mathcal{M}$ is a Lagrangian fibration
$\pi \colon \mathcal{M} \to B$. Locally, a classical integrable system on \(\mathcal{M}\) is 
defined via $n$ independent Poisson commuting functions $H^{(0)}_1, \ldots, H^{(0)}_n$, i.e.
\begin{equation}
    \{H^{(0)}_i, H^{(0)}_j\} = 0, \quad \forall i, j, \quad
        dH^{(0)}_1 \wedge \ldots \wedge dH^{(0)}_n \not \equiv 0.
\end{equation}
Geometrically, $H^{(0)}_1, \ldots, H^{(0)}_n$ are pullbacks of local coordinates on $B$
\begin{equation}
    \pi^*(b_k) = H^{(0)}_k, \quad k = 1, \ldots, n.
\end{equation}
The corresponding Hamiltonian vector fields $v(H^{(0)}_k)$ are parallel to the fibers
of \(\pi\), i.e., their flow lines do not leave the fiber on which they originated.

Let \(U\) be a coordinate neighborhood on \(B\) with local coordinates \(\{b_i\}\),
\(V\) be a coordinate neighborhood with coordinates \(\{\tilde{b}_i\}\), and 
\(f \colon U \cap V \to U \cap V\) be the transition function, 
\(\tilde{b}_i = f_i(b_1, \ldots, b_n)\). Then, we have Poisson commuting Hamiltonians 
\(H^{(0)}_i = \pi^*(b_i), \ 
\tilde{H}^{(0)}_i = \pi^*(\tilde{b}_i) = f_i(H^{(0)}_1, \ldots, H^{(0)}_n)\). 
Flow lines of \(H^{(0)}_i\) and
\(\tilde{H}^{(0)}_i\) form an affine coordinate system on each fiber of \(\pi^{-1}(U \cup V)\).

Algebraically, a \(2n\)-dimensional classical integrable system is a maximal Poisson commutative
subalgebra $\mathcal{B}$ in the Poisson algebra $\mathcal{P}$ with the trivial center
and of finite even rank \(2n\).

The classical Hamiltonians $H^{(0)}_1, \ldots, H^{(0)}_n$ generate a multi-time
Hamiltonian dynamics on $\mathcal{M}$. The multi-time flow lines
$x(\mathbf{t}) = x(t_1, \ldots, t_n)$ are solutions to the Hamilton's equations
\begin{equation}
    \frac{\partial x(\mathbf{t})}{\partial t_k} = 
        \omega^{-1}\big(dH^{(0)}_k(x(\mathbf{t}))\big).
\end{equation}

We want to define hybrid integrable dynamics on this classical integrable background.

\begin{definition}
Let $A$ be a hybrid algebra of observables with center $Z(A)$ being
the Poisson algebra of rank \(2n\) with the trivial Poisson center. 
Let $\mathcal{B}$ be a maximal Poisson commutative subalgebra in the Poisson algebra $Z(A)$.
A \textbf{hybrid integrable system} on $A$ is a homomorphism of Poisson algebras
$\hat{\lambda} \colon \mathcal{B} \to \mathcal{P}(A) \simeq Z(A) \oplus A/Z(A)$,
\(\hat{\lambda}(H^{(0)}) = (H^{(0)}, \lambda(H^{(0)}))\).
\end{definition}

Thus, for any choice of independent classical Hamiltonians 
$H^{(0)}_1, \ldots, H^{(0)}_n$, defining this classical integrable system, 
we have a set of quantum Hamiltonians (defined up to central elements)
\begin{equation}
    \overline{H^{(1)}_1} = \lambda(H^{(0)}_1), \ \ldots, \ 
        \overline{H^{(1)}_n} = \lambda(H^{(0)}_n) \in A/Z(A).
\end{equation}
We have a homomorphism of Poisson algebras, so the quantum Hamiltonians
satisfy the condition
\begin{equation}
    \label{eq: zero-curvature-hybrid-definition}
    \{H^{(0)}_k, H^{(1)}_l\} - \{H^{(0)}_l, H^{(1)}_k\} + 
        i \, [H^{(1)}_k, H^{(1)}_l] - \{H^{(0)}_k, H^{(0)}_l\}_2 \in Z(A).
\end{equation}
Note that this condition does not depend on the choice of the representatives \(H^{(1)}_j \in A\)
in classes \(\overline{H^{(1)}_j} \in A/Z(A)\).

Similarly to a classical integrable system, which defines the classical multi-time
evolution, a hybrid integrable system also defines the Heisenberg 
multi-time hybrid evolutions \eqref{eq:evolution-2}. 
The condition \eqref{eq: zero-curvature-hybrid-definition} together with 
Poisson commutativity of classical Hamiltonians guarantees that the multi-time hybrid
dynamics is compatible. 

If it is possible to choose the quantum Hamiltonians 
\(H^{(1)}_j \in A\) in classes \(\overline{H^{(1)}_j} \in A/Z(A)\)
in the way that the right-hand side in \eqref{eq: zero-curvature-hybrid-definition}
vanishes, the compatible multi-time Heisenberg dynamics can be represented as the compatible
multi-time Schr\"odindger dynamics in an \(A\)-module \(\mathcal{H}\).

\subsection{Geometric example}
Let \(\mathcal{M}\) be a symplectic manifold of dimension \(2n\), 
\(\pi \colon \mathcal{M} \to B\) be a classical integrable system on \(\mathcal{M}\),
and \(\varphi \colon E \to \mathcal{M}\) be a hybrid bundle of algebras with the
connection \(\alpha\). Assume that for each fiber \(\pi^{-1}(b)\) we have a projectively flat
connection \(\beta\) over \(E|_{\pi^{-1}(b)}\). Then \(\beta - \alpha = i \gamma\),
where \(\gamma\) is a Hermitian one-form on each fiber \(\pi^{-1}(b) \subset \mathcal{M}\) 
with coefficients in \(E|_{\pi^{-1}(b)}\). This one-form defines the mapping 
\begin{equation}
    \lambda \colon C(B) \to \Gamma(\mathcal{M}, E), \quad
    \lambda(f) = \iota_{v(\pi^*(f))} \gamma,
\end{equation}
where \(v(\pi^*(f))\) is a Hamiltonian vector field with the Hamiltonian \(\pi^*(f)\).

In local coordinates, \(\gamma(x) = \sum_i \gamma_i(x) dt^i\) and
\(v(\pi^*(f))(x) = \sum_{ij} (\omega^{-1})^{ij} \frac{\partial f}{\partial b^i}(\pi(x))
\frac{\partial}{\partial t^j}\). Here we assume that \(\{t^i\}\) are affine
coordinates on \(\pi^{-1}(b)\) generated by flow lines of \(H^{(0)}_i = \pi^*(b^i).\)

\begin{theorem}
The mapping \(\lambda \colon C(B) \to \Gamma(\mathcal{M}, E)\) defines a hybrid
integrable system.
\end{theorem}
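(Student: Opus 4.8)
The plan is to prove that
\[
    \hat{\lambda} \colon \mathcal{B} \longrightarrow \mathcal{P}(A), \qquad \hat{\lambda}(\pi^*f) = \big(\pi^*f,\ \overline{\lambda(f)}\big),
\]
is a homomorphism of Poisson algebras, where $\mathcal{B} = \pi^*\big(C(B)\big) \subset C(\mathcal{M}) = Z(A)$ is the maximal Poisson commutative subalgebra determined by the Lagrangian fibration $\pi$. I would use throughout three standard consequences of $\pi$ being Lagrangian: the Hamiltonian vector fields $X_f := v(\pi^*f)$ are tangent to the fibers of $\pi$; the pullbacks Poisson commute, $\{\pi^*f, \pi^*g\} = 0$; and hence $[X_f, X_g] = v(\{\pi^*f, \pi^*g\}) = 0$.

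First I would dispatch multiplicativity. The Leibniz rule for Hamiltonian vector fields gives $X_{fg} = (\pi^*f)\, X_g + (\pi^*g)\, X_f$; contracting with $\gamma$ yields $\lambda(fg) = (\pi^*f)\,\lambda(g) + (\pi^*g)\,\lambda(f)$. Since $\pi^*f, \pi^*g$ are central, the right-hand side is exactly the value of the commutative product \eqref{eq: multiplication-P(Gamma)} of $\hat{\lambda}(\pi^*f)$ and $\hat{\lambda}(\pi^*g)$, so $\hat{\lambda}(\pi^*f)\cdot\hat{\lambda}(\pi^*g) = \hat{\lambda}\big(\pi^*(fg)\big)$; and $\hat{\lambda}(1) = (1, \overline{0})$ because $v(\pi^*1) = 0$.

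The substantive step is compatibility with Poisson brackets: since $\{\pi^*f, \pi^*g\} = 0$, I must show $\{\hat{\lambda}(\pi^*f), \hat{\lambda}(\pi^*g)\}_{\mathcal{P}(A)} = 0$. By \eqref{eq: lie-bracket-P(A)} its first component is $\{\pi^*f, \pi^*g\} = 0$ automatically, so the claim reduces to
\[
    C_{f,g} := \{\pi^*f, \lambda(g)\} - \{\pi^*g, \lambda(f)\} + i\,[\lambda(f), \lambda(g)] - \{\pi^*f, \pi^*g\}_2^{(\alpha)} \in Z(A).
\]
In the notation of Remark \ref{rem: two-connections}, taking its one-form to be $\gamma$ and $\tilde{\alpha} = \alpha + i\gamma = \beta$, our map $\lambda$ is precisely the associated $\eta(z) = \iota_{v(z)}\gamma$ restricted to $\mathcal{B}$; the curvature-transformation formula of that remark, evaluated on $z = \pi^*f$, $w = \pi^*g$, then reads
\[
    \{\pi^*f, \pi^*g\}_2^{(\beta)} = \{\pi^*f, \pi^*g\}_2^{(\alpha)} - \{\pi^*f, \lambda(g)\} + \{\pi^*g, \lambda(f)\} - i\,[\lambda(f), \lambda(g)] + \lambda\big(\{\pi^*f, \pi^*g\}\big),
\]
whose last term vanishes, so $C_{f,g} = -\{\pi^*f, \pi^*g\}_2^{(\beta)} = -\iota_{X_f \wedge X_g}F_\beta$. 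Because $\beta$ is projectively flat on each fiber, $F_\beta$ is a scalar two-form times the identity, so its contraction with the fiber-tangent bivector $X_f \wedge X_g$ is central; hence $C_{f,g} \in C(\mathcal{M}) = Z(A)$, and $\hat{\lambda}$ is a Poisson homomorphism. (Since $\gamma$ is Hermitian, $\lambda(f)$ is Hermitian for real $f$, so the resulting hybrid evolutions are unitary.)

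The main obstacle is the bookkeeping forced by $\gamma$, hence $\beta$, being defined only \emph{along the fibers} rather than on all of $\mathcal{M}$. To apply Remark \ref{rem: two-connections} I would first fix a fiber $F_b = \pi^{-1}(b)$ and check that every term of $C_{f,g}$ — in particular the covariant derivative $\{\pi^*f, \lambda(g)\} = \iota_{X_f}d_\alpha\lambda(g)$ and the curvature contraction — depends only on $\alpha|_{F_b}$, $\gamma|_{F_b}$ and the restrictions of $\lambda(f), \lambda(g)$ to $F_b$; this is exactly what makes the formula of the remark hold verbatim on $F_b$, and it is why projective flatness is needed only fiberwise, since $\{\pi^*f, \pi^*g\}_2^{(\beta)}$ only ever pairs $F_\beta$ with fiber-tangent vectors. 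The remaining point to make explicit is that $x \mapsto \iota_{X_f \wedge X_g} F_\beta|_x$ is a smooth function on $\mathcal{M}$, which follows from the standing smoothness assumption on $\gamma$.
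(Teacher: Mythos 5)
Your proof is correct and follows essentially the same route as the paper: both arguments identify $\lambda$ with the shift $\eta(z)=\iota_{v(z)}\gamma$ of Remark \ref{rem: two-connections} for $\beta=\alpha+i\gamma$, reduce the bracket condition to the statement that $\iota_{v(\pi^*f)\wedge v(\pi^*g)}F_\beta$ is central (which holds by fiberwise projective flatness, since both vector fields are fiber-tangent), and verify multiplicativity from the Leibniz rule $v(\pi^*(fg))=(\pi^*f)v(\pi^*g)+(\pi^*g)v(\pi^*f)$. Your extra care about $\gamma$ being defined only along the fibers is a reasonable refinement of the same argument, not a different approach.
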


\begin{proof}
The curvature of connection $\beta$, evaluated on the commuting
Hamiltonian vector fields $v(H^{(0)}_k)$ and $v(H^{(0)}_l)$ has the form
(see remark \ref{rem: two-connections})
\begin{equation} \label{eq: curvature-for-geometric-hybrid}
    F_\beta \big(v(H^{(0)}_k), v(H^{(0)}_l)\big) = 
        \{H^{(0)}_k, H^{(0)}_l\}_2 - 
            \{H^{(0)}_k, H^{(1)}_l\} + \{H^{(0)}_l, H^{(1)}_k\} -
            i \, [H^{(1)}_k, H^{(1)}_l],
\end{equation}
where we have introduced quantum Hamiltonians \(H^{(1)}_j = \lambda(H^{(0)}_j)\).
In the formula \eqref{eq: curvature-for-geometric-hybrid}, 
the first term comes from the curvature of $\alpha|_{\pi^{-1}(b)}$, and
the rest appear when we add a one-form $\gamma$. Therefore, the projectively flatness 
condition \(F_\beta \in Z(A)\) is equivalent to \eqref{eq: zero-curvature-hybrid-definition}.
Thus, \(H^{(0)} \mapsto (H^{(0)}, \overline{H^{(1)}})\) is a homomorphism of Lie algebras
\(C(B) \to \mathcal{P}(\Gamma(\mathcal{M}, E))\).

Since
\begin{equation}
    \lambda(H^{(0)}_j H^{(0)}_k) = \iota_{v(H^{(0)}_j H^{(0)}_k)} \gamma =
        \iota_{H^{(0)}_j v(H^{(0)}_k) + H^{(0)}_k v(H^{(0)}_j)} \gamma =
            H^{(0)}_j \lambda(H^{(0)}_k) + H^{(0)}_k \lambda(H^{(0)}_j),
\end{equation}
it is also a homomorphism of Poisson algebras, i.e., we have defined a hybrid integrable
system on \(A = \Gamma(\mathcal{M}, E)\).
\end{proof}

Note that in this case, it is enough to check that quantum Hamiltonians 
$H^{(1)}_1, \ldots, H^{(1)}_n$ satisfy the condition 
\eqref{eq: zero-curvature-hybrid-definition} for some particular choice of 
independent classical Poisson commuting Hamiltonians $H^{(0)}_1, \ldots, H^{(0)}_n$.
If we choose another set of independent classical Hamiltonians 
$\tilde{H}^{(0)}_1, \ldots, \tilde{H}^{(0)}_n$
\begin{equation}
    \tilde{H}^{(0)}_k = F_k(H^{(0)}_1, \ldots, H^{(0)}_n),
\end{equation}
then the corresponding quantum Hamiltonians $\tilde{H}^{(1)}_1, \ldots, \tilde{H}^{(1)}_n$
are
\begin{equation}
    \tilde{H}^{(1)}_k = \sum_{l = 1}^n \frac{\partial F_k}{\partial H^{(0)}_l} H^{(1)}_l.
\end{equation}

Integrability condition on $\tilde{H}^{(1)}_1, \ldots, \tilde{H}^{(1)}_n$ follows from
the fact that for $A = \Gamma(\mathcal{M}, E)$ Poisson brackets $\{\cdot, \cdot\}$
and brackets $\{\cdot, \cdot\}_2$ are
derivations on $Z(A)$, so
\begin{multline} \notag
    \{\tilde{H}^{(0)}_k, \tilde{H}^{(0)}_l\}_2 - 
        \{\tilde{H}^{(0)}_k, \tilde{H}^{(1)}_l\} + 
            \{\tilde{H}^{(0)}_l, \tilde{H}^{(1)}_l\} -
                i \, [\tilde{H}^{(1)}_k, \tilde{H}^{(1)}_l] = \\ =
    \sum_{m = 1}^n \sum_{p = 1}^n 
        \frac{\partial F_k}{\partial H^{(0)}_m} \frac{\partial F_l}{\partial H^{(0)}_p}
        \left(
            \{H^{(0)}_m, H^{(0)}_p\}_2 - 
                \{H^{(0)}_m, H^{(1)}_p\} + 
                    \{H^{(0)}_p, H^{(1)}_m\} - i \, [H^{(1)}_m, H^{(1)}_p]
        \right) \in Z(A).
\end{multline}

\subsection{The multi-time evolution in Lagrangian modules.}

Let \(A = \Gamma(\mathcal{M}, E)\) be a hybrid algebra represented in a hybrid
module \(\mathcal{H} = \Gamma(\mathcal{M}, V)\). Assume that the connection \(\alpha\)
defining the hybrid algebra structure on \(A\) be flat on Hamiltonian vector fields
corresponding to classical Hamiltonians \(H^{(0)}_1, \ldots, H^{(0)}_n\). 
If the quantum Hamiltonians \(H^{(1)}_1, \ldots, H^{(1)}_n\) satisfy the
zero curvature condition
\begin{equation}
    \{H^{(0)}_k, H^{(1)}_l\} - \{H^{(0)}_l, H^{(1)}_k\} + i [H^{(1)}_k, H^{(1)}_l] = 0,
\end{equation}
then the corresponding hybrid integrable system has integrable Schr\"odinger
representation in a hybrid module \(\mathcal{H}\).

In particular, in this case of flat \(\alpha\), one can choose the quantum Hamiltonians 
$H^{(1)}_k = 0$ for all $k = 1, \ldots, n$. Both hybrid Heisenberg and Schr\"odinger
dynamics in this case are just the lifts of the Hamiltonian dynamics generated by
$\{H^{(0)}_k\}$ to sections of $E$ and $V$ respectively.

Now let us show that in this situation the Schr\"odinger picture of hybrid integrable evolution restricts
to Lagrangian module $\mathcal{H}^L_B$.

Indeed, let $F(\mathbf{t}, x) = h_{x, x(\mathbf{t})} f(x(\mathbf{t}), \mathbf{t})$,
where $x(\mathbf{0}) = x$, $x(\mathbf{t})$ is the multi-time evolution generated by $\{H^{(0)}_k\}$,
\(h_{x, x(\mathbf{t})}\) is the holonomy of \(\alpha\) for any path connecting \(x\) and \(x(\mathbf{t})\)
(it does not depend on the path because the connection \(\alpha\) is flat),
and $f(\mathbf{t})$ is the multi-time evolution (\ref{evolution-3}) in $\mathcal{H}$. 
For $F(\mathbf{t}, x)$ we have (compare with section \ref{sec:The evolution of hybrid states})
\begin{equation}
    \label{evolution-4}
    \frac{\partial F(\mathbf{t}, x)}{\partial t_k} =
        -i h_{x, x(\mathbf{t})} H^{(1)}_k(x(\mathbf{t})) h_{x(\mathbf{t}), x} F(\mathbf{t}, x),
\end{equation}
here $F(\mathbf{t}, x) \in V_{x(\mathbf{t})}$, where $V_x$ is the fiber of $V$ over $x$.

Differential equations (\ref{evolution-4}) also define the dynamics of Lagrangian states similar
to the "one time" dynamic described in section \ref{sec:The evolution of Lagrangian states}.
Let $\phi_\mathbf{t} \colon x \mapsto x(\mathbf{t})$ where $x = x(0)$, be the multi-time
evolution on $\mathcal{M}$.
Let $L(\mathbf{t}) = \phi_{\mathbf{t}}(L)$ be the multi-time evolution of the Lagrangian subspace $L$.
The multi-time integrable evolution of vectors in $\mathcal{H}^L_B$ is a family
$\varphi(\mathbf{t}, b) \in \mathcal{H}^{L(\mathbf{t})}_B$ given by solutions to:
\begin{equation}
    \frac{\partial \varphi(\mathbf{t}, b)}{\partial t_k} =
        -i h_{x, x(\mathbf{t})} H^{(1)}_k(x(\mathbf{t})) h_{x(\mathbf{t}), x} \varphi(\mathbf{t}, b),
            \quad \varphi(0, b) \in \mathcal{H}^L_B.
\end{equation}
Here $x(\mathbf{t})$ is a multi-time evolution connecting $L$ and $\pi^{-1}(b)$ in times
$\mathbf{t}$, i.e. $x(\mathbf{t}) \in L(\mathbf{t}) \cap \pi^{-1}(b)$,
and $\varphi(\mathbf{t}, b) \in V_{x(\mathbf{t})}$.

In the important particular example where the bundles \(E\) and \(V\) are trivial
and connection \(\alpha = 0\), the multitime evolution \eqref{evolution-4} will have the form
\begin{equation}
    \frac{\partial F(\mathbf{t}, x)}{\partial t_k} =
        -i H^{(1)}_k(x(\mathbf{t})) F(\mathbf{t}, x),
\end{equation}
and the compatibility condition for this multi-time flow is given by zero-curvature condition
\begin{equation} \label{eq: zero-curvature-for-M-operators-x(t)}
    \left[
        \frac{\partial}{\partial t_j} + i H^{(1)}_j(x(\mathbf{t})),
            \frac{\partial}{\partial t_k} + i H^{(1)}_k(x(\mathbf{t}))
    \right] = 0.
\end{equation}

\subsection{Hybrid integrable systems and deformation quantization}
Now let us describe how hybrid integrable systems appear naturally in the context of deformation 
quantization. We will start with an auxiliary lemma.

\begin{lemma} \label{lem: semiclassically-commutative}
Let $\hat{\mathcal{B}}$ be a commutative subalgebra in semiclassically hybrid
subalgebra $A_\hbar^{\mathrm{SH}} \subset A_\hbar$. Define $\mathcal{B}$ as the algebra generated by
classical parts of operators in $\hat{\mathcal{B}}$
\begin{equation}
    \mathcal{B} = 
        \langle F^{(0)} \mid 
            \phi_\hbar(\hat{F}) = F^{(0)} + O(\hbar), \,
            \hat{F} \in \hat{\mathcal{B}} \rangle \subset Z(A_0).
\end{equation}
Then $\mathcal{B}$ is a Poisson commutative subalgebra in $Z(A_0)$.
\end{lemma}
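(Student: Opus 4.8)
The plan is to expand the commutativity relation $[\hat{F}, \hat{G}] = 0$ in $A_\hbar$ to leading order in $\hbar$ for arbitrary $\hat{F}, \hat{G} \in \hat{\mathcal{B}}$, and read off that the induced bracket on $Z(A_0)$ vanishes on the classical parts. Concretely, write $\phi_\hbar(\hat{F}) = F^{(0)} + \hbar F^{(1)} + O(\hbar^2)$ and $\phi_\hbar(\hat{G}) = G^{(0)} + \hbar G^{(1)} + O(\hbar^2)$ with $F^{(0)}, G^{(0)} \in Z(A_0)$. Using the $*$-commutator expansion, as already done in the derivation of \eqref{eq: hbar-limit-c_kl},
\begin{equation}
    0 = \phi_\hbar([\hat{F}, \hat{G}]) = [\phi_\hbar(\hat{F}), \phi_\hbar(\hat{G})]_* = -i\hbar \{F^{(0)}, G^{(0)}\} + O(\hbar^2),
\end{equation}
where I have used that $[F^{(0)}, G^{(0)}] = 0$ in $A_0$ (both are central) so the $\hbar^0$ term vanishes, and that the $\hbar^1$ term of the $*$-commutator of two central elements is exactly $m_1(F^{(0)},G^{(0)}) - m_1(G^{(0)},F^{(0)}) = -i\{F^{(0)},G^{(0)}\}$ by the definition of the Poisson bracket on $Z(A_0)$ from section \ref{sec: deformation_quantization}. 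Hence $\{F^{(0)}, G^{(0)}\} = 0$ for all such $F^{(0)}, G^{(0)}$.

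Next I would upgrade this pointwise statement to a statement about the generated subalgebra. The set $S = \{F^{(0)} \mid \phi_\hbar(\hat F) = F^{(0)} + O(\hbar),\ \hat F \in \hat{\mathcal{B}}\}$ lies in $Z(A_0)$, and $\mathcal{B} = \langle S \rangle$ is the (associative, commutative) subalgebra it generates. Since the Poisson bracket on $Z(A_0)$ satisfies the Leibniz rule $\{zw, x\} = z\{w,x\} + w\{z,x\}$ (Lemma \ref{lem:C} or the general properties listed in section \ref{sec: deformation_quantization}), and $\{z,w\} = -\{w,z\}$, the subset of $Z(A_0)$ that Poisson-commutes with all of $S$ is itself closed under multiplication and contains $S$; therefore it contains $\mathcal{B}$. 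Applying this twice: every element of $\mathcal{B}$ Poisson-commutes with every element of $S$, and then (again by Leibniz, viewing the roles reversed) every element of $\mathcal{B}$ Poisson-commutes with every element of $\mathcal{B}$. Thus $\mathcal{B}$ is a Poisson commutative subalgebra of $Z(A_0)$.

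One small point to address along the way: $S$ need not be a vector subspace a priori, but since $\hat{\mathcal{B}}$ is a subalgebra of $A_\hbar^{\mathrm{SH}}$, for $\hat F, \hat G \in \hat{\mathcal{B}}$ we have $\hat F + \hat G, \hat F \hat G \in \hat{\mathcal{B}}$ with classical parts $F^{(0)} + G^{(0)}$ and $F^{(0)} G^{(0)}$ respectively (using the product formula for semiclassically hybrid elements from section \ref{evolution}), and scalar multiples stay in $\hat{\mathcal{B}}$; so in fact $S$ is already a subalgebra, and $\mathcal{B} = S$. This makes the generated-subalgebra step essentially cosmetic, but it is worth noting so that the Leibniz argument is clean.

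I do not anticipate a genuine obstacle here; the only thing requiring care is bookkeeping of which order in $\hbar$ carries which identity, and making sure that the $\hbar^0$ term of $[\phi_\hbar(\hat F), \phi_\hbar(\hat G)]_*$ really vanishes — this uses crucially that $F^{(0)}, G^{(0)}$ are \emph{central} in $A_0$, not merely that $\hat F, \hat G$ commute in $A_\hbar$. If one only knew $[\hat F, \hat G]=0$ without centrality of the classical parts, the $\hbar^0$ term would be $[F^{(0)}, G^{(0)}]$ and one would conclude $[F^{(0)},G^{(0)}]=0$ in $A_0$ first, then pass to the $\hbar^1$ term; but in our setting $\hat{\mathcal{B}} \subset A_\hbar^{\mathrm{SH}}$ guarantees $F^{(0)} \in Z(A_0)$ by definition of semiclassically hybrid elements, so this is automatic.
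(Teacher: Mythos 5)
Your proposal is correct and follows essentially the same route as the paper: expand $\phi_\hbar([\hat F,\hat G])=0$ to order $\hbar$, use centrality of the classical parts to kill the $\hbar^0$ term, and observe that $\mathcal{B}$ coincides with the set of classical parts because $\hat{\mathcal{B}}\subset A_\hbar^{\mathrm{SH}}$ is closed under sums and products. The extra Leibniz argument for the generated subalgebra is, as you note yourself, redundant once one sees that the set of classical parts is already a subalgebra, which is exactly the observation the paper makes first.
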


\begin{proof}
Since $\phi_\hbar(\hat{F} + \hat{G}) = F^{(0)} + G^{(0)} + O(\hbar)$ and
$\phi_\hbar(\hat{F} \hat{G}) = F^{(0)} G^{(0)} + O(\hbar)$ for 
$\hat{F}, \hat{G} \in \hat{\mathcal{B}} \subset A_\hbar^{\mathrm{SH}}$, 
$\mathcal{B} \subset Z(A_0)$ is a subalgebra. 

The commutativity of $\hat{F}, \hat{G} \in \hat{\mathcal{B}}$
\begin{equation}
    [\hat{F}, \hat{G}] = 0
\end{equation}
implies that
\begin{equation}
    \phi_\hbar([\hat{F}, \hat{G}]) = 0 = 
        [F^{(0)} + O(\hbar), G^{(0)} + O(\hbar)]_* = 
            -i \hbar \{F^{(0)}, G^{(0)}\} + O(\hbar^2).
\end{equation}
Then, $\{F^{(0)}, G^{(0)}\} = 0$ for arbitrary $F^{(0)}, G^{(0)} \in \mathcal{B}$,
which completes the proof.
\end{proof}

\begin{theorem} \label{th: deformation-to-hybrid-integrable}
Let \(A_0\) be a hybrid algebra, and assume that $Z(A_0)$ is the Poisson algebra
of rank \(2n\) with the trivial Poisson center. 
Let $\hat{\mathcal{B}}$ be a commutative subalgebra in
$A_\hbar^{\mathrm{SH}}$ such that $\mathcal{B}$ is the maximal Poisson commutative
subalgebra in $Z(A_0)$, defining a classical integrable system. 
Then $\hat{\mathcal{B}}$ defines a hybrid integrable system on $A_0$.
\end{theorem}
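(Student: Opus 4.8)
The plan is to produce, out of the commutative subalgebra $\hat{\mathcal B}\subset A_\hbar^{\mathrm{SH}}$, the homomorphism of Poisson algebras $\hat\lambda\colon\mathcal B\to\mathcal P(A_0)$ required by the definition of a hybrid integrable system. Fix the linear isomorphism $\phi_\hbar\colon A_\hbar\to A_0$ and, for a semiclassically hybrid $\hat F$, write $\phi_\hbar(\hat F)=F^{(0)}+\hbar F^{(1)}+O(\hbar^2)$ with $F^{(0)}\in Z(A_0)$ and $F^{(1)}\in A_0$; set $\Phi(\hat F)=(F^{(0)},\overline{F^{(1)}})\in\mathcal P(A_0)$. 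First I would check that $\Phi$ is a homomorphism of commutative algebras on $A_\hbar^{\mathrm{SH}}$: additivity is immediate, and the product expansion $\phi_\hbar(\hat F\hat G)=F^{(0)}G^{(0)}+\hbar\bigl(F^{(0)}G^{(1)}+G^{(0)}F^{(1)}+m_1(F^{(0)},G^{(0)})\bigr)+O(\hbar^2)$ from Section \ref{sec: deformation_quantization} is precisely the Mikhailov--Vanhaecke multiplication \eqref{eq: P(A)-deformation-multiplication}. The noncommutativity of the source does no harm after the projection to $A_0/Z(A_0)$, since $\phi_\hbar([\hat F,\hat G])=-i\hbar\{F^{(0)},G^{(0)}\}+O(\hbar^2)$, so the subleading symbols of $\hat F\hat G$ and $\hat G\hat F$ differ only by a central element.

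Next I would descend $\Phi$ along the classical-part map. Since $\Phi$ is multiplicative, the image of $\hat{\mathcal B}$ under $\hat F\mapsto F^{(0)}$ is a subalgebra of $Z(A_0)$ containing its own generating set, hence equals $\mathcal B$; thus $\hat{\mathcal B}\twoheadrightarrow\mathcal B$ is a surjective algebra map. To obtain $\hat\lambda\colon\mathcal B\to\mathcal P(A_0)$ it then suffices to show $\Phi$ kills the kernel, i.e. that $R^{(1)}\in Z(A_0)$ whenever $\hat R\in\hat{\mathcal B}$ has $R^{(0)}=0$; the induced map $\hat\lambda$ then has the shape $F^{(0)}\mapsto(F^{(0)},\lambda(F^{(0)}))$ demanded in the definition. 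A more hands-on alternative, which sidesteps this descent issue, is to fix independent generators $H^{(0)}_1,\dots,H^{(0)}_n$ of $\mathcal B$, lift them to pairwise commuting $\hat H_1,\dots,\hat H_n\in\hat{\mathcal B}$, put $\lambda(H^{(0)}_k)=\overline{H^{(1)}_k}$, and extend by the chain rule $\lambda\bigl(F(H^{(0)}_1,\dots,H^{(0)}_n)\bigr)=\sum_l\tfrac{\partial F}{\partial H^{(0)}_l}\,\overline{H^{(1)}_l}$; independence of the chosen generators is then the same calculation as in the geometric-example theorem above, and the two constructions agree where both apply.

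It remains to verify that $\hat\lambda$ is a homomorphism of Lie algebras, after which, being also a homomorphism of commutative algebras into the Poisson algebra $\mathcal P(A_0)$, it is automatically a homomorphism of Poisson algebras. As $\mathcal B$ is Poisson commutative, this reduces to $\{\hat\lambda(F^{(0)}),\hat\lambda(G^{(0)})\}_{\mathcal P(A_0)}=0$ for $\hat F,\hat G\in\hat{\mathcal B}$. The $Z(A_0)$-component of that bracket is $\{F^{(0)},G^{(0)}\}$, which vanishes by Lemma \ref{lem: semiclassically-commutative}; the $A_0/Z(A_0)$-component is the class of the zero-curvature expression $\{F^{(0)},G^{(1)}\}-\{G^{(0)},F^{(1)}\}+i[F^{(1)},G^{(1)}]-\{F^{(0)},G^{(0)}\}_2$ of \eqref{eq: zero-curvature-hybrid-definition}. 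But this expression is exactly (up to a nonzero overall scalar) the $\hbar^2$-coefficient in the expansion of $\phi_\hbar([\hat F,\hat G])$ computed in Section \ref{sec:hybrid-integrable-systems}, see \eqref{eq: hbar-limit-c_kl}, and it therefore vanishes identically because $[\hat F,\hat G]=0$ in $A_\hbar$. Hence $\hat\lambda$ is a hybrid integrable system on $A_0$, with classical Hamiltonians the generators of $\mathcal B$ and quantum Hamiltonians $H^{(1)}_k$ read off from the lifts $\hat H_k$.

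The step I expect to be the real obstacle is the descent in the second paragraph: showing that the subleading symbol $R^{(1)}$ of an element $\hat R\in\hat{\mathcal B}$ with vanishing classical part is central. Expanding $[\hat R,\hat F]=0$ for $\hat F\in\hat{\mathcal B}$ to order $\hbar^2$ yields only the relation $\{F^{(0)},R^{(1)}\}=-i[F^{(1)},R^{(1)}]$ for all $\hat F\in\hat{\mathcal B}$, and converting this into centrality of $R^{(1)}$ is where one must invoke the maximality of $\mathcal B$ among Poisson-commutative subalgebras of $Z(A_0)$ together with the Azumaya-type finiteness of $A_0$ over its center. Should that require further hypotheses, one simply falls back on the generator-and-chain-rule construction, whose only nontrivial point is the routine check of independence of the system of generators; none of the remaining steps are affected.
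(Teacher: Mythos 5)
Your proposal is correct and follows essentially the same route as the paper: the paper's proof is precisely your ``hands-on alternative'' --- it picks independent classical Hamiltonians $H^{(0)}_1,\dots,H^{(0)}_n$, lifts them to commuting $\hat H_k\in\hat{\mathcal B}$, reads off $H^{(1)}_k$ from $\phi_\hbar$, checks multiplicativity against the Mikhailov--Vanhaecke product via the expansion of $\phi_\hbar(\hat H_j\hat H_k)$, and extracts Poisson commutativity and the zero-curvature condition from the $\hbar$ and $\hbar^2$ coefficients of $\phi_\hbar([\hat H_k,\hat H_l])=0$. The descent/well-definedness issue you flag (centrality of $R^{(1)}$ when $R^{(0)}=0$) is not addressed in the paper, which simply works with a chosen set of generators, so your concern is if anything a refinement beyond what the published argument does.
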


\begin{proof}
Lemma \ref{lem: semiclassically-commutative} shows that $\mathcal{B}$ is Poisson commutative 
subalgebra in $Z(A_0) = C(\mathcal{M})$. It defines a classical integrable system on 
$\mathcal{M}$, then we can choose independent classical Hamiltonians 
$H^{(0)}_1, \ldots, H^{(0)}_n$ defining this classical integrable system.
Choose any quantum Hamiltonians $\hat{H}_1, \ldots, \hat{H}_n \in \hat{\mathcal{B}}$, 
such that the classical Hamiltonians $H^{(0)}_1, \ldots, H^{(0)}_n$ are their classical counterparts
\begin{equation}
    \phi_\hbar(\hat{H}_j) = H^{(0)}_j + O(\hbar).
\end{equation}

The elements $\hat{H}_1, \ldots, \hat{H}_n \in \hat{\mathcal{B}}$ generate multi-time 
flow on $A_\hbar$
\begin{equation}\label{eq:Heis_dyn_mult}
    -i \hbar \frac{\partial \hat{a}(\mathbf{t})}{\partial t_k} =
        [\hat{H}_k, \hat{a}(\mathbf{t})], \quad \hat{a}(\mathbf{t}) \in A_\hbar.
\end{equation}
The commutativity of $\hat{H}_k$
\begin{equation}
    \label{quantum-commute}
    [\hat{H}_k, \hat{H}_l] = 0.
\end{equation}
is the compatibility condition of these time flows. 

Quantum Hamiltonians $\hat{H}_1, \ldots, \hat{H}_n$ are quasiclassically hybrid
\begin{equation}
    \phi_\hbar \colon \hat{H}_k \mapsto H_k^{(0)} + \hbar H_k^{(1)} + O(\hbar^2), \quad
        H_k^{(0)} \in Z(A_0), \ H_k^{(1)} \in A_0,
\end{equation}
thus, in the limit $\hbar \to 0$, they define a quantum Hamiltonians $H^{(1)}_k$ for 
the classical Hamiltonians $H^{(0)}_k$, and the equations \eqref{eq:Heis_dyn_mult} become
the multi-time hybrid Heisenberg dynamics on $A_0$
\begin{equation} \label{eq: multi-time_dynamics_from_deformation}
    \frac{\partial a(\mathbf{t})}{\partial t_k} =
        \{H_k^{(0)}, a(\mathbf{t})\} + i [H_k^{(1)}, a(\mathbf{t})], \quad a(\mathbf{t}) \in A_0.
\end{equation}
So we have a mapping \(H^{(0)}_j \mapsto \lambda(H^{(0)}_j) = H^{(1)}_j\) defined on the
independent set of classical Hamiltonians \(H^{(0)}_1, \ldots, H^{(0)}_n\). The map
\(H^{(0)}_j \mapsto \hat{\lambda}(H^{(0)}_j) = (H^{(0)}_j, \overline{\lambda(H^{(0)}_j)})\)
defines a homomorphism of commutative algebras \(\mathcal{B}\) and \(\mathcal{P}(A)\) (with the
multiplication rule \eqref{eq: P(A)-deformation-multiplication}), because
\begin{gather}
    \phi_\hbar(\hat{H}_j + \hat{H}_k) = 
        (H^{(0)}_j + H^{(0)}_k) + \hbar (H^{(1)}_j + H^{(1)}_k) + O(\hbar^2)
        \ \Rightarrow \ 
        \lambda(H^{(0)}_j + H^{(0)}_k) = H^{(1)}_j + H^{(1)}_k, \\
    \phi_\hbar(\hat{H}_j \hat{H}_k) = H^{(0)}_j H^{(0)}_k +
        \hbar (H^{(0)}_j H^{(1)}_k + H^{(0)}_k H^{(1)}_j + m_1(H^{(0)}_j, H^{(0)}_k)) + O(\hbar^2).
\end{gather}

Apply $\phi_\hbar$ to the condition \eqref{quantum-commute} and rewrite it via $*$-commutator
\eqref{eq: star-commutator}
\begin{equation}
    \phi_{\hbar} \big( \big[ \hat{H}_j, \hat{H}_k \big] \big) = 0 =
    \big[ H_j^{(0)} + \hbar H_j^{(1)} + O(\hbar^2), \
        H_k^{(0)} + \hbar H_k^{(1)} + O(\hbar^2) \big]_*.
\end{equation}
Expand the RHS in $\hbar$, this leads to a sequence of conditions:
\begin{align}
    \label{eq:comm1}
    \{H_k^{(0)}, H_l^{(0)}\} &= 0, \\
    \label{eq:flat-conn-deformation-quantization}
    \{H_k^{(0)}, H_l^{(1)}\} - \{H_l^{(0)}, H_k^{(1)}\} &+
        i [H_k^{(1)}, H_l^{(1)}] - \{H^{(0)}_k, H^{(0)}_l\}_2 = 0.
\end{align}
The condition \eqref{eq:comm1} together with independence of $H^{(0)}_k$ defines a classical
integrable system on $\mathcal{M}$. The identity \eqref{eq:flat-conn-deformation-quantization}
is the zero curvature condition \eqref{eq: zero-curvature-hybrid-definition}, which
guarantees that the map \(\hat{\lambda} \colon \mathcal{B} \to \mathcal{P}(A_0)\) 
is the homomorphism of Poisson algebras, and the corresponding hybrid system is integrable.

Note that \eqref{eq:flat-conn-deformation-quantization} also guarantees that the Heisenberg
dynamics \eqref{eq: multi-time_dynamics_from_deformation} can be evaluated and gives 
Schr\"odinger dynamics in representation.
\end{proof}

Note that if we choose another set of classical Hamiltonians
\begin{equation}
    \tilde{H}^{(0)}_k = F_k(H^{(0)}_1, \ldots, H^{(0)}_n),
\end{equation}
then the elements $F_k(\hat{H}_1, \ldots, \hat{H}_n)$ will be the elements in $\hat{\mathcal{B}}$,
corresponding to $\tilde{H}^{(0)}_k$. Since $\hat{\mathcal{B}}$ is the subalgebra in the 
algebra of semiclassically hybrid elements $A_\hbar^{\mathrm{SH}}$, $\phi_\hbar$ gives the
set of quantum Hamiltonians $\tilde{H}^{(1)}_k$ for classical Hamiltonians $\tilde{H}^{(0)}_k$
\begin{equation}
    F_k(\hat{H}_1, \ldots, \hat{H}_n) = F_k(H^{(0)}_1, \ldots, H^{(0)}_n) +
        \hbar F^{(1)}_k (H^{(0)}_1, \ldots, H^{(0)}_n, H^{(1)}_1, \ldots, H^{(1)}_n) + O(\hbar^2) =
        \tilde{H}^{(0)}_k + \hbar \tilde{H}^{(1)}_k + O(\hbar^2).
\end{equation}

For example, if we choose
\begin{equation}
    \tilde{H}^{(0)}_k = (H^{(0)}_k)^2,
\end{equation}
then
\begin{equation}
    \phi_\hbar((\hat{H}_k)^2) = (H^{(0)}_k)^2 + 
        \hbar \big(2 H^{(0)}_k H^{(1)}_k + m_1(H^{(0)}_k, H^{(0)}_k) \big) + O(\hbar^2).
\end{equation}
Therefore, the corresponding quantum Hamiltonian $\tilde{H}^{(1)}_k$ will be
\begin{equation}
    \tilde{H}^{(1)}_k = 2 H^{(0)}_k H^{(1)}_k + m_1(H^{(0)}_k, H^{(0)}_k).
\end{equation}

\section{The semiclassical asymptotic of a hybrid matrix Schr\"odinger equation} \label{Matrix}

\subsection{The nonstationary semiclassical asymptotic.}
The goal of this section is to describe semiclassical solutions to
the non-stationary matrix-valued Schr\"odinger equation when quantum Hamiltonian is
semiclassically proportional to the identity matrix.
The results of this section are contained in \cite{Mas, MF} where
they appear as part of a more general theory. See also \cite{BDT} where a related problem for infinite-dimensional fibers was addressed.

Consider a quantum mechanical system with the quantum algebra of observables being
$\hbar$-differential operators $D(\mathbb{R}^n, \mathrm{End}(L))$ with values in
$\mathrm{End}(L)$ where $L$ is a Hilbert space.
Elements of this algebra are differential operators of the form
$P(-i \hbar \tfrac{\partial}{\partial q}, q)$
\footnote{Here we use Weyl ordering.}
with coefficients being $\mathrm{End}(L)$-valued function on $\mathbb{R}^n$.
Here we assume that $L$ is $\mathbb{C}^N$ with the standard Hermitian structure.
\footnote{
In a more general case, one can consider a nontrivial vector bundle $V$.
}

Assume that as $\hbar \to 0$ the Hamiltonian of the system has the following structure
\begin{equation}\label{hyb-Sch}
    \hat{H} = H^{(0)}(p, q) I + \hbar H^{(1)}(p, q) + O(\hbar^2),
\end{equation}
where $H^{(0)} \in  C^\infty_{pol}(T^*\mathbb{R}^n)$ is the symbol of $\hat{H}$,
$I$ is the identity operator in $\mathbb{C}^N$ and
$H^{(1)}$ is a matrix-valued function on $T^*\mathbb{R}^n$,
i.e. $\hat{H}_\hbar$ is semiclassically hybrid.
It defines a hybrid integrable system with the bundle
of hybrid observables
$E = T^* \mathbb{R}^n \times \mathrm{End}(L)$
with a trivial flat connection.

Let us describe semiclassical solutions to the Schr\"odinger equation
\begin{equation}\label{nse-M}
    i \hbar \frac{\partial}{\partial t} \psi(t, q) = \hat{H} \psi(t, q),
   \end{equation}
with initial conditions
\begin{equation}\label{nse-in}
    \psi(0, q) = e^{\frac{i}{\hbar} f(q)} \varphi(q).
\end{equation}

Let $\phi_t \colon T^*Q \to T^*Q$ be the time evolution
generated by $H^{(0)} = \sum_{k = 0}^n A_k(q) p^k \in C(T^*Q)$. It acts as
$\phi_t \colon x \mapsto x(t)$ where $x(t)$ is the time evolution, i.e. the solution to
Hamilton's equations for $H_0$ with $x(0) = x$.

For a smooth function $f \colon \mathcal{M} \to \mathbb{R}$ define the Lagrangian submanifold
$L_f = \{(p, q) \mid p = df(q)\}$.  It remains Lagrangian with the evolution.
Assume that the Lagrangian submanifolds $\phi_t(L_f)$ and $T^*_qQ$ intersect transversally
over finitely many points.

Let $\sigma_\alpha = \{p^\alpha(\tau), q^\alpha(\tau)\}_{\tau = 0}^t$ be classical trajectories
connecting  Lagrangian submanifolds $L_f$ and $T^*_qQ$ in time $t$. They correspond to intersection
points $\phi_t(L_f) \cap T^*_q Q$. Denote by $q^\alpha_0(q, t) = q^\alpha(0) \in L_f$
initial points of these trajectories.

For a parametrized path
$\sigma \colon [0, t] \to T^*Q, \ \tau \mapsto (p(\tau), q(\tau)), \ 0 \le \tau \le t$
we have the Hamilton--Jacobi action
\begin{equation}\label{HJ}
    S[\sigma] =
        \int_0^t \big(p(\tau) \dot{q}(\tau) - H^{(0)}(p(\tau), q(\tau))\big) d\tau + f(q(0)).
\end{equation}

Fix $q^\alpha(0) = q_0$ in the trajectory $\sigma_\alpha = \{p^\alpha(\tau), q^\alpha(\tau)\}_{\tau = 0}^t$
and denote by $\Psi^\alpha(q_0, t)$ the solution to the vector-valued ODE
\begin{equation}
    \frac{d}{dt} \Psi^\alpha(q_0, t) =
        -i H^{(1)}\left(p^\alpha(t), q^\alpha(t) \right) \Psi^\alpha(q_0, t),
\end{equation}
with the initial condition
\begin{equation}
    \Psi^\alpha(q_0, 0) = \varphi(q_0).
\end{equation}

\begin{theorem}\label{thr:asympwavefun}
\footnote{This theorem can be found in \cite{Do, MF}. We outline the proof see Appendix \ref{sec:MatrixA}.}
As $\hbar \to 0$, the solution to (\ref{nse-M}) with the initial condition (\ref{nse-in})
has the following asymptotic
\begin{equation}\label{ss-as}
    \psi(q, t) = \sum_\alpha
          D^\alpha(q, t) \exp\left(\frac{i S^\alpha(q, t)}{\hbar} + i \frac{\pi}{4} \mu_\alpha\right)
           \Psi^\alpha(q_0^\alpha(q, t),t) (1 + O(\hbar)),
\end{equation}
where $S^\alpha(q, t)$ is the critical value of the modified Hamilton--Jacobi action on the
trajectory $\sigma_\alpha$, connecting $L_f$ and $T^*_q Q$ in time $t$, $q_0^\alpha(q, t)$ is
the initial point of this trajectory,
$D^\alpha(q, t) = \left| \frac{\partial q^\alpha_0(q, t)}{\partial q} \right|^{\frac{1}{2}}$,
$\Psi^\alpha(q, t)$ is defined above,
and $\mu_\alpha$ is the Morse index of the trajectory $\sigma_\alpha$, also known as the Maslov index.
\end{theorem}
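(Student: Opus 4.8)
The plan is to construct the asymptotic by the WKB / Maslov canonical operator method adapted to the $\CC^N$-valued setting, the key structural input being that the principal symbol $H^{(0)}I$ is a scalar matrix, so that the eikonal equation and the leading transport equation are ordinary scalar equations while the matrix part $H^{(1)}$ enters only as a source carried along the classical trajectories.

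Away from caustics I would substitute the one-phase ansatz $\psi(q,t) = e^{iS(q,t)/\hbar} a(q,t)$, with $S$ real scalar and $a = \sum_{k\ge 0}\hbar^k a_k$ a $\CC^N$-valued amplitude, into (\ref{nse-M}) with $\hat H$ Weyl-quantized, and collect powers of $\hbar$. At order $\hbar^0$ one gets the Hamilton--Jacobi equation $\partial_t S + H^{(0)}(\partial_q S, q) = 0$, which is a genuine scalar eikonal equation precisely because $H^{(0)}I$ is scalar. At order $\hbar^1$ one gets the transport equation
\[
    \partial_t a_0 + \sum_j \frac{\partial H^{(0)}}{\partial p_j}(\partial_q S, q)\, \partial_{q_j} a_0 + \tfrac12\Big(\sum_{j,k}\frac{\partial^2 H^{(0)}}{\partial p_j \partial p_k}(\partial_q S, q)\, \partial^2_{q_j q_k} S\Big) a_0 + i\, H^{(1)}(\partial_q S, q)\, a_0 = 0 .
\]
The first three terms form the classical continuity equation along the Hamiltonian flow of $H^{(0)}$; writing $a_0 = D\, \Psi$ with $D$ the scalar half-density factor solving that continuity equation, the vector factor $\Psi$ then satisfies the linear ODE $\frac{d}{dt}\Psi = -i\, H^{(1)}(p(t),q(t))\, \Psi$ along the characteristic, which is exactly the ODE defining $\Psi^\alpha$.

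Next I would globalize by the method of characteristics. The bicharacteristics of $\partial_t S + H^{(0)} = 0$ with Lagrangian initial data $L_f = \{p = df(q)\}$ are the classical trajectories $\sigma_\alpha$ of $H^{(0)}$; the flow-out $\phi_t(L_f)$ is a Lagrangian submanifold whose local generating functions are the critical values of the modified Hamilton--Jacobi action (\ref{HJ}), which produces $S^\alpha(q,t)$. When $\phi_t(L_f)$ projects diffeomorphically to the base near $(q,t)$ there is a single branch; in general there are several preimages, indexed by $\alpha$, and the half-density factor for branch $\alpha$ is $D^\alpha(q,t) = |\partial q_0^\alpha/\partial q|^{1/2}$. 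Integrating the $\Psi$-ODE along $\sigma_\alpha$ with initial value $\varphi(q_0^\alpha)$ gives $\Psi^\alpha$. To handle trajectories passing through caustics of $\phi_t(L_f)$ before reaching $T^*_qQ$, I would patch the local representations via the Maslov canonical operator: near a caustic one passes by an $\hbar$-Fourier transform to a mixed representation in which the phase is nondegenerate, each passage contributing the factor $e^{-i\pi/2}$, so that the accumulated phase along $\sigma_\alpha$ is $e^{i\pi\mu_\alpha/4}$ with $\mu_\alpha$ the Maslov index. Because $H^{(0)}I$ is scalar, the Fourier-integral conjugations implementing these transitions act as scalars on $\CC^N$, so $\Psi^\alpha$ is transported through each caustic unchanged --- this is where the hypothesis ``principal symbol proportional to the identity'' is essential. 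At $t = 0$ the intersection $L_f \cap T^*_qQ$ is the single point $(df(q),q)$, so $S(\cdot,0) = f$, $D = 1$, $\mu = 0$, $\Psi(\cdot,0) = \varphi$, and the ansatz reduces to (\ref{nse-in}); summing over branches yields (\ref{ss-as}).

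The main obstacle is the rigorous remainder estimate: one must check that the truncated series solves (\ref{nse-M}) up to $O(\hbar)$ in the amplitude and then bound $\psi - \psi_{\mathrm{approx}}$ in $L^2$ uniformly on compact time intervals using unitarity of the exact propagator and a Duhamel argument. The bookkeeping of Maslov indices, and especially the uniform estimates near caustics where the naive van Vleck amplitude $D^\alpha$ diverges and must be replaced by its Fourier-representation counterpart, are the technically delicate points; since these are standard in the Maslov theory we would only outline them here and refer to \cite{Mas, MF} and Appendix \ref{sec:MatrixA} for details.
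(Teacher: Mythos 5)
Your proposal follows essentially the same route as the paper's Appendix \ref{sec:MatrixA}: a WKB ansatz $e^{iS/\hbar}(\chi_0+\hbar\chi_1+\cdots)$, the scalar Hamilton--Jacobi equation at order $\hbar^0$, a transport equation at order $\hbar^1$ whose divergence term is absorbed into the Jacobian half-density $D^\alpha$, leaving the vector ODE $\dot\Psi=-iH^{(1)}(p(t),q(t))\Psi$ along characteristics, and a sum over branches $\alpha$ with Maslov phases. The extra material you sketch (canonical-operator patching at caustics, Duhamel remainder bound) is likewise deferred to \cite{Mas, MF} in the paper, so the two arguments coincide in substance.
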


\subsection{The semiclassical dynamics of hybrid Schr\"odinger integrable systems}
\label{sec:Semiclassical dynamics of matrix hybrid}

Now assume that we have $n$ commuting matrix-valued differential operators
on an $n$-dimensional manifold $Q$ of the form (\ref{hyb-Sch}):
\begin{equation}
    \widehat{H}_k = H^{(0)}_k(p, q) I + \hbar H^{(1)}_k(p, q) + O(\hbar^2), \quad k = 1, \ldots, n,
\end{equation}
i.e. we have a semiclassically hybrid integrable system.

The multi-time evolution $\psi\mapsto \psi(\mathbf{t})$ is a solution to the system of equations
\begin{equation}
    \label{mt_schroedinger}
    i \hbar \frac{\partial \psi(\mathbf{t})}{\partial t_k} =
        \widehat{H}_k \psi(\mathbf{t}), \quad \psi(0) = \psi,
\end{equation}
where $\mathbf{t} = (t_1, \dots, t_n)$.

Let us describe the semiclassical behavior of solutions to the multi-time nonstationary
equation (\ref{mt_schroedinger}) with initial conditions
\begin{equation}
    \label{mt_initial_conditions}
    \psi_0(q) = e^{i \frac{f(q)}{\hbar}} \varphi(q).
\end{equation}

As before let $L_f = \{(p = df(q), q)\} \subset T^* Q$ be the Lagrangian submanifold
which is the graph of the function $df \colon Q \to T^* Q$ and let
$\phi_\mathbf{t} \colon T^* Q \to T^* Q$ be the multi-time evolution generated by
Poisson commuting Hamiltonians $H^{(0)}_k(p, q)$. The image $\phi_\mathbf{t}(L_f)$ with
respect to the multi-time evolution remains Lagrangian submanifold and for generic
$q$ the intersection $\phi_\mathbf{t}(L_f) \cap T^*_q Q$ consists of finitely many points.
Preimages of these points in $L_f$ are initial points of multi-time trajectories $\sigma_\alpha$
connecting $L_f$ and $T^*_q Q$ in multi-time $\mathbf{t}$. Denote these points on $L_f$
by $q_0^{\alpha}(q, \mathbf{t})$. The trajectories $\sigma_\alpha$ are critical points
of the multi-time modified Hamilton--Jacobi action $S_\gamma[\sigma] + f(q(0))$
(see Appendix \ref{sec:appendix_mt_HJ} for details). Denote by $S^\alpha(q, \mathbf{t})$
corresponding critical values.
\begin{theorem}
    The solution to (\ref{mt_schroedinger}) with the initial conditions
    (\ref{mt_initial_conditions}) have the following asymptotic when $\hbar \to 0$:
    \begin{equation}
        \psi(q, \mathbf{t}) =
            \sum_\alpha \left|
                \frac{\partial q_0^{\alpha}(q, \mathbf{t})}{\partial q}
            \right|^{\frac{1}{2}}
            \exp\left(
                \frac{i}{\hbar} S^\alpha(q, \mathbf{t}) +
                    \frac{i \pi}{4} \tilde{\mu}_\alpha
            \right)
            \Psi^{\alpha}(q^\alpha_0(q, \mathbf{t}), \mathbf{t}) (1 + O(\hbar)),
    \end{equation}
    where $S^\alpha(q, \mathbf{t})$ and $q^\alpha_0(q, \mathbf{t})$ are as above,
    $\tilde{\mu}_\alpha$ is the multi-time version of the Maslov index,
    $\Psi^{\alpha}(q_0, \mathbf{t})$ is the solution to the multi-time initial
    value problem
\begin{equation}
\label{eq:dynamic reduced wavefunction}
    \frac{\partial \Psi^{\alpha}(q_0, \mathbf{t})}{\partial t_k} =
        -i H^{(1)}_k\left(p^\alpha(\mathbf{t}), q^\alpha(\mathbf{t})
        \right) \Psi^{\alpha}(q_0, \mathbf{t}), \qquad
            \Psi^\alpha(q_0, 0) = \varphi(q_0).
\end{equation}
Here $\{p^\alpha(\mathbf{\tau}), q^\alpha(\mathbf{\tau})\}$ is a multi-time trajectory
with the initial point $(p_0, q_0) \in L_f$.
\end{theorem}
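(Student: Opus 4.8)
The plan is to deduce the multi-time asymptotic from the single-time Theorem~\ref{thr:asympwavefun} by exploiting the commutativity $[\widehat{H}_k,\widehat{H}_l]=0$. Since the operators commute, the solution of \eqref{mt_schroedinger} with initial data \eqref{mt_initial_conditions} factorizes, $\psi(\mathbf{t})=e^{-\frac{i}{\hbar}t_n\widehat{H}_n}\cdots e^{-\frac{i}{\hbar}t_1\widehat{H}_1}\psi_0$, and the right-hand side does not depend on the order of the factors. First I would apply Theorem~\ref{thr:asympwavefun} to the single-time evolution $e^{-\frac{i}{\hbar}t_1\widehat{H}_1}$: it turns $\psi_0=e^{\frac{i}{\hbar}f}\varphi$ into a finite WKB sum over trajectories of $H^{(0)}_1$, each term again of the form (amplitude)$\times e^{\frac{i}{\hbar}\widetilde f}\times$(matrix factor) with $\widetilde f$ the relevant critical value of the Hamilton--Jacobi action, hence a legitimate input for the next step. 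Iterating over $k=2,\dots,n$, one obtains a sum of WKB contributions indexed by concatenations of single-time trajectories. The bulk of the argument is then to check that the data accumulated along such a concatenation assemble into the multi-time objects in the statement.

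The concatenated classical paths are critical points of the multi-time modified Hamilton--Jacobi functional $S_\gamma[\sigma]+f(q(0))$ of Appendix~\ref{sec:appendix_mt_HJ}. Because $\{H^{(0)}_k,H^{(0)}_l\}=0$ — the $\hbar^1$-consequence of commutativity, cf. \eqref{eq:comm1} — the single-time flows $\phi_{t_k}$ commute and compose to the multi-time flow $\phi_{\mathbf{t}}$, the intermediate Lagrangian submanifolds $\phi_{t_1}(L_f),\phi_{t_2}\phi_{t_1}(L_f),\dots$ and the critical value are independent of the order, and the resulting trajectories $\sigma_\alpha$ are exactly those connecting $L_f$ and $T^*_qQ$ in multi-time $\mathbf{t}$ with critical value $S^\alpha(q,\mathbf{t})$. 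For the scalar prefactor, the order-$\hbar$ transport (continuity) equation of each single step contributes the half-Jacobian $|\partial q^{(k)}_0/\partial q|^{1/2}$, and the chain rule for the composition of the maps $q\mapsto q^{(k)}_0$ multiplies these into $|\partial q_0^\alpha(q,\mathbf{t})/\partial q|^{1/2}$. For the oscillatory prefactor, the single-time Morse (Maslov) indices add up, modulo the standard Maslov cocycle, to the multi-time index $\tilde\mu_\alpha$; here I would invoke the machinery of \cite{Mas,MF}.

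The essentially new ingredient relative to the single-time case is the matrix factor $\Psi^\alpha$. At the $k$-th step Theorem~\ref{thr:asympwavefun} propagates it along the corresponding leg of the trajectory by $\frac{d}{dt_k}\Psi=-iH^{(1)}_k\big(p^\alpha(t),q^\alpha(t)\big)\Psi$, so the concatenation solves the multi-time initial value problem \eqref{eq:dynamic reduced wavefunction}. It remains to verify that this system is consistent along $\sigma_\alpha$, i.e.\ that the accumulated $\Psi^\alpha$ is independent of the order in which the legs are traversed. This is exactly the zero-curvature condition
\begin{equation}
    \left[\frac{\partial}{\partial t_j}+iH^{(1)}_j\big(p^\alpha(\mathbf{t}),q^\alpha(\mathbf{t})\big),\ \frac{\partial}{\partial t_k}+iH^{(1)}_k\big(p^\alpha(\mathbf{t}),q^\alpha(\mathbf{t})\big)\right]=0 ,
\end{equation}
which one obtains by restricting \eqref{eq:flat-conn-deformation-quantization} to the multi-time trajectory, writing $\partial_{t_j}=\partial_p H^{(0)}_j\cdot\partial_q-\partial_q H^{(0)}_j\cdot\partial_p$ along the flow and using $\{H^{(0)}_j,H^{(0)}_k\}=0$; it is the same identity \eqref{eq: zero-curvature-for-M-operators-x(t)} that underlies hybrid integrability (Theorem~\ref{th: deformation-to-hybrid-integrable}). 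The $O(\hbar)$ remainder is inherited: each use of Theorem~\ref{thr:asympwavefun} is uniform for $q$ in compacts and $\mathbf{t}$ in the transversality regime, and composing finitely many such estimates preserves the order. I expect the main obstacle to be the caustic bookkeeping under composition — guaranteeing that for generic $\mathbf{t}$ every intermediate image $\phi_{t_k}(\cdots)$ still meets the fibers $T^*_qQ$ transversally over finitely many points, and that the Maslov indices glue with the correct cocycle signs; once these technical points are settled, the proof is the routine iteration sketched above bolted to the zero-curvature identity. Alternatively, one could bypass the iteration and plug the multi-time WKB ansatz $\psi=\sum_\alpha D^\alpha e^{\frac{i}{\hbar}S^\alpha}\Psi^\alpha$ directly into \eqref{mt_schroedinger}, reading off the multi-time Hamilton--Jacobi system for $S^\alpha$, the transport equation for $D^\alpha$, and \eqref{eq:dynamic reduced wavefunction} for $\Psi^\alpha$ at the successive orders in $\hbar$, with the same compatibility check as the crux.
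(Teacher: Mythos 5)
Your proposal is correct in substance but takes a genuinely different route from the paper. The paper's proof is the direct multi-time WKB expansion: it declares the argument ``entirely parallel'' to the single-time proof of Theorem \ref{thr:asympwavefun} in Appendix \ref{sec:MatrixA}, i.e.\ one substitutes the ansatz $\psi=e^{\frac{i}{\hbar}S}(\chi_0+\hbar\chi_1+\dots)$ into each of the $n$ equations simultaneously and reads off, order by order in $\hbar$, the multi-time Hamilton--Jacobi system (via the action of Appendix \ref{sec:appendix_mt_HJ}), the transport equations for the half-density $D^\alpha$, and the matrix transport \eqref{eq:dynamic reduced wavefunction} --- this is exactly the ``alternative'' you mention in your last sentence. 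Your primary route instead iterates the single-time theorem using $[\widehat{H}_k,\widehat{H}_l]=0$ and then checks that the accumulated data glue into the multi-time objects. This buys a cleaner conceptual reduction to the already-proved case and, importantly, makes explicit something the paper leaves implicit: that the well-posedness of the overdetermined system \eqref{eq:dynamic reduced wavefunction} is precisely the zero-curvature identity \eqref{eq: zero-curvature-for-M-operators-x(t)} obtained by restricting \eqref{eq:flat-conn-deformation-quantization} to the trajectory. The price is that the iteration leans on hypotheses the single-time theorem does not literally provide: after the first step the state is a sum of local WKB branches attached to $\phi_{t_1}(L_f)$, which need not be a graph $\{p=d\tilde f(q)\}$, so re-applying Theorem \ref{thr:asympwavefun} verbatim is not licensed near caustics and one must pass to the Maslov canonical operator calculus of \cite{Mas,MF} to propagate general Lagrangian states and to glue the indices with the correct cocycle; you flag this honestly as the main obstacle, and the direct ansatz of the paper sidesteps it by working with the multi-time trajectories and a single Maslov index $\tilde\mu_\alpha$ from the start. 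One should also note that the order-independence of the concatenated classical data uses only $\{H^{(0)}_k,H^{(0)}_l\}=0$, as you say, and that the $O(\hbar)$ error terms must be carried through each iteration in a form the single-time theorem tolerates (initial amplitudes $\varphi+O(\hbar)$), a routine but necessary strengthening of its statement.
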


The proof is entirely parallel to the proof of Theorem \ref{thr:asympwavefun}.

\section{Semiclassical asymptotic for integrable quantum spin chain}\label{spin-chains}

\subsection{Yangian type algebras and their classical counterparts}

\subsubsection{}
Assume that we have a collection of vector spaces $\{U_\alpha\}$, that for each pair of these vector spaces we have a family of invertible linear operators
$\{R^{U_\alpha U_\beta}(u)\}$ with $u \in \mathbb{C}$, and  that for each triple $\alpha, \beta, \gamma$ linear operators satisfy the Yang--Baxter relations:
\begin{equation}\label{qybe}
    R^{U_\alpha U_\beta}_{\alpha \beta}(u)
        R^{U_\alpha U_\gamma}_{\alpha \gamma}(u + v)
            R^{U_\beta U_\gamma}_{\beta \gamma}(v) =
    R^{U_\beta U_\gamma}_{\beta \gamma}(v)
        R^{U_\alpha U_\gamma}_{\alpha \gamma}(u + v)
            R^{U_\alpha U_\beta}_{\alpha \beta}(u).
\end{equation}
Here, as usual, operators act in $U_\alpha \otimes U_\beta \otimes U_\gamma$ and subindices
show in which factors of the tensor product the linear operator acts non-trivially.

Assume that quantum $R$-matrices $R^{U V}(u) \in \mathrm{End}(U \otimes V)$ are semiclassical,
i.e. each of them depends on a parameter $\hbar$ and as $\hbar \to 0$ it has the asymptotic
\begin{equation} \label{cR}
    R^{UV}(u, \hbar) = 1 + i \hbar r^{UV}(u) + O(\hbar^2),
\end{equation}
where $r^{UV}(u)$ is the corresponding classical $r$-matrix. Second order terms in $\hbar$
of (\ref{qybe}) gives the classical Yang--Baxter relations for $r^{UV}(u)$:
\begin{equation}\label{cybe}
    [r^{UV}_{12}(u), r^{UW}_{13}(u + v)] +
        [r^{UV}_{12}(u), r^{VW}_{23}(v)] +
            [r^{UW}_{13}(u + v), r^{VW}_{23}(v)] = 0.
\end{equation}
As in (\ref{qybe}), linear operators act in $U \otimes V \otimes W$.

\subsubsection{}
Define the bialgebra $Y_\hbar(R)$ as follows. The algebra $Y_\hbar(R)$ generated by the
coefficients of generating functions $T^U_{ij}(u)$ 
\footnote{
    At the moment, it is not important exactly how the generating functions are organized, 
    as power series in $u, u^{-1}$, as Laurent power series in $e^u$
    or in terms of elliptic functions, or in some other way.
} 
where $U$ is one of the vector spaces $\{U_\alpha\}$. 
For each pair of vector spaces $U$ and $V$ from our collection there is a relation
\begin{equation}\label{rtt}
    R^{UV}_{12}(u) T^U_1(u + v) T^V_2(v) = T^V_2(v) T^U_1(u + v) R^{UV}_{12}(u).
\end{equation}

Note that one can impose other relations, such as $\det_q(T(u)) = 1$ where $\det_q$ is the
quantum determinant. Under the appropriate assumptions 
$Y_\hbar(R)$ can be a Hopf algebra, but it is not important at the moment.
There are plenty of known examples of such algebras, 
such as Yangians, quantized universal enveloping algebras, elliptic algebras, etc., 
see, for example, \cite{QG1}.

The algebra $Y_0(R)$ is commutative. We assume flatness of the deformation family $Y_\hbar(R)$,
which means that $Y_\hbar(R)$ are all isomorphic to $Y_0(R)$ as topological vector spaces
as in section \ref{sec: deformation_quantization}. Denote by 
$\phi_\hbar \colon Y_\hbar(R) \to Y_0(R)$ such a linear isomorphism.

The elements
\begin{equation}
    t^V(u) = \mathrm{Tr}_V (T^V(u))
\end{equation}
generate a commutative subalgebra in $Y_\hbar(R)$. This is an immediate consequence of relations
(\ref{rtt}).

The bialgebra structure on $Y_\hbar(R)$ is determined by
the action of the comultiplication and counit on generating functions $T^U(u)$:
\begin{equation}
    \Delta T^U(u) = T^U(u) \otimes T^U(u), \quad \epsilon(T^U(u))=1.
\end{equation}

\subsubsection{}

The algebra $Y_0(R)$ is commutative, generated by the coefficients of generating functions
$T^{U_\alpha}_{ij}(u)$. To distinguish generating functions
for $Y_\hbar(R)$ and for $Y_0(R)$, we denote the latter by $L^{U_\alpha}_{ij}(u)$.
The notation is very standard in classical integrable systems, where $L^U(u)$ plays
the role of the classical Lax operator.

Assume that $Y_\hbar(R)$ admits a PBW type basis, i.e. that symmetrized monomials in 
coefficients of generating functions $T_{ij}(u)$ form a basis. Then, we choose t
he isomorphism $\phi_\hbar \colon Y_\hbar(R) \to Y_0(R)$ that brings symmetrized monomials
in $T$ to monomials in $L$. In particular,
\begin{equation} \label{phi-hbar-for-TT}
    \phi_\hbar(T_a(u)) = L_a(u), \qquad
    \phi_\hbar(T_a(u) T_b(v) + T_b(v) T_a(u)) = 2 L_a(u) L_b(v)
\end{equation}

Expanding the relation (\ref{rtt}) in $\hbar$ and taking into account \eqref{phi-hbar-for-TT}, 
we obtain the following expression for the $*$-commutator of $L_1^U(u)$ and $L_2^V(v)$ 
as $\hbar \to 0$
\begin{equation}
    [L_1^U(u), L_2^V(v)]_* = \phi_\hbar(T_1^U(u) T_2^V(v) - T_2^V(v) T_1^U(u)) =
        -i \hbar \{L_1^U(u), L_2^V(v)\} + \hbar^2 \{L_1^U(u), L_2^V(v)\}_2 + O(\hbar^3).
\end{equation}
Assuming that as $\hbar \to 0$
\begin{equation}
    R^{UV}(u) = 1 + i \hbar r^{UV}(u)  + \hbar^2 s^{UV}(u) + O(\hbar^3),
\end{equation}
we obtain
\begin{equation} \label{lr}
    \{L_1^U(u), L_2^V(v)\} = [r^{UV}(u - v), L_1^U(u) L_2^V(v)]
\end{equation}
and
\begin{equation}\label{eq: br2yangian}
    \{L_1^U(u), L_2^V(v)\}_2 =
        - \Big[\frac{1}{2} \big(r^{UV}(u - v) \big)^2 + s^{UV}(u - v), L_1^U(u) L_2^V(v) \Big].
\end{equation}

\begin{proposition} If the quantum $R$-matrix $R^{UV}$ satisfies the unitarity condition
\begin{equation}\label{eq: R_unitarity}
    R_{12}^{UV}(u - v, \hbar) R_{21}^{VU}(v - u, \hbar) = f(u - v, \hbar) I_{12},
\end{equation}
where $f(u - v, \hbar)$ is a function and $I_{12}$ is the identity operator,
and if the symmetry condition
\begin{equation}\label{eq: bar-symmetry}
 R_{12}^{UV}(u - v, \hbar) = R_{21}^{UV}(v - u, -\hbar),
\end{equation}
holds, then
\begin{equation}
    \frac{1}{2} \big(r^{UV}(u - v) \big)^2 + s^{UV}(u - v) =
        \frac{1}{4} \frac{\partial^2 f(u - v, \hbar)}{\partial \hbar^2} I_{12}.
\end{equation}
\end{proposition}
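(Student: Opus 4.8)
The plan is to combine the unitarity condition \eqref{eq: R_unitarity} and the symmetry condition \eqref{eq: bar-symmetry} into a single relation that involves only $R^{UV}$ evaluated at $\hbar$ and at $-\hbar$, and then to expand that relation to second order in $\hbar$. Write $w = u - v$ and
\begin{equation}
    R(w,\hbar) := R_{12}^{UV}(w,\hbar) = I + i\hbar\, r^{UV}(w) + \hbar^2 s^{UV}(w) + O(\hbar^3).
\end{equation}
First I would use the symmetry condition \eqref{eq: bar-symmetry} (with $\hbar$ replaced by $-\hbar$) to rewrite the second factor $R_{21}^{VU}(v-u,\hbar)$ in \eqref{eq: R_unitarity} as $R_{12}^{UV}(w,-\hbar)$, so that \eqref{eq: R_unitarity} becomes
\begin{equation}
    \label{eq:reduced-unitarity}
    R(w,\hbar)\, R(w,-\hbar) = f(w,\hbar)\, I_{12}.
\end{equation}
Setting $\hbar = 0$ and using $R(w,0) = I$ gives $f(w,0) = 1$.

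Next I would observe that \eqref{eq:reduced-unitarity} says that $R(w,\hbar)$ and $R(w,-\hbar)$ are, up to the scalar $f(w,\hbar)$, mutually inverse; in particular they commute. Hence, replacing $\hbar$ by $-\hbar$ in \eqref{eq:reduced-unitarity},
\begin{equation}
    f(w,-\hbar)\, I_{12} = R(w,-\hbar)\, R(w,\hbar) = R(w,\hbar)\, R(w,-\hbar) = f(w,\hbar)\, I_{12},
\end{equation}
so $f$ is even in $\hbar$. Therefore $\partial_\hbar f(w,\hbar)\big|_{\hbar=0} = 0$, and
\begin{equation}
    f(w,\hbar) = 1 + \tfrac{1}{2}\hbar^2\, \partial_\hbar^2 f(w,\hbar)\big|_{\hbar=0} + O(\hbar^3).
\end{equation}

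Finally, substituting the $\hbar$-expansion of $R$ into the left-hand side of \eqref{eq:reduced-unitarity},
\begin{equation}
    R(w,\hbar)\,R(w,-\hbar) = \big(I + i\hbar r^{UV} + \hbar^2 s^{UV} + O(\hbar^3)\big)\big(I - i\hbar r^{UV} + \hbar^2 s^{UV} + O(\hbar^3)\big) = I + \hbar^2\big((r^{UV})^2 + 2 s^{UV}\big) + O(\hbar^3),
\end{equation}
where the $\hbar^1$ terms cancel (consistently with $f$ being even). Comparing the $\hbar^2$-coefficients of the two sides of \eqref{eq:reduced-unitarity} yields $(r^{UV}(w))^2 + 2 s^{UV}(w) = \tfrac{1}{2}\,\partial_\hbar^2 f(w,\hbar)\big|_{\hbar=0}\, I_{12}$, which is exactly the claimed identity after dividing by $2$.

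The one genuinely delicate step is the first one: using the symmetry condition \eqref{eq: bar-symmetry} to convert the opposite $R$-matrix $R_{21}^{VU}(v-u,\hbar)$ in the unitarity relation into $R_{12}^{UV}(w,-\hbar)$, i.e.\ keeping careful track of which tensor factors and which spectral/$\hbar$ arguments are permuted so that \eqref{eq:reduced-unitarity} is correct as written. Once \eqref{eq:reduced-unitarity} is in place, the remaining steps are a short power-series computation and require no further input beyond \eqref{cR} and the definition of $s^{UV}$.
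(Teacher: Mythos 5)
Your proof is correct, and it supplies exactly the computation the paper omits (the proposition is stated there without proof): combining unitarity with the $\hbar\mapsto-\hbar$ symmetry to get $R(w,\hbar)R(w,-\hbar)=f(w,\hbar)I_{12}$, noting that the order-$\hbar$ terms force $\partial_\hbar f(w,0)=0$, and reading off the order-$\hbar^2$ coefficient $\,(r^{UV})^2+2s^{UV}=\tfrac12\partial_\hbar^2 f(w,0)\,I_{12}$ is the intended argument. The index bookkeeping you flag is the only delicate point, and you resolve it correctly; the paper's own superscript placement in the symmetry condition is in fact slightly loose, and the meaning you assign to it is the standard one.
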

In this case, the commutator in \eqref{eq: br2yangian} equals zero. 
As a consequence, if the $R$-matrices in the definition of $Y_\hbar(R)$ are given by the
universal $R$-matrix of a Yangian, a quantum affine algebra, or an elliptic quantum group,
\eqref{eq: R_unitarity} and \eqref{eq: bar-symmetry} hold true, thus
\begin{equation} \label{eq: br2yangian0}
    \{L_1^U(u), L_2^V(v)\}_2 = 0.
\end{equation}

\subsubsection{} 

Let $\mathcal{H}_i$, $i = 1, \ldots, N$ be representations of $Y_\hbar$. Denote by
$R^{V \mathcal{H}_i}(u) \in \mathrm{End}(V \otimes \mathcal{H}_i)$ the image of
the generating function $T^V(u)$ in the representation space $\mathcal{H}_i$.
Let  $\mathcal{H} = \mathcal{H}_1 \otimes \ldots \otimes \mathcal{H}_N$ be the tensor product
of these representations. The operators
\begin{equation}\label{qt}
    t^V(u)=\mathrm{Tr}_V (R^{V, \mathcal{H}_1}_{a1}(u) \ldots R^{V, \mathcal{H}_N}_{aN}(u))
\end{equation}
form a commutative family and give many interesting and important examples of quantum integrable
spin chains, see, for example, \cite{BIK}.

Assume that the quantum spin chain \eqref{qt} each representation $\mathcal{H}_i$ is semiclassical. This means that as
$\hbar \to 0$, the family of algebras $\mathrm{End}(\mathcal{H}_i)$  converges, in the appropriate sense, to
the Poisson algebra of the corresponding classical observables. This Poisson algebra is usually a quotient algebra of $Y_0$.
Denote such quotient algebra as $Y_0(s_i)$ and the image of $L^U(u)$ in it by $L^{U, s_i}(u)$.
Then, the classical limit of the generating function (\ref{qt}) is
\begin{equation}\label{ct}
    t_c^V(u) = \mathrm{Tr}_V (L^{V, s_1}_{a1}(u) \ldots L^{V, s_N}_{aN}(u))
        \in Y_0(s_1) \otimes \ldots \otimes Y_0(s_N).
\end{equation}
As a consequence of (\ref{lr}) these generating functions Poisson commute:
\begin{equation}
    \{t_c^V(u), t_c^W(w)\} = 0.
\end{equation}
One should think of these generating functions as Poisson commuting functions on a Poisson manifold,
which is the phase space of the corresponding classical Hamiltonian systems.
This construction is the source of many important examples of integrable systems; see, for example, \cite{FT}.

\subsection{Hybrid spin chains}

Consider a spin chain that has both semiclassical representations $\mathcal{H}_i$ and a
"fixed representation" $U$ for which the $R$-matrices behave as in (\ref{cR}) when 
$\hbar\to 0$. So, the total space of states is $\mathcal{H}_1 \otimes \ldots \otimes \mathcal{H}_N \otimes U$.
The generating function $T^V(v)$ acts on this space as
\begin{equation}
    T^V_a(v, u) = 
        R^{V \mathcal{H}_1}_{a1}(v) \ldots R^{V \mathcal{H}_N}_{aN}(v) R^{VU}_{aq}(v - u),
\end{equation}
where $q$ refers to the last factor in the tensor product.
These operators are known in quantum integrable systems as quantum monodromy matrices \cite{BIK}.

The corresponding transfer matrix 
\footnote{
    In representation theory, it is known as the quantum character of representation $V$
    of $Y_\hbar(R)$ evaluated in $\mathcal{H}$.
} 
is
\begin{equation}
    t^V(v, u) =
        \mathrm{Tr}_V (R^{V \mathcal{H}_1}_{a1}(v) \ldots
            R^{V \mathcal{H}_N}_{aN}(v) R^{VU}_{aq}(v - u)).
\end{equation}
As $\hbar \to 0$ we have
\begin{equation} \label{eq: transfer-matrix-expansion}
    t^V(v, u) = t^V_c(v) + \hbar M^{VU}(v, u) + O(\hbar^2),
\end{equation}
where $t^V_c(u)$ is given in (\ref{ct}), and
\begin{equation}
    M^{VU}(v, u) =
        i\, \mathrm{Tr}_V (L^{V, s_1}_{a1}(v) \ldots L^{V, s_N}_{aN}(v) r^{VU}_{aq}(v - u)).
\end{equation}
If the algebra $Y_0(s_i)$ can be identified with functions on the symplectic space $\mathcal{S}(s_i)$, we have a hybrid system with the bundle
of hybrid observables
$E = \mathcal{S}(s_1) \times \ldots \times \mathcal{S}(s_N) \times \mathrm{End}(U)$.
The curvature of connection $\alpha$ is determined by \eqref{eq: br2yangian}.

Assume that the classical spin chain with Poisson commuting generating functions $t^V_c(v)$
is an integrable system with the phase space $\mathcal{S}(s_1) \times \ldots \times \mathcal{S}(s_N)$.
Let $x(t)$ be the Hamiltonian flow generated by $t^V_c(u)$. The classical $L$-operator evolves as
\begin{align}
    \frac{dL^U_b(v)(x(t))}{dt} &=
        \{t^V(u), L^U_b(v)\} (x(t)) =
            \mathrm{Tr}_a \{L^V_a(u), L^U_b(v)\} (x(t)) = \\ &=
                \mathrm{Tr}_a \big[r^{VU}_{ab}(u - v), L^V_a(u)(x(t)) L^U_b(v)(x(t))\big] = \\ &=
        \big[\mathrm{Tr}_a (r^{VU}_{ab}(u - v) L^V_a(u))(x(t)), L^U_b(v)(x(t))\big] =
            \big[M^{VU}_b(u, v)(x(t)), L^U_b(v)(x(t))\big],
\end{align}
thus, the first-order term in the expansion \eqref{eq: transfer-matrix-expansion}
is the classical $M$-operator, and the equation is the evolution of the Lax
operator $L^U(v)(x)$ with respect to the Hamiltonian flow generated by $t^V_c(u)$.

Applying the theorem \ref{th: deformation-to-hybrid-integrable} to the commutative
family $t^V(v, u)$ \eqref{eq: transfer-matrix-expansion}, we get the hybrid integrable system ---
hybrid spin chain with the generating function of classical Poisson commuting Hamiltonians
$t^V_c(u)$ and quantum Hamiltonians equal to classical $M$-operators $M^{VU}(u, v)$. 
The compatibility condition for this hybrid integrable system can be written in the form of
zero-curvature condition for classical $M$-operators \eqref{eq: zero-curvature-for-M-operators-x(t)}.

Let $x(\mathbf{t})$ be the multi-time Hamiltonian flows generated by $t^{V_j}_c(u_j)$
with $j = 1, \ldots, K$ to generate a complete multi-time flow.
We have
\begin{equation}
    \left[
        \frac{\partial}{\partial t_j} + i M^{V_j U}(u_j, u)(x(\mathbf{t})),
            \frac{\partial}{\partial t_k} + i M^{V_k U}(u_k, u)(x(\mathbf{t}))
    \right] = 0.
\end{equation}

Thus, in this case, the hybrid quantum system is simply the collection of
$M$-operators for the multi-time flow, see for example \cite{DKN, BIK}.

\section{Spin Calogero--Moser--Sutherland system and its hybrid features.}\label{CM}

\subsection{Quantum spin Calogero--Moser--Sutherland system}

\subsubsection{} Quantum spin Calogero--Moser system describes $n$ interacting quantum particles
on a circle with the internal degrees of freedom. Here we will focus on the system with the trigonometric
potential, also known as the Calogero--Moser--Sutherland (CMS) model \cite{Calogero, Moser, Sutherland}.

We will use coordinates $q_i \in \mathbb{R}/\mathbb{Z} \frac{L}{2\pi} \simeq S^1$, 
where $L > 0$ is the length of the physical system.
The Hamiltonian of this model is \cite{HH, HW, MP}
\begin{equation}\label{SCM}
    \hat{H} =
        -\frac{1}{2} \sum_{i = 1}^n \hbar^{2} \frac{\partial^2}{\partial q_i^2} +
        \frac{\pi^2}{2 L^2} \sum_{\substack{i, j = 1 \\ i \ne j}}^n
            \frac{  1 + \hbar P_{ij} }{\sin^2 \frac{\pi (q_i - q_j)}{L}}.
\end{equation}
Here the operator $P_{ij}$ is the spin permutation operator acting in $i$-th and $j$-th spaces.
The Hamiltonian acts on the space $ L_2(\mathbb{R}^n, (\mathbb{C}^{N})^{\otimes n})_\mathrm{sym}$
of functions invariant with respect to the simultaneous permutation of spins and coordinates
$\psi(\ldots, q_i, \ldots, q_j, \ldots) =  P_{ij} \psi(\ldots, q_j, \ldots, q_i, \ldots)$
\footnote{Here we consider bosonic version where functions $\psi$ in the space of states are
invariant with respect to simultaneous permutations of $q_i$ and $q_j$ and the action of
$P_{ij}$ on the spin variable. The analysis of the fermionic case, when $\psi$ is skew-symmetric
with respect to diagonal permutations of coordinates and spins, is completely parallel.
}.
Without loss of generality, rescaling $\hbar$ and $L$ we will fix $L = 2 \pi$.
Introduce new variables $z_j = \exp(i q_j)$.
In terms of $z_i$ the operator (\ref{SCM}) can be written as
\footnote{
This Hamiltonian also appears in the form with an extra coupling constant $\lambda$: the term
$1 + \hbar P_{ij}$ is replaced with $\lambda(\lambda + \hbar P_{ij})$. In this form
Cherednik--Dunkl operators and higher Hamiltonians also contain $\lambda$.
This coupling constant can be removed by rescaling of Planck constant $\hbar \to \lambda \hbar$
together with the rescaling of the Cherednik--Dunkl operators $d_j \to \lambda^{-1} d_2$
and Hamiltonians $\widehat{H}_k \to \lambda^{-k} \widehat{H}_k$.
}
\begin{equation}
    \widehat{H}_{2} =
        \frac{1}{2} \sum_{i = 1}^n
            \left( \hbar\, z_i \frac{\partial}{\partial z_i}\right)^2 -
        \frac{1}{2} \sum_{\substack{i, j = 1 \\ i \ne j}}^n
            \frac{z_i z_j}{(z_i - z_j)^2} (1 + \hbar P_{ij}).
\end{equation}

\subsubsection{} Let us recall how to construct higher commuting Hamiltonians using
Cherednik--Dunkl operators \cite{Cherednik, Dunkl}.

Cherednik--Dunkl operators are differential operators acting on $\mathbb{C}(z_1, \ldots z_n)$
\begin{equation} \label{dunkl_quantum}
    d_j = \hbar\, z_j \frac{\partial}{\partial z_j}   +
        \sum_{i > j} \frac{z_i}{z_i - z_j} K_{ij} -
        \sum_{i < j} \frac{z_j}{z_j - z_i} K_{ij},
\end{equation}
where $K_{ij}$ is coordinate permutation operator $K_{ij} z_{j} = z_{i} K_{ij}$.
They satisfy the following relations
\begin{equation}
    [d_i, d_j] = 0, \qquad
    K_{i, i + 1} d_i = d_{i + 1} K_{i, i + 1} + 1, \qquad
    [d_i, K_{j,j+1}] = 0, \quad i \ne j, j + 1,
\end{equation}
and thus give a representation of the degenerate affine Hecke algebra
{\color{red} \cite{Dr}}.

Commuting Hamiltonians of the quantum spin Calogero--Moser--Sutheland system can be derived as the
action of symmetric polynomials in Cherednik--Dunkl operators \cite{TH}
\begin{equation}
    \label{eq:QCM hamiltonis}
    H_k = \frac{1}{k} \sum_{i = 1}^n d_i^k
\end{equation}
on the space of $(\mathbb{C}^N)^{\otimes n}$-valued symmetric rational functions in $z_i$,
i.e. on
\begin{equation}
    \label{phase-space}
    \mathcal{H} = (\mathbb{C}(z_1, \ldots, z_n) \otimes (\mathbb{C}^N)^{\otimes n} )_\mathrm{sym}.
\end{equation}
We will write $\widehat{H}_k = H_k|_{\mathcal{H}}$.
Note that when we compute the action of $H_k$ on $\mathcal{H}$ we use ordering in which
coordinates $z_{i}$ are on the left, followed by momenta
$\hat{p}_{i} = \hbar z_i \tfrac{\partial}{\partial z_i}$, and permutations $K_{ij}$
are on the right. After that, we use $K_{ij} P_{ij}|_{\mathcal{H}} = 1$
to replace $K$ with $P$ to get the final form of Hamiltonians.

The first nontrivial Hamiltonians are
\begin{align}
    \widehat{H}_1 &= \sum_{i = 1}^n \hat{p}_{i}, \\
    \widehat{H}_2 &= \frac{1}{2} \sum_{i = 1}^n \hat{p}_i^2 -
        \frac{1}{2} \sum_{\substack{i, j = 1\\ i \ne j}}^n
            \frac{z_i z_j}{(z_i -z_j)^2} \left(1 + \hbar P_{ij}\right), \\
    \widehat{H}_3 &= \frac{1}{3} \sum_{i = 1}^n \hat{p}_{i}^{3} -
        \sum_{\substack{i, j = 1\\ i \ne j}}^n
            \frac{z_i z_j \, \left(1 + \hbar P_{ij}\right)}{(z_i - z_j)^2} \hat{p}_i
        - \frac{\hbar}{3} \sum_{\substack{i, j, k = 1 \\ i \ne j \ne k \ne i}}^n
            \frac{z_i z_j z_k \, P_{jk} P_{ij}}{(z_i - z_j)(z_j - z_k)(z_k - z_i)}.
\end{align}

For $\hbar \neq 0$, Dunkl operators are simultaneously diagonalizable on the space
$\mathbb{C}(z_1, \ldots, z_n)$ with simple joint spectrum \cite{TU, Uglov}.
The eigenvectors form an orthogonal basis in $\mathbb{C}(z_1, \ldots, z_n)$
and are called nonsymmetric Jack polynomials.

In the semiclassical limit $\hat{p}_{i}, z_i$ become the coordinate functions on $T^*(S^1)^n$.

\begin{lemma} \label{lem: unity lemma}
In the semiclassical limit Hamiltonians $\widehat{H}_k$ have the form
\begin{equation}\label{scCM}
    \widehat{H}_{k} = H_k^{CM}  + O(\hbar).
\end{equation}
where $H_k^{CM}$ is the corresponding classical Hamiltonian of the "usual" spinless CMS system
multiplied by the identity operator.
\end{lemma}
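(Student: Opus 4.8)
The plan is to track powers of $\hbar$ in the Cherednik--Dunkl construction \eqref{eq:QCM hamiltonis}. Write the Dunkl operator \eqref{dunkl_quantum} as $d_j = \hat p_j + a_j$, where $\hat p_j = \hbar\, z_j\partial_{z_j}$ carries one power of $\hbar$ and $a_j = \sum_{i>j}\frac{z_i}{z_i-z_j}K_{ij} - \sum_{i<j}\frac{z_j}{z_j-z_i}K_{ij} = \sum_{l\ne j} b_{jl}K_{jl}$ is $\hbar$-independent and built entirely from coordinate permutations, with $b_{ij}+b_{ji}=0$. The only source of extra $\hbar$ when one normal-orders a monomial in the $d_j$'s (coordinates to the left, momenta in the middle, permutations to the right, then $K_{ij}\mapsto P_{ij}$) is the commutator $[\hat p_i, z_j] = \hbar\, z_j\,\delta_{ij}$: moving a $K_{ij}$ past $\hat p_k$ or $z_k$ merely relabels indices and produces no $\hbar$. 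Hence, modulo $O(\hbar)$, forming $\tfrac1k\sum_i d_i^k$ and passing to the matrix-valued semiclassical symbol amounts to replacing each $\hat p_j$ by a commuting symbol $p_j$ (still permuted by the $K$'s) \emph{before} multiplying; that is, writing $\mathrm{symb}(\cdot)$ for the leading semiclassical symbol,
\begin{equation}
    \mathrm{symb}\Big(\tfrac1k\sum_i d_i^k\Big) = \tfrac1k\sum_i \bar d_i^{\,k} + O(\hbar), \qquad \bar d_i = p_i + a_i,
\end{equation}
the right-hand side being computed in the crossed product $\mathcal A = \mathbb C(z_1,\dots,z_n)[p_1,\dots,p_n]\rtimes\mathbb C[S_n]$, with $K_{ij}$ acting on $z$'s and $p$'s by transposition.

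The core of the proof is to show that $\tfrac1k\sum_i \bar d_i^{\,k}$ has no residual permutations, i.e. it lies in the ``scalar'' subalgebra $\mathbb C(z)[p]\cdot\mathbf 1\subset\mathcal A$ (its component along every nontrivial $w\in S_n$ vanishes). The input is that $\tfrac1k\sum_i d_i^k$, being a symmetric polynomial in the Dunkl operators, commutes with all of $\mathbb C[S_n]$ (the standard degenerate affine Hecke algebra fact underlying the $S_n$-invariance of the CMS Hamiltonians), so the same holds for $\tfrac1k\sum_i \bar d_i^{\,k}$; this, together with the antisymmetry $b_{ij}+b_{ji}=0$, forces the term-by-term cancellations. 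For instance $\sum_i\bar d_i=\sum_i p_i$, and in $\tfrac12\sum_i\bar d_i^{\,2}$ the transposition terms of $\sum_i(p_i a_i+a_i p_i)$ cancel in pairs $(i,l)\leftrightarrow(l,i)$, the $3$-cycle terms of $\sum_i a_i^2$ cancel in triples, and only the classical potential $-\tfrac12\sum_{i\ne j}\frac{z_iz_j}{(z_i-z_j)^2}$ survives in the identity component; this matches the explicit formulas for $\widehat H_2,\widehat H_3$ given above. The general case I would do by induction on $k$ (and on the degree in $p$), pushing permutations to the right with $[\bar d_i,\bar d_j]=0$ and $K_{i,i+1}\bar d_i=\bar d_{i+1}K_{i,i+1}+1$ and killing the resulting nontrivial group-algebra components using the antisymmetry of the $b_{ij}$.

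Granting this, the specialization $K_{ij}\mapsto P_{ij}$ that defines $\widehat H_k$ on $\mathcal H$ and the specialization $K_{ij}\mapsto 1$ that defines the spinless quantum CMS operator $\tfrac1k\sum_i d_i^k\big|_{\mathbb C(z)^{S_n}}$ act identically on $\tfrac1k\sum_i\bar d_i^{\,k}$, since only the identity component is present; hence $\mathrm{symb}(\widehat H_k) = H_k^{CM}\cdot I + O(\hbar)$, where $H_k^{CM}$ is the semiclassical symbol of the spinless quantum CMS Hamiltonian, which is the usual classical CMS Hamiltonian by construction. This gives \eqref{scCM}. The main obstacle is the middle step --- proving that \emph{all} nontrivial-permutation components of the power sums of the classical Dunkl operators vanish; this is in essence the assertion that the spinless classical CMS system really carries no spin, and although it is folklore, writing the inductive cancellation cleanly for arbitrary $k$ is the one genuinely combinatorial piece of the argument.
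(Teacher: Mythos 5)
Your reduction to the semiclassical Cherednik--Dunkl operators $\bar d_i = p_i + a_i$ in the crossed product $\mathbb{C}(z)[p]\rtimes\mathbb{C}[S_n]$, and the observation that the lemma amounts to showing that symmetric polynomials in the $\bar d_i$ have vanishing components along all nontrivial permutations, is exactly the paper's setup. But the core of the argument --- proving that vanishing --- is where your proposal has a genuine gap. You assert that commutation with $\mathbb{C}[S_n]$ together with the antisymmetry $b_{ij}+b_{ji}=0$ ``forces the term-by-term cancellations,'' verify this only for $k\le 2$ (and $k=3$ by inspection of the displayed formulas), and explicitly defer the general case to an unspecified induction. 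The proposed mechanism is not by itself a proof: $S_n$-invariance of an element $\sum_w f_w K_w$ only relates the coefficients $f_w$ within a conjugacy class of $S_n$ (by $f_{\sigma w\sigma^{-1}}(p,z)=f_w(\sigma^{-1}p,\sigma^{-1}z)$); it does not make any of them vanish, and the antisymmetry of the $b_{ij}$ has to be combined with the specific combinatorics of each cycle type. Since this cancellation is precisely the content of the lemma, what you have is a correct reduction plus a conjecture, not a proof.

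The paper closes this gap by a different and complete route. First, it replaces the statement ``no nontrivial permutation components'' by the equivalent and checkable criterion that $f(D_1,\ldots,D_n)$ commute with all coordinates $z_k$ (since $K_w z_k = z_{w(k)}K_w$, a nonzero component along $w\ne\mathrm{id}$ would produce a nonzero commutator with some $z_k$). Second, instead of power sums it reduces to the elementary symmetric polynomials, packaged as the generating function $t(\lambda)=\prod_{j}(\lambda+D_j)$, and proves $[t(\lambda),z_k]=0$ by a descending induction on explicit expressions $A_n, A_{n-1},\ldots$, the inductive step being the identity of Proposition \ref{prop: Kcancel}. If you want to salvage your route, you would either need to carry out the cycle-type-by-cycle-type cancellation for arbitrary $k$ (which is exactly the ``genuinely combinatorial piece'' you flag), or adopt the commutation-with-$z_k$ criterion and the generating-function reduction, at which point you would essentially be reproducing the paper's argument.
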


The proof of this assertion is given in appendix \ref{sec: unity lemma}
\footnote{
For the rational spin CM system, this property is equivalent to the statement of
Lemma 2.2 in \cite{EFMV}. For the trigonometric case, the proof can also be obtained as a
limit from the elliptic case (Prop. 5.1 and eq. (5.20) \cite{Chalykh}).
}.

\subsection{The dynamical Haldane--Shastry model}
The quantum spin CMS system is an example of matrix-valued quantum mechanics from the section
\ref{sec:Semiclassical dynamics of matrix hybrid}. By theorem 
\label{th: deformation-to-hybrid-integrable}, the commutative family $\widehat{H}_k$ 
defines a hybrid integrable system with a bundle of hybrid observables
$E = T^* (S^1)^n \times \mathrm{End}((\mathbb{C}^N)^{\otimes n})$
with the trivial connection $\alpha = 0$.
Passing to the semiclassical limit
$\hbar \to 0$ we have
\begin{equation} \label{sCM-M}
    \widehat{H}_{k} = H^{CM}_{k} + \hbar\, H^{CM, (1)}_k + O(\hbar^{2}).
\end{equation}
Here, as in (\ref{scCM}), $H^{CM}_{k}(p, z)$ are Hamiltonians of classical "spinless" CMS system,
which define the underlying classical dynamics.

For example, the first two of $H^{CM, (1)}_k$ can be computed explicitly
\begin{align}
    H^{CM, (1)}_2 & = -\frac{1}{2} \sum_{\substack{i, j = 1\\ i \ne j}}^n
        \frac{z_i z_j}{(z_i - z_j)^2} P_{ij}, \\
    H^{CM, (1)}_3 & =
        -\sum_{\substack{i, j = 1\\ i \ne j}}^n
            \frac{z_i z_j p_i}{(z_i - z_j)^2} P_{ij} -
        \frac{1}{3} \sum_{\substack{i, j, k = 1 \\ i \ne j \ne k \ne i}}^n
            \frac{z_i z_j z_k P_{jk} P_{ij}}{(z_i - z_j)(z_j - z_k)(z_k - z_i)}.
\end{align}

The classical multi-time evolution is generated by CMS Hamiltonians:
\begin{equation} \label{cHfl}
    \frac{\partial z_j}{\partial t_k} =
        i z_j \frac{\partial H^{CM}_k}{\partial p_j}, \qquad\quad
    \frac{\partial p_j}{\partial t_k} =
       - i z_j \frac{\partial H^{CM}_k}{\partial z_j}.
\end{equation}
Here $p$ and $z$ are natural coordinates on $T^*(S^1)^n$ with Poisson brackets
$\left\{ p_j, z_k \right\} = i \delta_{jk} z_k$.

Let $p(\mathbf{t}), q(\mathbf{t})$ be a multi-time flow on $T^*(S^1)^n$ generated by $H^{CM}_k$.
The first two equations are easy to compute explicitly:
\begin{align}
    \frac{\partial z_j}{\partial t_2} &= i p_j z_j, \qquad
    &&\frac{\partial p_j}{\partial t_2} =
        -i \sum_{\substack{k = 1 \\ k \ne j}}^n
            \frac{z_j z_k (z_j + z_k)}{(z_j - z_k)^3} , \\
    \frac{\partial z_j}{\partial t_3} &=
        i p_j^2 z_j -
            i \sum_{\substack{k = 1 \\ k \ne j}}^n
                \frac{z_j^2 z_k}{(z_j - z_k)^2},
    &&\frac{\partial p_j}{\partial t_3} =
        -i \sum_{\substack{k = 1 \\ k \ne j}}^n
            \frac{z_j z_k (z_j + z_k)}{(z_j - z_k)^3} (p_j + p_k).
\end{align}

The compatibility condition for Hamiltonians of a dynamical Haldane--Shastry model
could be written as the zero curvature equation \eqref{eq: zero-curvature-for-M-operators-x(t)}
\begin{equation}
    \label{eq:dynamical long-range spin chain}
    \left[
        \frac{\partial}{\partial t_k} + i H^{CM, {(1)}}_k(p(\mathbf{t}), q(\mathbf{t})),
            \frac{\partial}{\partial t_l} + i H^{CM, {(1)}}_l(p(\mathbf{t}), q(\mathbf{t}))
    \right] = 0.
\end{equation}

\subsection{The fixed point of the multi-time classical Calogero--Moser--Sutherland dynamics}

It turns out that the multi-time classical CMS dynamics has a fixed point \cite{Ruijsenaars}.

\begin{proposition}\label{prop:freezing point}
Consider the Calogero--Moser--Sutherland system restricted to the
zero-momentum subspace $H^{CM}_1 = \sum_{i = 1}^n p_i = 0$,
$dH^{CM}_1 = \sum_{i = 1}^n dp_i = 0$.
Then the point $x_*$
\begin{equation}
    p_i = 0, \qquad
    z_k = \exp\left(\frac{2 \pi i k}{n}\right)
\end{equation}
is the fixed point of the multi-time CM evolution, i.e.
\begin{equation}
    dH_k^{CM}(x_*) = 0, \qquad k = 2, \ldots, n.
\end{equation}
\end{proposition}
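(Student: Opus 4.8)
The plan is to verify directly that the differential $dH_k^{CM}$ vanishes at the symmetric configuration $x_*$ where the particles sit at the $n$-th roots of unity with zero momenta, working on the reduced phase space $H^{CM}_1 = 0$. The key structural fact is that this point is a critical point of every $H_k^{CM}$ simultaneously, and the cleanest route is to exploit the cyclic/dihedral symmetry of the configuration rather than to compute each $dH_k^{CM}$ by brute force.

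First I would make the problem concrete at the level of the Lax matrix. Recall that the classical spinless CMS system has a Lax matrix $L_{ij} = p_i \delta_{ij} + (1-\delta_{ij}) \frac{\sqrt{z_i z_j}}{z_i - z_j}$ (or the equivalent rational-in-$z$ normalization), and the Hamiltonians are $H_k^{CM} = \frac{1}{k}\operatorname{Tr} L^k$ up to lower-order corrections; the multi-time flows $\dot z_j = i z_j \partial_{p_j} H_k^{CM}$, $\dot p_j = -i z_j \partial_{z_j} H_k^{CM}$ are exactly those written in \eqref{cHfl}. So it suffices to show that at $x_*$ one has $\partial_{p_j} H_k^{CM} = 0$ and $\partial_{z_j} H_k^{CM} = 0$ for all $j$ and all $k = 2,\dots,n$ (after imposing $\sum dp_i = 0$, i.e. modulo the differential of $H_1^{CM}$). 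Equivalently, using the two explicit flow equations already displayed, I would check that the right-hand sides of the $t_2$ and $t_3$ flows vanish at $x_*$, and then argue the general $k$ case by symmetry.

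The main tool is the symmetry argument. At $x_*$ the cyclic group $\mathbb{Z}/n$ acts by $z_k \mapsto \zeta z_k$ (with $\zeta = e^{2\pi i/n}$), $p_k \mapsto p_k$, together with the index shift $k \mapsto k+1 \bmod n$; this is a canonical transformation preserving each $H_k^{CM}$, and $x_*$ is a fixed point of this action. Likewise the reflection $z_k \mapsto \bar z_k = z_{-k}$, $p_k \mapsto -p_k$ is a symmetry fixing $x_*$. The gradient $dH_k^{CM}(x_*)$ is therefore an invariant covector under this finite group action. For the momentum derivatives: $\partial_{p_j} H_k^{CM}(x_*)$ must be independent of $j$ by cyclic invariance, hence equal to $\frac1n \sum_j \partial_{p_j} H_k^{CM}(x_*) = \frac1n \partial_{(\sum p_j)} H_k^{CM}(x_*)$, and this is exactly the component killed by the constraint $\sum dp_i = 0$ — so $dz_j/dt_k = 0$ on the reduced space. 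For the position derivatives: the reflection symmetry forces $z_j \partial_{z_j} H_k^{CM}(x_*)$ to be odd under $j \mapsto -j$ while cyclicity forces it to be $j$-independent up to the shift; combined with the fact that $\sum_j z_j \partial_{z_j} H_k^{CM}$ is the action of the scaling vector field (which annihilates the homogeneous-degree-zero potential part and is controlled on the kinetic part), one concludes $\dot p_j = 0$ at $x_*$. I would spell this out first for $k=2,3$ using the displayed formulas (where $\sum_{k\ne j} \frac{z_j z_k(z_j+z_k)}{(z_j-z_k)^3}$ at $z_k = \zeta^k$ vanishes because the summand is odd under $k \mapsto 2j - k$), then invoke the symmetry bootstrap for higher $k$.

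The main obstacle I anticipate is making the symmetry argument airtight for all $k$ simultaneously, rather than just verifying $k = 2, 3$ by hand: one has to be careful that the reduced bracket on $\{H_1^{CM} = 0\}$ is the right one and that "annihilated by the constraint" genuinely means the corresponding Hamiltonian vector field is tangent to the leaf and vanishes there. A clean way around this is to note that $x_*$ is a common critical point of all $\operatorname{Tr} L^k$ iff $x_*$ is a critical point of the spectral curve data — concretely, one can show that at $x_*$ the Lax matrix $L$ and the matrix $M_k$ generating the $k$-th flow satisfy $[L, M_k] = 0$ with $\dot L = 0$, i.e. $x_*$ is an equilibrium of every isospectral flow; this is essentially Ruijsenaars's observation \cite{Ruijsenaars} and reduces everything to a finite eigenvalue computation at the symmetric point. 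I would present the hands-on $k = 2, 3$ check as the core of the proof and then remark that the general statement follows either from this symmetry/Lax argument or by citing \cite{Ruijsenaars}.
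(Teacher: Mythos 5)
Your proposal is correct and rests on the same two pillars as the paper's proof --- the explicit Lax matrix at the equally-spaced configuration and the symmetries of that configuration --- but the execution is genuinely different and, in its key step, cleaner. The paper works with the generating function $\det(\lambda+L)$ and proves the two needed facts by hand: the momentum-derivatives $F_i$ are all equal (via the observation that deleting one root of unity from the configuration is a dilation of deleting another, so the corresponding principal minors of $M$ coincide), and the position-derivatives $G_i$ vanish (via an explicit cancellation $\det M'=-\det M''$ using $\zeta_j=\zeta_{n-j}^{-1}$). Your argument replaces both computations by the single observation that $x_*$ is a fixed point of the canonical transformation ``cyclic permutation composed with dilation by $\zeta$'', which preserves every $H_k^{CM}$; invariance of $dH_k^{CM}(x_*)$ then forces $\partial_{p_j}H_k^{CM}(x_*)$ and $z_j\partial_{z_j}H_k^{CM}(x_*)$ to be independent of $j$, after which the former is proportional to $dH_1^{CM}$ (hence dies on the zero-momentum reduction) and the latter vanishes because $\sum_j z_j\partial_{z_j}H_k^{CM}\equiv 0$ by degree-zero homogeneity of $H_k^{CM}$ in $z$. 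That is a complete proof for all $k$ simultaneously and buys you a shorter, more conceptual argument than the paper's cofactor manipulations; the reflection symmetry and the explicit $k=2,3$ checks become optional illustrations (and note that for odd $k$ the map $p\mapsto -p$ sends $H_k^{CM}$ to $-H_k^{CM}$, so ``invariant covector'' needs a sign caveat there). Two presentational cautions: do not relegate the all-$k$ symmetry argument to a closing remark while calling the $k=2,3$ verification ``the core'', since the proposition is precisely about all $k$; and the sentence ``$[L,M_k]=0$ with $\dot L=0$ at $x_*$'' is a restatement of the conclusion rather than an independent route, so either carry out the eigenvalue computation or drop it in favor of the self-contained symmetry argument.
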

The proof is given in appendix \ref{sec:Proof of fixing}.

As a corollary, we have the commutativity of corresponding $M$-operators
\begin{equation}
    \left[M_k(x_*), M_l(x_*)\right] = 0.
\end{equation}
The operator $M_{2}(x_{*})$ is the Hamiltonian of the of Haldane--Shastry model
\cite{Haldane, Shastry, Inozemtsev}. The operators $M_k(x_*)$ were derived in \cite{BGHP, TH}
by a different method as the higher conservation laws for the Haldane--Shastry Hamiltonian.

The fixed point $x_*$ is known in the physics literature as the freezing point. It first appeared in the
paper \cite{Pol1} where it was shown that quantum spin Calogero--Moser--Sutherland model becomes long-range
spin chain in the strong interaction limit (in our terminology, it corresponds to $\hbar \to 0$).
Some recent results on the correspondence between long-range spin chains and quantum dynamical
systems in their freezing points could be found in \cite{Uglov95, SZ, MZ, LPS, LS}. 

In the forthcoming paper \cite{LMRS}
we will describe explicitly singular Liouville tori in Calogero--Moser--Sutherland models of type $A$, i.e.
invariant tori of dimension $1 \le k \le n - 1$.
An interesting next step is to describe explicitly the
corresponding hybrid dynamics for low-dimensional tori.

\appendix
\section{The semiclassical limit for non-stationary matrix Schr\"odinger equation} \label{sec:MatrixA}

Consider firstly one dimensional case, $n = 1$, i.e. one dimensional matrix
Schr\"odinger equation.
\begin{lemma} Any formally self-adjoint differential operator of degree $n$
of the form
\begin{equation}
    \hat{H}^{(0)} = \sum_{k = 0}^n \alpha_k(q) \hat{p}^k, \quad
    \hat{p} = -i \hbar \frac{\partial}{\partial q},
\end{equation}
with complex-valued coefficients $\alpha_k(q)$  can be written as
\begin{equation}\label{H-0}
    \hat{H}^{(0)} = \sum_{k = 0}^n A_k(q) \hat{p}^k -
        i \hbar \sum_{k = 1}^n \frac{k}{2}
            \frac{\partial A_k(q)}{\partial q} \hat{p}^{k - 1} +
        O(\hbar^2),
\end{equation}
with real-valued coefficients $A_k(q)$.
\end{lemma}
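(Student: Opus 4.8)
The plan is to prove the following sharper, ordering‑scheme–free statement, from which the lemma follows by a one–line expansion: $\hat H^{(0)}$ coincides \emph{exactly} with a symmetrically ordered operator
\[
  \hat H^{(0)} \;=\; \sum_{k=0}^{n} \tfrac12\bigl(A_k(q)\,\hat p^{\,k} + \hat p^{\,k}\,A_k(q)\bigr),
    \qquad A_k \ \text{real-valued}.
\]
Two elementary facts will be used throughout. First, $\hat p=-i\hbar\,\partial_q$ is formally self-adjoint for the pairing $\langle f,g\rangle=\int\bar f g\,dq$, so $(g\,\hat p^{\,k})^{\ast}=\hat p^{\,k}\,\bar g$ for every function $g$ and every $k$. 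Second, $[\hat p,g]=-i\hbar\,g'$ iterates to the Leibniz formula
\[
  \hat p^{\,k} g \;=\; \sum_{j=0}^{k}\binom{k}{j}(-i\hbar)^{j}\,g^{(j)}\,\hat p^{\,k-j},
\]
proved by induction on $k$; in particular $[\hat p^{\,k},g]=\sum_{j=1}^{k}\binom{k}{j}(-i\hbar)^{j}g^{(j)}\hat p^{\,k-j}=-i\hbar\,k\,g'\,\hat p^{\,k-1}+O(\hbar^{2})$.

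For the exact symmetric form I would induct on the degree $n$. If $n=0$, then $\hat H^{(0)}=\alpha_0(q)$ and $\hat H^{(0)}=(\hat H^{(0)})^{\ast}$ forces $\alpha_0=\bar\alpha_0$, so $A_0:=\alpha_0$ is real. For the inductive step, compare the coefficients of $\hat p^{\,n}$ in $\hat H^{(0)}=(\hat H^{(0)})^{\ast}=\sum_{k}\hat p^{\,k}\bar\alpha_k$: no term of the adjoint raises the differential order, so that coefficient is simply $\bar\alpha_n$, whence $\alpha_n=\bar\alpha_n$ is real. Put $A_n:=\alpha_n$ and consider
\[
  \hat H' \;:=\; \hat H^{(0)} - \tfrac12\bigl(A_n\hat p^{\,n}+\hat p^{\,n}A_n\bigr).
\]
This is a difference of two formally self-adjoint operators (the second is self-adjoint because $A_n$ is real), hence formally self-adjoint, and its $\hat p^{\,n}$-coefficient is $\alpha_n-A_n=0$, so $\deg\hat H'\le n-1$. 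Applying the induction hypothesis to $\hat H'$ produces real $A_0,\dots,A_{n-1}$ with $\hat H'=\sum_{k\le n-1}\tfrac12(A_k\hat p^{\,k}+\hat p^{\,k}A_k)$; adding back the subtracted term gives the claimed exact identity for $\hat H^{(0)}$.

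Finally, writing $\tfrac12(A_k\hat p^{\,k}+\hat p^{\,k}A_k)=A_k\hat p^{\,k}+\tfrac12[\hat p^{\,k},A_k]$ and inserting the commutator expansion above,
\[
  \tfrac12\bigl(A_k\hat p^{\,k}+\hat p^{\,k}A_k\bigr)
    \;=\; A_k\,\hat p^{\,k} \;-\; i\hbar\,\tfrac{k}{2}\,\frac{\partial A_k}{\partial q}\,\hat p^{\,k-1} \;+\; O(\hbar^{2}),
\]
and summing over $k=0,\dots,n$ (the $k=0$ term of the $O(\hbar)$ part vanishes) reproduces \eqref{H-0} verbatim.

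I do not expect a substantive obstacle: the whole content of the lemma is the observation that the leading coefficient of a formally self-adjoint operator is real, which then bootstraps downward in degree. The only points requiring care are bookkeeping: checking that the differential order genuinely drops after each subtraction (it does, since the leading terms cancel by construction of $A_n$), fixing the meaning of $O(\hbar^{2})$ as ``uniformly on compact sets in $q$, together with $q$-derivatives'', and noting that the real-valued functions $A_k$ delivered by the construction may themselves depend on $\hbar$ (inherited from the $\alpha_k$); one simply keeps them as the coefficients, the $O(\hbar^{2})$ remainder coming solely from the $j\ge 2$ terms of the commutator expansions.
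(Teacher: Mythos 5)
Your proof is correct, but it takes a genuinely different route from the paper's. The paper writes $\alpha_k = A_k + i B_k$ with $A_k,B_k$ real and extracts directly from $\hat H^{(0)} = (\hat H^{(0)})^{*}$, order by order in $\hbar$, the constraints $\alpha_n = \bar\alpha_n$ and $B_k = -\hbar\,\tfrac{k+1}{2}\,\partial_q A_{k+1} + O(\hbar^2)$ for $k<n$; substituting back immediately gives \eqref{H-0} with $A_k = \mathrm{Re}\,\alpha_k$. You instead establish the stronger, exact normal form $\hat H^{(0)} = \sum_k \tfrac12\big(A_k\hat p^{\,k} + \hat p^{\,k} A_k\big)$ with real $A_k$, by downward induction on the degree (reality of the leading coefficient, subtract the symmetrized leading term, recurse), and only then expand the anticommutator to first order in $\hbar$. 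Your argument buys an all-orders, ordering-scheme-free statement and cleanly isolates the one nontrivial input (the leading coefficient of a formally self-adjoint operator is real); the paper's computation is shorter and identifies the coefficients explicitly as the real parts of the given $\alpha_k$, whereas your $A_k$ for $k<n$ may differ from $\mathrm{Re}\,\alpha_k$ by $O(\hbar^2)$ terms — a discrepancy you correctly flag and which is harmless for the lemma as stated. The inductive step is sound: the symmetrized term is self-adjoint because $A_n$ is real, and the degree genuinely drops after the subtraction.
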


\begin{proof}
Write $\alpha_k(q)$ as
\begin{equation}
    \alpha_k(q) = A_k(q) + i B_k(q),
\end{equation}
with $A_k, B_k$ being real-valued.
Now let us find constraints which self-adjointness $\hat{H}^{(0)} = (\hat{H}^{(0)})^*$ imposes
on the imaginary and the real part of $\alpha_k(q)$. We have:
\begin{equation}
    (\alpha_k(q) \hat{p}^k)^*=  \hat{p}^k \alpha^*_k(q) =\alpha^*_k(q) \hat{p}^k -
        i \hbar k \frac{\partial \alpha^*_k(q)}{\partial q} \hat{p}^{k - 1} +
        O(\hbar^2).
 \end{equation}
Thus, for symmetric Hamiltonians, we should have
\begin{equation}
    \alpha_n(q) = \alpha^*_n(q),
\end{equation}
\begin{equation}
    \alpha_k(q) = \alpha^*_k(q) -
        i \hbar (k + 1) \frac{\partial \alpha^*_{k + 1}(q)}{\partial q} +
        O(\hbar^2), \quad k < n
\end{equation}
and therefore for $k<n$
\begin{equation}
    B_k(q) = -\hbar \frac{k + 1}{2}
        \frac{\partial A_{k + 1}(q)}{\partial q} +
        O(\hbar^2).
\end{equation}
This proves the lemma.
\end{proof}

Let us prove the following technical lemma.

\begin{lemma} \label{1} We have the identity
\begin{equation}
    \hat{p}^k e^{\frac{i}{\hbar} S} \chi =
        \left(\frac{\partial S}{\partial q}\right)^k
            e^{\frac{i}{\hbar} S} \chi -
        i \hbar e^{\frac{i}{\hbar} S} \left(
            k \left(\frac{\partial S}{\partial q}\right)^{k - 1}
                \frac{\partial \chi}{\partial q} +
            \frac{k (k - 1)}{2}
                \left(\frac{\partial S}{\partial q}\right)^{k - 2}
                    \frac{\partial^2 S}{\partial q^2} \chi
        \right) + O(\hbar^2).
\end{equation}
\end{lemma}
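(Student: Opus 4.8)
The plan is to prove the identity by conjugating the operator $\hat p^k$ by the oscillatory factor $e^{\frac{i}{\hbar}S}$ and then expanding the conjugated operator to first order in $\hbar$. First I would record the elementary conjugation formula: since $\hat p = -i\hbar\,\partial/\partial q$, the Leibniz rule gives, for any smooth $g$,
\begin{equation}
    \hat p\,\big(e^{\frac{i}{\hbar}S} g\big) = e^{\frac{i}{\hbar}S}\Big(\frac{\partial S}{\partial q}\,g - i\hbar\,\frac{\partial g}{\partial q}\Big),
\end{equation}
i.e., as operators, $e^{-\frac{i}{\hbar}S}\,\hat p\,e^{\frac{i}{\hbar}S} = \frac{\partial S}{\partial q} - i\hbar\,\frac{\partial}{\partial q}$. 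Iterating this $k$ times yields
\begin{equation}
    e^{-\frac{i}{\hbar}S}\,\hat p^{\,k}\,e^{\frac{i}{\hbar}S} = \Big(\tfrac{\partial S}{\partial q} - i\hbar\,\tfrac{\partial}{\partial q}\Big)^{k},
\end{equation}
so the whole problem reduces to expanding the right-hand side modulo $\hbar^2$.

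Next I would expand $(\tfrac{\partial S}{\partial q} - i\hbar\,\tfrac{\partial}{\partial q})^{k}$, acting on $\chi$, as a sum over words of length $k$ in the two noncommuting letters ``multiplication by $\partial S/\partial q$'' and ``$-i\hbar\,\partial/\partial q$''. Only words with zero or one occurrence of the second letter survive modulo $\hbar^2$. The word with no derivative gives $(\partial S/\partial q)^{k}\chi$; the word in which $-i\hbar\,\partial/\partial q$ sits in position $j$ gives $-i\hbar\,(\partial S/\partial q)^{j-1}\,\partial_q\big((\partial S/\partial q)^{k-j}\chi\big)$, and distributing the derivative via
\begin{equation}
    \partial_q\big((\partial S/\partial q)^{k-j}\chi\big) = (k-j)\,(\partial S/\partial q)^{k-j-1}\,\tfrac{\partial^2 S}{\partial q^2}\,\chi + (\partial S/\partial q)^{k-j}\,\tfrac{\partial \chi}{\partial q}
\end{equation}
turns the sum over $j = 1,\dots,k$ into $-i\hbar\big(\sum_{j=1}^{k}1\big)(\partial S/\partial q)^{k-1}\,\partial\chi/\partial q - i\hbar\big(\sum_{j=1}^{k}(k-j)\big)(\partial S/\partial q)^{k-2}\,\partial^2 S/\partial q^2\,\chi$. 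Using $\sum_{j=1}^{k}1 = k$ and $\sum_{j=1}^{k}(k-j) = k(k-1)/2$ reproduces exactly the bracketed first-order term in the statement, and multiplying back by $e^{\frac{i}{\hbar}S}$ finishes the proof.

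The only delicate point — the ``main obstacle'', modest as it is — is the noncommutativity of $\partial/\partial q$ with multiplication by $\partial S/\partial q$: the $O(\hbar)$ coefficient is \emph{not} just $k(\partial S/\partial q)^{k-1}$ times a single derivative of $\chi$, because the one derivative can also fall on the $\partial S/\partial q$ factors standing to its right, and this is precisely what produces the $\tfrac{k(k-1)}{2}(\partial S/\partial q)^{k-2}\,\partial^2 S/\partial q^2$ term. If one prefers to avoid the word-counting bookkeeping, the same identity can be obtained by induction on $k$: apply the $k=1$ conjugation formula to $\hat p^{\,k-1}e^{\frac{i}{\hbar}S}\chi$, insert the inductive hypothesis, differentiate the $O(1)$ part (the $O(\hbar)$ part need only be carried along), and observe that the coefficients satisfy the Pascal-type recursions $k = (k-1)+1$ and $\tfrac{k(k-1)}{2} = \tfrac{(k-1)(k-2)}{2} + (k-1)$. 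Either route is a short computation once the conjugation identity is in place.
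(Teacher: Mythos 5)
Your proof is correct. The paper establishes the lemma by positing the expansion with an unknown first-order coefficient $\chi^{(k)}$, deriving the recurrence $\chi^{(k)} = \frac{\partial S}{\partial q}\chi^{(k-1)} - i\left(\frac{\partial S}{\partial q}\right)^{k-1}\frac{\partial \chi}{\partial q} - i(k-1)\left(\frac{\partial S}{\partial q}\right)^{k-2}\frac{\partial^2 S}{\partial q^2}\chi$ and solving it --- which is precisely the inductive alternative you sketch at the end --- while your primary route (conjugating to $\left(\frac{\partial S}{\partial q} - i\hbar\,\partial_q\right)^k$ and counting words with a single derivative letter) is the same elementary computation organized as a direct expansion rather than a recurrence; both are sound and yield the coefficients $k$ and $k(k-1)/2$ identically.
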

\begin{proof}
It is clear that
\begin{equation}
    \left(
        -i \hbar \frac{\partial}{\partial q}
    \right)^k
        e^{\frac{i}{\hbar} S} \chi =
    \left(
        \frac{\partial S}{\partial q}
    \right)^k
        e^{\frac{i}{\hbar} S} \chi +
    \hbar e^{\frac{i}{\hbar} S} \chi^{(k)} +
    O(\hbar^2),
\end{equation}
for some $\chi^{(k)}$.
Differentiating this identity, we obtain a recurrence
\begin{equation}
    \chi^{(k)} =
    \frac{\partial S}{\partial q} \chi^{(k - 1)} -
    i \left(
        \frac{\partial S}{\partial q}
    \right)^{k - 1}
        \frac{\partial \chi}{\partial q} -
    i (k - 1) \left(
        \frac{\partial S}{\partial q}
    \right)^{k - 2}
    \frac{\partial^2 S}{\partial q^2} \chi,
\end{equation}
which gives the desired formula for  $\chi^{(k)}$:
\begin{equation} \label{chi-k}
    \chi^{(k)} =
        -i k \left(\frac{\partial S}{\partial q}\right)^{k - 1}
            \frac{\partial \chi}{\partial q} -
        i \frac{k (k - 1)}{2}
            \left(\frac{\partial S}{\partial q}\right)^{k - 2}
                \frac{\partial^2 S}{\partial q^2} \chi.
\end{equation}

\end{proof}

\begin{proposition} \label{p1}
The action of the Hamiltonian $\hat{H}^{(0)}$ on the the family of functions
$\psi(q) = e^{\frac{i}{\hbar} S(q)} \chi(q)$ is
\begin{gather}
    \hat{H}^{(0)} e^{\frac{i}{\hbar} S} \chi =
        H^{(0)}\left(\frac{\partial S}{\partial q}, q \right)
            e^{\frac{i}{\hbar} S} \chi +
        \hbar \widetilde{H}^{(0)}\left(\frac{\partial S}{\partial q}, q \right)
            e^{\frac{i}{\hbar} S} \chi +  \nonumber
        \hbar e^{\frac{i}{\hbar} S}
            \sum_{k = 0}^n A_k(q) \chi^{(k)} + O(\hbar^2),
\end{gather}
where $\chi^{(k)}$ is given by (\ref{chi-k}), and $H^{(0)}$ and $\widetilde{H}^{(0)}$
are the first two terms in the semiclassical expansion of $\hat{H}^{(0)}$
\begin{equation} \label{p11}
    H^{(0)}\left(\frac{\partial S}{\partial q}, q \right) =
        \sum_{k = 0}^n A_k(q)
            \left(\frac{\partial S}{\partial q}\right)^k,
\end{equation}
\begin{equation}\label{p12}
    \widetilde{H}^{(0)}\left(\frac{\partial S}{\partial q}, q \right) =
        -i \sum_{k = 0}^n \frac{k}{2}
            \frac{\partial A_k(q)}{\partial q}
                \left(\frac{\partial S}{\partial q}\right)^{k - 1}.
\end{equation}
\end{proposition}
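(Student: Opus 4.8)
The plan is to substitute the normal-ordered form of $\hat{H}^{(0)}$ provided by the first lemma of this appendix and then apply Lemma~\ref{1} term by term. Write
\[
    \hat{H}^{(0)} = \sum_{k = 0}^n A_k(q)\, \hat{p}^k
        - i \hbar \sum_{k = 1}^n \frac{k}{2} \frac{\partial A_k(q)}{\partial q}\, \hat{p}^{k - 1} + O(\hbar^2),
\]
with real-valued $A_k$. I would treat the two sums separately: the principal part $\sum_k A_k(q)\hat{p}^k$ contributes at orders $\hbar^0$ and $\hbar^1$, while the second sum already carries an explicit factor of $\hbar$, so there only the $\hbar^0$ action of $\hat{p}^{k-1}$ on $e^{\frac{i}{\hbar}S}\chi$ is needed.

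For the principal part, Lemma~\ref{1} gives
\[
    \hat{p}^k e^{\frac{i}{\hbar}S} \chi =
        \left(\frac{\partial S}{\partial q}\right)^{\!k} e^{\frac{i}{\hbar}S}\chi
            + \hbar\, e^{\frac{i}{\hbar}S} \chi^{(k)} + O(\hbar^2),
\]
with $\chi^{(k)}$ as in \eqref{chi-k}; in particular $\chi^{(0)} = 0$, and for $k = 1$ the second term of \eqref{chi-k} is absent. Multiplying by $A_k(q)$ and summing over $k$, the order-$\hbar^0$ contribution is $\sum_k A_k(q)(\partial S/\partial q)^k\, e^{\frac{i}{\hbar}S}\chi = H^{(0)}(\partial S/\partial q, q)\, e^{\frac{i}{\hbar}S}\chi$ by \eqref{p11}, and the order-$\hbar^1$ contribution is exactly $\hbar\, e^{\frac{i}{\hbar}S}\sum_{k=0}^n A_k(q)\chi^{(k)}$.

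For the second sum, since it is already $O(\hbar)$, Lemma~\ref{1} lets us replace $\hat{p}^{k-1}e^{\frac{i}{\hbar}S}\chi$ by $(\partial S/\partial q)^{k-1}e^{\frac{i}{\hbar}S}\chi$ at the cost of an $O(\hbar^2)$ error, so this piece equals $-i\hbar\sum_{k=1}^n \frac{k}{2}\frac{\partial A_k(q)}{\partial q}(\partial S/\partial q)^{k-1}\, e^{\frac{i}{\hbar}S}\chi + O(\hbar^2)$, which is $\hbar\,\widetilde{H}^{(0)}(\partial S/\partial q, q)\,e^{\frac{i}{\hbar}S}\chi + O(\hbar^2)$ by \eqref{p12}. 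Adding the two contributions yields the claimed expansion. The argument is essentially a bookkeeping exercise assembling the two lemmas; the only points needing a little care are tracking the $O(\hbar^2)$ remainders — in particular noticing that the explicit $\hbar$-term of $\hat{H}^{(0)}$ requires $\hat{p}^{k-1}$ only at leading order — and the degenerate cases $k = 0, 1$ in the formula for $\chi^{(k)}$, where the terms with negative powers of $\partial S/\partial q$ do not appear.
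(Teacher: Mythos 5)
Your proposal is correct and follows essentially the same route as the paper's own proof: substitute the normal-ordered form (\ref{H-0}) of $\hat{H}^{(0)}$, apply Lemma \ref{1} to the principal part to extract the $\hbar^0$ and $\hbar^1$ contributions, and keep only the leading-order action of $\hat{p}^{k-1}$ in the term already carrying an explicit factor of $\hbar$. Your additional remarks on the degenerate cases $k=0,1$ of $\chi^{(k)}$ are accurate but not needed, since the offending coefficients $k$ and $k(k-1)$ vanish automatically in formula (\ref{chi-k}).
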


\begin{proof}

Lemma \ref{1} implies
\begin{equation}
    \sum_{k = 0}^n A_k(q) \hat{p}^k e^{\frac{i}{\hbar} S} \chi =
    \sum_{k = 0}^n A_k(q)
        \left(\frac{\partial S}{\partial q}\right)^k
            e^{\frac{i}{\hbar} S} \chi +
    \hbar \sum_{k = 0}^n A_k(q)
        e^{\frac{i}{\hbar} S} \chi^{(k)} + O(\hbar^2),
\end{equation}
where $\chi^{(k)}$ is given by (\ref{chi-k}), and
\begin{equation}
    -i \hbar \sum_{k = 1}^n \frac{k}{2}
        \frac{\partial A_k(q)}{\partial q} \hat{p}^{k - 1}
            e^{\frac{i}{\hbar} S} \chi =
    -i \hbar \sum_{k = 1}^n \frac{k}{2}
        \frac{\partial A_k(q)}{\partial q}
            \left(\frac{\partial S}{\partial q}\right)^{k - 1}
            e^{\frac{i}{\hbar} S} \chi + O(\hbar^2).
\end{equation}
From here, using (\ref{H-0}) we immediately obtain (\ref{p11}) and (\ref{p12}).

\end{proof}

From the proposition \ref{p1} we obtain the formula for the action of
$\hat{H} = \hat{H}^{(0)} + \hbar \hat{H}^{(1)} + O(\hbar^2)$ on $e^{\frac{i}{\hbar} S} \chi$:
\begin{multline}\notag
    \hat{H}e^{\frac{i}{\hbar} S} \chi =
    H^{(0)}\left(\frac{\partial S}{\partial q}, q \right)
        e^{\frac{i}{\hbar} S} \chi +
    \hbar \widetilde{H}^{(0)}\left(\frac{\partial S}{\partial q}, q \right)
        e^{\frac{i}{\hbar} S} \chi + \\ +
    \hbar e^{\frac{i}{\hbar} S}
        \sum_{k = 0}^n A_k(q) \chi^{(k)} +
    \hbar H^{(1)}\left(\frac{\partial S}{\partial q}, q\right)
        e^{\frac{i}{\hbar} S} \chi + O(\hbar^2).
\end{multline}

Now let us find the asymptotic of solutions to the nonstationary Schr\"odinger equation (\ref{nse-M}).
Evaluating both sides of (\ref{nse-M}) on functions
$\psi(q, t) = \exp(\frac{i}{\hbar} S) \left(\chi_0 + \hbar \chi_1 + O(\hbar^2)\right)$ as
$\hbar \to 0$ we obtain
\begin{equation}
    i \hbar \frac{\partial}{\partial t}
        e^{\frac{i}{\hbar} S} \left(\chi_0 + \hbar \chi_1 + O(\hbar^2) \right) =
    -\frac{\partial S}{\partial t} e^{\frac{i}{\hbar} S} \chi_0 +
    \hbar e^{\frac{i}{\hbar} S} \left(
        -\frac{\partial S}{\partial t} \chi_1 +
            i \frac{\partial \chi_0}{\partial t}
    \right) + O(\hbar^2),
\end{equation}
and
\begin{multline}\notag
    \hat{H} \psi(t, q) =
    H^{(0)}\left(\frac{\partial S}{\partial q}, q \right)
        e^{\frac{i}{\hbar} S} \chi_0 +
    \hbar H^{(0)}\left(\frac{\partial S}{\partial q}, q \right)
        e^{\frac{i}{\hbar} S} \chi_1 +
    \hbar \widetilde{H}^{(0)}\left(\frac{\partial S}{\partial q}, q \right)
        e^{\frac{i}{\hbar} S} \chi_0 + \\ +
    \hbar e^{\frac{i}{\hbar} S}
        \sum_{k = 0}^n A_k(q) \chi^{(k)}_0 +
    \hbar H^{(1)}\left(\frac{\partial S}{\partial q}, q\right)
        e^{\frac{i}{\hbar} S} \chi_0 + O(\hbar^2).
\end{multline}
Combining terms of degree zero and one, we obtain
\begin{equation} \label{deg0}
    -\frac{\partial S(q, t)}{\partial t} =
        H^{(0)}\left(\frac{\partial S(q, t)}{\partial q}, q \right),
\end{equation}
in degree zero and
\begin{gather} \label{deg1}
    -\frac{\partial S}{\partial t} \chi_1 +
    i \frac{\partial \chi_0}{\partial t} =
    H^{(0)}\left(\frac{\partial S}{\partial q}, q \right) \chi_1 +
    \widetilde{H}^{(0)}\left(\frac{\partial S}{\partial q}, q \right) \chi_0 +
        \sum_{k = 0}^n A_k(q) \chi^{(k)}_0 +
    H^{(1)}\left(\frac{\partial S}{\partial q}, q\right) \chi_0
\end{gather}
in degree one.

The equation (\ref{deg0}) is the Hamilton--Jacobi equation for classical Hamiltonian
$H_0(p, q)$. Taking this into account, we can rewrite the degree one equation as
\begin{equation}
    i \frac{\partial \chi_0}{\partial t} =
    \widetilde{H}^{(0)}\left(\frac{\partial S}{\partial q}, q \right) \chi_0 +
        \sum_{k = 0}^n A_k(q) \chi^{(k)}_0 +
    H^{(1)}\left(\frac{\partial S}{\partial q}, q\right) \chi_0.
\end{equation}

Now use formulae for $\widetilde{H}^{(0)}$ and $\chi^{(k)}_0$ that we derived earlier
and we have
\begin{multline} \nonumber
    i \frac{\partial \chi_0}{\partial t} =
    -i \sum_{k = 0}^n \left(\frac{k}{2}
        \frac{\partial A_k(q)}{\partial q}
            \left(\frac{\partial S}{\partial q}\right)^{k - 1} \chi_0 +
        k A_k(q)
            \left(\frac{\partial S}{\partial q}\right)^{k - 1}
            \frac{\partial \chi_0}{\partial q} +
            \right. \\  + \left.
        \frac{k (k - 1)}{2} A_k(q)
            \left(\frac{\partial S}{\partial q}\right)^{k - 2}
                \frac{\partial^2 S}{\partial q^2} \chi_0
        \right) +
    H^{(1)}\left(\frac{\partial S}{\partial q}, q\right) \chi_0.
\end{multline}
It is easy to rearrange it to
\begin{align}\label{chi-0}
    -i H^{(1)}\left(\frac{\partial S}{\partial q}, q\right) \chi_0  &=
    \left(\frac{\partial \chi_0}{\partial t} +
        \sum_{k = 0}^n k A_k(q)
            \left(\frac{\partial S}{\partial q}\right)^{k - 1}
            \frac{\partial \chi_0}{\partial q} \right) + \\
       &+  \frac{1}{2} \sum_{k = 0}^n
            \left( k \frac{\partial A_k(q)}{\partial q}
                \left(\frac{\partial S}{\partial q}\right)^{k - 1} +
            k (k - 1) A_k(q)
                \left(\frac{\partial S}{\partial q} \right)^{k - 2}
                    \frac{\partial^2 S}{\partial q^2}
        \right) \chi_0.
\end{align}

Let $\sigma_0 = \{p(\tau, q_0), q(\tau, q_0)\}_{\tau = 0}^t$ be the solution to Hamilton's
equations with the initial condition $p(0) = f'(q_0)$ and $q(0) = q_0$
\footnote{The condition $q(t, q_0)=q$, generically, gives finitely many trajectories connecting
two Lagrangian submanifolds $L_f$ and $T^*Q$ in time $t$.}.
For $q(t) = q(t, q_0)$ we have
\begin{equation}
        \dot{q}(t) = \frac{\partial H^{(0)}(p(t), q(t))}{\partial p}.
\end{equation}
We also have $p(t) = p(t, q_0) = \frac{\partial S(q, q_0, t)}{\partial q}$
where $S(q, q_0, t)$ is
the Hamilton--Jacobi action (\ref{HJ}) evaluated on $\sigma_0$. Thus
\begin{equation} \label{ham}
   \dot{q}(t) = \frac{\partial H^{(0)}(p(t), q(t))}{\partial p} =
        \sum_{k = 0}^n  k A_k(q) \big( p(t) \big)^{k - 1}.
\end{equation}
From here, we conclude
\begin{eqnarray} \nonumber
    \frac{\partial \chi_0(q(t), t)}{\partial t} +
     \frac{\partial \chi_0 (q(t), t)}{\partial q}  \sum_{k = 0}^n
        k\, A_k(q(t))  \big( p(t) \big)^{k - 1}  =
    \frac{\partial \chi_0 (q(t), t)}{\partial t} +
    \dot{q}(t) \frac{\partial \chi_0 (q(t), t)}{\partial q} =
    \frac{d \chi_0 (q(t),t)}{dt}.
\end{eqnarray}
Now we can write the equation (\ref{chi-0}) as
\begin{multline}
   - i H^{(1)}\left(p(t), q(t)\right) \chi_0(q(t), t) =  \frac{d \chi_0(q(t), t)}{dt} + \\ +
       \frac{1}{2} \sum_{k = 0}^n
        \left( k \frac{\partial A_k(q(t))}{\partial q}
               \big( p(t) \big)^{k - 1} +
               k (k - 1) A_k(q)  \big( p(t) \big)^{k - 2} \
               \frac{\partial^2 S(q(t))}{\partial q^2}
        \right) \chi_0(q(t), t).
\end{multline}
Differentiating (\ref{ham}) in $q_0$ we obtain
\begin{eqnarray} \nonumber
    \frac{\partial \dot{q}(t,q_0)}{\partial q_0} =
        \sum_{k = 0}^n \left(
            k \frac{\partial A_k(q(t))}{\partial q} \big( p(t) \big)^{k - 1}
            +
            k (k - 1) A_k(q(t))  \big( p(t) \big)^{k - 2} \
            \frac{\partial^2 S(q(t))}{\partial q^2}
        \right)
    \frac{\partial q(t,q_0)}{\partial q_0},
\end{eqnarray}
or
\begin{equation}    \frac{d}{dt} \log \left| \frac{\partial q(t, q_0)}{\partial q_0}\right| =
        \sum_{k = 0}^n \left(
            k \frac{\partial A_k(q(t))}{\partial q}\big( p(t) \big)^{k - 1}
            +
            k (k - 1) A_k(q(t)) \big( p(t) \big)^{k - 2} \
            \frac{\partial^2 S(q(t))}{\partial q^2}
        \right).
\end{equation}
Combining all these identities, we obtain
\begin{equation}
    \frac{d \chi_0(q(t, q_0), t)}{dt} +
        \frac{1}{2} \frac{d}{dt}
            \log \left| \frac{\partial q(t, q_0)}{\partial q_0}\right| \chi_0 (q(t, q_0), t)=
        -i H^{(1)}\big(p(t, q_0), q(t, q_0) \big) \chi_0 (q(t, q_0), t).
\end{equation}
Denote $D_{q_0}(t) = \left| \frac{\partial q(t, q_0)}{\partial q_0}\right|^{-\frac{1}{2}}$
and substitute $\chi_0(q(t), t) = D_{q_0}(t) \Psi(t, q_0)$
\footnote{Here we indicate the dependence on $q_0$ since this is the initial point determining
the classical trajectory $q(t, q_0)$.},
then
\begin{equation}
    \frac{d}{dt} \Psi(t,q_0) =
        -i H^{(1)}\big(p(t), q(t) \big) \Psi(t, q_0).
\end{equation}
Since $D_{q_0}(0) = 1$ and $q(0) = q_0$, we have $\Psi(0, q_0) = \varphi(q_0)$.

Now, assume that $\sigma_0$ connects $L_f$ and $T^*_q Q$ in time $t$, i.e. that it is one
of the trajectories $\sigma_\alpha$ with $q_\alpha(t, q_0)=q$. Let $q^\alpha_0(t, q) \in L_f$
be the starting point of $\sigma_\alpha$. Denote
$D^\alpha(q, t) = \left| \frac{\partial q_0^\alpha(t, q)}{\partial q}\right|^{\frac{1}{2}}$.
For the contribution to the semiclassical asymptotic (\ref{ss-as}) from $\sigma_\alpha$ we have
\begin{equation}
    e^{\frac{iS^\alpha(q,t)}{\hbar}} D^\alpha(q, t) \Psi^\alpha(t, q_0^\alpha(t, q)).
\end{equation}
This proves the theorem.

\section{Multi-time Hamilton--Jacobi action}
\label{sec:appendix_mt_HJ}

Here we recall some basic facts on the Hamilton--Jacobi action for integrable systems on an
exact symplectic manifold.

Let $(\mathcal{M}_{2n}, \omega), \ \omega = d \alpha$ be an exact symplectic manifold,
$\sigma \colon \mathbb{R}^n \to \mathcal{M}_{2n}$ be a multi-time parametrized path
in $\mathcal{M}_{2n}$, $\mathbf{t} \mapsto x(\mathbf{t})$ and
$\gamma \colon [0, 1] \to \mathbb{R}^n, \ \tau \mapsto \gamma(\tau) \in \mathbb{R}^n$ be
a parametrized path in $\mathbb{R}^n$.

The Hamilton--Jacobi action for the multi-time evolution of an integrable system with
Hamiltonians $H_1, \ldots, H_n$ is
\begin{equation}
    \label{mt_HJ}
    S_\gamma[\sigma] =
        \int_0^1 \left(
            \sum_{a = 1}^{2n} \alpha_a \big(x(\gamma(\tau))\big)
                \frac{d x^a(\gamma(\tau))}{d \tau} -
            \sum_{i = 1}^n H_i\big(x(\gamma(\tau))\big)
                \dot{\gamma}^i(\tau)
        \right) d\tau.
\end{equation}

Let $\mathrm{Im}(\sigma \circ \gamma) \subset \mathcal{M}_{2n}, \
\mathrm{Im} \gamma \subset \mathbb{R}^n$ be images of the corresponding parametrized
paths. The action (\ref{mt_HJ}) can be written as
\begin{equation}
    S_\gamma[\sigma] =
        \int\limits_{\mathrm{Im}(\sigma \circ \gamma)} \alpha -
        \int\limits_{\mathrm{Im} \gamma} H_\sigma,
\end{equation}
where $H_\sigma(\mathbf{t}) = \sum_{i = 1}^n H_i(x(\mathbf{t})) dt^i \in
\Omega^1(\mathbb{R}^n)$.

The variational problem for (\ref{mt_HJ}) is to find paths $\sigma$ such that
\begin{equation}
    \delta_\sigma S_\gamma[\sigma] = 0
\end{equation}
for all $\gamma$. Here $\delta_\sigma$ is a variation in $\sigma$ only, for
fixed $\gamma$. It can be easily computed
\begin{multline}\nonumber
    \delta_\sigma S_\gamma[\sigma] =
        \int_0^1 \sum_{a = 1}^{2n}  \left(
            \sum_{b = 1}^{2n} \omega_{ab}\big(x(\gamma(\tau))\big)
                \frac{d x^b(\gamma(\tau))}{d \tau} -
            \sum_{i = 1}^n
                \frac{\partial H_i\big(x(\gamma(\tau))\big)}{\partial x^a}  \dot{\gamma}^i(\tau)
            \right) \delta x^a(\gamma(\tau)) d\tau + \\ +
        \sum_a \alpha_a\big(x(\gamma(\tau))\big) \delta x^a(\gamma(\tau))
            \Big|_{\tau = 0}^{\tau = 1}.
\end{multline}
The Euler--Lagrange equations for this variational problem are
\begin{equation}
    \label{mt_EL}
    \sum_b \omega_{ab}(x(\mathbf{t}))
        \frac{\partial x^b(\mathbf{t})}{\partial t_k} =
            \frac{\partial H_k(x(\mathbf{t}))}{\partial x^a}.
\end{equation}

Solutions to these equations are critical points of $S_\gamma[\sigma]$ (for the
fixed $\gamma$) if the boundary terms
\begin{equation}
    \sum_a \alpha_a(x(\mathbf{t})) \delta x^a(\mathbf{t}) -
        \sum_a \alpha_a(x(0)) \delta x^a(0)
\end{equation}
also vanish.

In the case $\mathcal{M}_{2n} = T^* Q_n$, the boundary terms are
\begin{equation}
    \sum_i p_i(\mathbf{t}) \delta q^i(\mathbf{t}) -
        \sum_i p_i(0) \delta q^i(0).
\end{equation}
If $q(\mathbf{t}) = q$ is fixed, the first term vanishes.
If $p_i(0) = \tfrac{\partial f}{\partial q^i}(q(0))$, the second term is
$-\tfrac{\partial f}{\partial q^i}(q(0)) \delta q^i(0)$. This means that the modified
action
\begin{equation}
    \label{modified_action}
    S_{\gamma, f}[\sigma] = S_\gamma[\sigma] + f(q(0))
\end{equation}
is critical on solutions of (\ref{mt_EL}) with boundary conditions $q(\mathbf{t}) = q$
and $p_i(0) = \tfrac{\partial f}{\partial q^i}(q(0))$.

Let $S^\alpha(q, \mathbf{t})$ be the critical value of the modified action
(\ref{modified_action}) on the solution $\sigma_\alpha$. It is easy to show that
\begin{itemize}
    \item $S^\alpha(q, \mathbf{t})$ does not depend on $\gamma$.
    \item If $(p^\alpha(\mathbf{t}), q)$ is the endpoint of $\sigma_\alpha$,
        \begin{equation}
            p^\alpha_i(\mathbf{t}) =
                \frac{\partial S^\alpha(q, \mathbf{t})}{\partial q^i}.
        \end{equation}
\end{itemize}

Here are some more facts on the multi-time evolutions:
\begin{itemize}
    \item Consider the space $L_x$ of all multi-time trajectories through $x \in \mathcal{M}_{2n}$,
        $L_x = \{y \in \mathcal{M}_{2n} \mid y = x(\mathbf{t})
        \text{ for some } \mathbf{t} \in \mathbb{R}^n, x(0) = x\}$. It is easy to
        see that the pullback of $\omega$ to $L_x$ is
        \begin{equation}
            \omega|_{L_x} =
                -\frac{1}{2} \sum_{\substack{k, l \\ k < l}}
                    \{H_k, H_l\} dt_k \wedge dt_l.
        \end{equation}
        It is vanishing since $\{H_k, H_l\} = 0$.
        Therefore, $L_x$ is a Lagrangian submanifold.

    \item For a generic Lagrangian submanifold $L$ the intersection
        $L \cap L_x$ is a finite collection of points. These points
        are endpoints of the multi-time trajectories connecting $x$ and
        $L$. If $x_\alpha \in L \cap L_x$, $x_\alpha = x(\mathbf{t}_\alpha)$
        for some $\mathbf{t}_\alpha \in \mathbb{R}^n$, where $x(\mathbf{t})$
        is multi-time trajectory originated at $x = x(0)$.

    \item For a generic Lagrangian submanifold
        $L' \subset \mathcal{M}_{2n}$ we will have finitely multi-time trajectories
        connecting $L'$ with $L$ in a given multi-time $\mathbf{t}$.
        The intersection points $\phi_\mathbf{t}(L') \cap L$ are the endpoints
        of these trajectories.
\end{itemize}

\section{The proof of the lemma \ref{lem: unity lemma}}
\label{sec: unity lemma}

The leading order in the semiclassical expansion $\hbar \to 0$ of the symmetric combination
of Cherednik--Dunkl operators \eqref{dunkl_quantum} is given by the same symmetric combination
of semiclassical Cherednik--Dunkl operators
\begin{equation}\label{eq: clas CD}
    D_i = p_i +  \sum_{i > j} \frac{z_i}{z_i - z_j} K_{ij} -
        \sum_{i < j} \frac{z_j}{z_j - z_i} K_{ij}
\end{equation}
where momenta $p_i$ and coordinates $z_k$ are commuting variables $[p_i, z_k] = 0$, and
transpositions $K_{ij}$ acts both on momenta and coordinates: $p_i K_{ij} = K_{ij} p_j$,
$z_i K_{ij} = K_{ij} z_j$.

The semiclassical Cherednik--Dunkl operators \eqref{eq: clas CD} satisfy the following
relations
\begin{equation}
    [D_i, D_j] = 0, \qquad
    K_{i, i + 1} D_i = D_{i + 1} K_{i, i + 1} + 1, \qquad
    [D_i, K_{j, j + 1}] = 0, \quad i \ne j, j + 1,
\end{equation}
\begin{equation}\label{eq: class coms2}
    [D_i, z_j] = - z_{\mathrm{max}(i, j)} K_{ij}, \qquad
    [D_i, z_i] = \sum_{j \ne i} z_{\mathrm{max}(i, j)} K_{ij}.
\end{equation}
One can use these identities to present any function of the semiclassical
Cherednik--Dunkl operators $f(D_1, \ldots, D_n)$ as an element of the form
\begin{eqnarray} \label{eq: group_algebra_Sn}
    f(D_1, \ldots, D_n) = \sum_{w \in S_n} f_\omega(p, z) K_\omega,
\end{eqnarray}
where $f_\omega(p, z)$ are functions of all $p_i, z_k$, and $K_\omega$ is the permutation
obtained as a product of transpositions $K_{ij}$.

Thus, to prove that quantum Hamiltonians of spin Calogero--Moser system $\widehat{H}_k$
\eqref{scCM} in the leading order in $\hbar$ are independent on permutations, it suffices
to prove that any symmetric polynomial of semiclassical Cherednik--Dunkl operators has
$f_\omega = 0$ for $\omega \ne id$ in the expansion \eqref{eq: group_algebra_Sn}. It is
equivalent to the fact that $f(D_1, \ldots, D_n)$ commutes with all coordinates $z_k$.
It is enough to prove that only for the first $n$ elementary symmetric polynomial of
$D_1, \ldots, D_n$, because any symmetric polynomial of $n$ variables can be expressed as
a polynomial of the first $n$ elementary symmetric polynomials.

\begin{lemma}
The generating function of elementary symmetric functions of the
semiclassical Cherednik--Dunkl operators
\begin{equation}\label{eq: gen fun}
    t(\lambda) = \prod_{j = 1}^n (\lambda + D_j)
\end{equation}
commute with coordinates $z_1, \ldots, z_n$
\begin{equation}
    [t(\lambda), z_k] = 0, \quad 1 \le k \le n.
\end{equation}
\end{lemma}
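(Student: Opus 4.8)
The plan is to prove the Lemma directly, i.e.\ to show $[t(\lambda),z_k]=0$ for all $k$; by the discussion preceding the Lemma this is exactly the permutation–independence of the leading symbols that we need, since it is equivalent to $t(\lambda)$ having trivial group–algebra part ($f_w=0$ for $w\neq\mathrm{id}$ in the expansion \eqref{eq: group_algebra_Sn}). The argument splits into three parts: first I reduce, using the degenerate affine Hecke relations, to a single value of $k$; then I exploit the special feature of that value to simplify the relevant commutators; then I finish by induction on $n$.

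\emph{Step 1: reduction to $k=n$.} I would first record that $t(\lambda)=\prod_j(\lambda+D_j)$ commutes with every adjacent transposition $K_{i,i+1}$. From $K_{i,i+1}D_i=D_{i+1}K_{i,i+1}+1$ one gets $K_{i,i+1}D_{i+1}=D_iK_{i,i+1}-1$, and then a two–line manipulation gives $K_{i,i+1}(\lambda+D_i)(\lambda+D_{i+1})=(\lambda+D_i)(\lambda+D_{i+1})K_{i,i+1}$; since $(\lambda+D_j)$ commutes with $K_{i,i+1}$ for $j\neq i,i+1$ and the $\lambda+D_j$ all commute with one another, $t(\lambda)$ commutes with $K_{i,i+1}$, hence with all of $\mathbb{C}[S_n]$ (this is the usual centrality of symmetric functions of Cherednik--Dunkl operators in the degenerate affine Hecke algebra). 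Consequently $K_w\,[t(\lambda),z_k]\,K_w^{-1}=[t(\lambda),z_{w(k)}]$, so $[t(\lambda),z_k]=0$ for one index is equivalent to $[t(\lambda),z_k]=0$ for all indices, and it is enough to treat $k=n$.

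\emph{Step 2: the special feature of $z_n$ and the inductive set-up.} For $k=n$ one has $\max(i,n)=n$ for all $i<n$, so the relations \eqref{eq: class coms2} collapse to
\[
    [D_i,z_n]=-z_n\,K_{in}\ \ (i<n),\qquad [D_n,z_n]=z_n\sum_{i<n}K_{in},
\]
with the single function $z_n$ occurring throughout; summing already yields $[\sum_iD_i,z_n]=0$, the $e_1$–case. Combined with the recursion coming from \eqref{eq: clas CD}, namely $\lambda+D_j^{(n)}=(\lambda+D_j^{(n-1)})-\tfrac{z_n}{z_n-z_j}K_{jn}$ for $j<n$ and $\lambda+D_n^{(n)}=\lambda+p_n+\sum_{j<n}\tfrac{z_n}{z_n-z_j}K_{jn}$, this sets up an induction on $n$: the base $n=1$ is $[\,p_1,z_1]=0$, and in the inductive step one expands $t_n(\lambda)=\prod_{j=1}^n(\lambda+D_j^{(n)})$, peeling the Hecke pieces carrying the index $n$ off the factors $\lambda+D_j^{(n-1)}$ and off $\lambda+D_n^{(n)}$, and reducing to $t_{n-1}(\lambda)$.

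\emph{Step 3 (the heart of the matter).} One shows that all terms with a net nontrivial permutation cancel, leaving $t_n(\lambda)$ a function of $p$ and $z$; this is already visible for $n=2$, where $t_2(\lambda)=(\lambda+p_1)(\lambda+p_2)+\tfrac{z_1z_2}{(z_1-z_2)^2}$. Two mechanisms are at play: a ``momentum factor'' $\lambda+p_j$ paired with a Hecke term gives a vanishing contribution because $p_j$ commutes with rational functions of $z$ and with all $p_i$, while a product of two Hecke terms recombines, via $K_{in}^2=1$ and the braid relations, into a rational function of the $z$'s. The main obstacle is exactly this bookkeeping in Step 3: in the inductive expansion one must transport the permutation operators $K_{jn}$ past the remaining factors $\lambda+D_l^{(n)}$, which do not commute with them, and verify — using the degenerate affine Hecke relations together with $[D_i,D_j]=0$ — that every net permutation cancels. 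Once that is done, $[t_n(\lambda),z_n]=0$ is immediate (a function of $p,z$ commutes with $z_n$), and by Step 1 the Lemma follows for all $k$.
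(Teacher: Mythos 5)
Your Step 1 is correct: $t(\lambda)$ commutes with every $K_{i,i+1}$ by the degenerate affine Hecke relations together with $[D_i,D_j]=0$, and conjugation then reduces the lemma to the single case $k=n$. This is exactly the closing step of the paper's own proof, and your Step 2 recursion relating $\lambda+D_j^{(n)}$ to $\lambda+D_j^{(n-1)}$ is also set up correctly.

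The problem is Step 3, which is not a piece of bookkeeping you have postponed but the entire content of the lemma. As the paper observes just before the statement, $[t(\lambda),z_k]=0$ for all $k$ is \emph{equivalent} to $f_\omega=0$ for $\omega\neq\mathrm{id}$ in the expansion \eqref{eq: group_algebra_Sn}; so the assertion that ``all terms with a net nontrivial permutation cancel, leaving $t_n(\lambda)$ a function of $p$ and $z$'' is a restatement of the goal, not an argument for it. The two mechanisms you name do account for $n=2$, but already for $n=3$ the expansion of $\prod_j(\lambda+D_j)$ produces three-cycles such as $K_{13}K_{23}$ whose coefficients receive contributions from several distinct pairings of Hecke terms sitting in different factors, and their cancellation is a genuine combinatorial identity that neither mechanism delivers on its own. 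The paper supplies precisely this missing ingredient: the auxiliary identity \eqref{eq: Kcancel} of Proposition \ref{prop: Kcancel}, which shows that conjugating the partially reduced commutator $A_l$ by $K_{l-1,l}$ strips one factor $(\lambda+D_l)$ off to the right, so that $A_n=[t(\lambda),z_n]$ telescopes down to $A_2=(z_2K_{12}-z_2K_{12})\prod_{k\ge 3}(\lambda+D_k)=0$. To complete your induction on $n$ you would need to prove an analogous cancellation identity controlling how the operators $K_{jn}$ extracted from each factor move past the remaining $\lambda+D_l$; without such an identity the proof is incomplete at its central step.
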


\begin{proof}
First, let us prove the auxiliary identities.
\begin{proposition}\label{prop: Kcancel}
The identities
\begin{equation} \label{eq: Kcancel}
    K_{l - 1, l} \left( (\lambda + D_{l - 1}) \sum_{j < l} z_l K_{jl} -
        z_l K_{l - 1, l} (\lambda + D_l) \right) K_{l - 1, l} =
            \sum_{j < l - 1} z_{l - 1} K_{j, l - 1}  \cdot (\lambda + D_l)
\end{equation}
hold for $1 < l \le n$.
\end{proposition}
\begin{proof}
Let us show first that the coefficients of the $\lambda$-term in \eqref{eq: Kcancel}
on the left-hand side and the right-hand side are equal
\begin{equation}
    K_{l - 1, l} \left(\sum_{j < l} z_l K_{jl} - z_l K_{l - 1, l}\right) K_{l - 1, l} =
    K_{l - 1, l} \left(\sum_{j < l - 1} z_l K_{jl} \right) K_{l - 1, l} =
    \sum_{j < l - 1} z_{l - 1} K_{j, l - 1}.
\end{equation}
Consider the $\lambda^0$ coefficient on the left-hand side of \eqref{eq: Kcancel}
\begin{equation}
    K_{l - 1, l} \left( D_{l - 1} \sum_{j < l} z_l K_{jl} -
        z_l K_{l - 1, l} D_l \right) K_{l - 1, l} =
    \left( (D_l K_{l - 1, l} + 1) \sum_{j < l} z_l K_{jl} -
        z_{l - 1} D_l \right) K_{l - 1, l}.
\end{equation}
The first summand in the brackets
\begin{multline} \notag
    (D_l K_{l - 1, l} + 1) \sum_{j < l} z_l K_{jl} =
        \sum_{j < l} z_l K_{jl} +
            D_l \cdot \sum_{j < l - 1} z_{l - 1} K_{j, l - 1} K_{l - 1, l} +
                D_l z_{l - 1} = \\ =
        \sum_{j < l - 1} z_l K_{jl} + z_l K_{l - 1, l} +
            D_l \cdot \sum_{j < l - 1} z_{l - 1} K_{j, l - 1} K_{l - 1, l} +
                z_{l - 1} D_l - z_l K_{l - 1, l} = \\ =
        \sum_{j < l - 1} z_l K_{jl} +
            D_l \cdot \sum_{j < l - 1} z_{l - 1} K_{j, l - 1} K_{l - 1, l} +
                z_{l - 1} D_l.
\end{multline}
Then,
\begin{equation}
    K_{l - 1, l} \left( D_{l - 1} \sum_{j < l} z_l K_{jl} -
        z_l K_{l - 1, l} D_l \right) K_{l - 1, l} =
        \sum_{j < l - 1} z_l K_{jl} K_{l - 1, l} +
            D_l \cdot \sum_{j < l - 1} z_{l - 1} K_{j, l - 1}.
\end{equation}
And, using the fact that $[D_l, z_{l - 1} K_{j, l - 1}] = [D_l, z_{l - 1}] K_{j, l - 1} =
-z_l K_{l - 1, l} K_{j, l - 1} = -z_l K_{jl} K_{l - 1, l}$ for $j < l - 1$, we can write
\begin{equation}
    K_{l - 1, l} \left( D_{l - 1} \sum_{j < l} z_l K_{jl} -
        z_l K_{l - 1, l} D_l \right) K_{l - 1, l} =
    \sum_{j < l - 1} z_{l - 1} K_{j, l - 1} \cdot D_l,
\end{equation}
so, the $\lambda^0$ coefficients on the left-hand side and the right-hand side of
\eqref{eq: Kcancel} coincide.
\end{proof}

Let us consider
\begin{equation}
    A_n = [t(\lambda), z_n].
\end{equation}
Using the relations \eqref{eq: class coms2}, one can show that
\begin{equation}
    A_n = \prod_{k = 1}^{n - 1} (\lambda + D_k) \cdot \sum_{j < n} q_n K_{jn} -
        \sum_{j < n} \prod_{k = 1}^{j - 1} (\lambda + D_k) \cdot q_n K_{jn} \cdot
            \prod_{k = j + 1}^n (\lambda + D_n)
\end{equation}
or, excluding $D_{n - 1}$ and $K_{n - 1, n}$ containing terms
\begin{multline} \notag
    A_n = \prod_{k = 1}^{n - 2} (\lambda + D_k) \cdot (\lambda + D_{n - 1}) \cdot
        \sum_{j < n} q_n K_{jn} -
        \prod_{k = 1}^{n - 2} (\lambda + D_k) \cdot q_n K_{n - 1, n} \cdot (\lambda + D_n) - \\ -
        \sum_{j < n - 1} \prod_{k = 1}^{j - 1} (\lambda + D_k) \cdot q_n K_{jn} \cdot
            \prod_{k = j + 1}^n (\lambda + D_k).
\end{multline}
Let $A_{n - 1} = K_{n - 1, n} A_n K_{n - 1, n}$. Taking into account that
$[K_{n - 1,n}, (\lambda + D_{n - 1})(\lambda + D_{n})] = 0$ and $[K_{n - 1, n}, D_k] = 0$ for
$k \ne n - 1, n$, we get
\begin{multline} \notag
    K_{n - 1, n} \left(
        \prod_{k = 1}^{n - 2} (\lambda + D_k) \cdot (\lambda + D_{n - 1}) \cdot
            \sum_{j < n} q_n K_{jn} -
            \prod_{k = 1}^{n - 2} (\lambda + D_k) \cdot q_n K_{n - 1, n} \cdot (\lambda + D_n)
        \right) K_{n - 1, n} = \\ =
    \prod_{k = 1}^{n - 2} (\lambda + D_k) \cdot K_{n - 1, n} \left(
        (\lambda + D_{n - 1}) \sum_{j < n} q_n K_{jn} - q_n K_{n - 1, n} (\lambda + D_n)
    \right) K_{n - 1, n},
\end{multline}
\begin{multline} \notag
    K_{n - 1, n} \left(
        \sum_{j < n - 1} \prod_{k = 1}^{j - 1} (\lambda + D_k) \cdot q_n K_{jn} \cdot
        \prod_{k = j + 1}^n (\lambda + D_k) \right) K_{n - 1, n} = \\ =
    \sum_{j < n - 1} \prod_{k = 1}^{j - 1} (\lambda + D_k) \cdot q_{n - 1} K_{j, n - 1} \cdot
        \prod_{k = j + 1}^n (\lambda + D_k).
\end{multline}
Applying proposition \ref{prop: Kcancel} with $l = n$, we have
\begin{multline} \notag
    A_{n - 1} = \left( \prod_{k = 1}^{n - 2} (\lambda + D_k) \cdot
        \sum_{j < n - 1} q_{n - 1} K_{j, n - 1} -
        \sum_{j < n - 1} \prod_{k = 1}^{j - 1} (\lambda + D_k) q_{n - 1} K_{j, n - 1}
            \prod_{k = j + 1}^{n - 1}(\lambda + D_k) \right) (\lambda + D_n).
\end{multline}
Repeating this step with $A_{j - 1} = K_{j - 1, j} A_j K_{j - 1, j}$ and taking into account
proposition \ref{prop: Kcancel} for $l = j$ in the end we come to
\begin{equation}
    A_2 = (q_2 K_{12} - q_2 K_{12}) \prod_{k = 3}^n (\lambda + D_k) = 0
\end{equation}
which implies all the $A_j = 0$.

So, we have proved that $t(\lambda) z_n = z_n t(\lambda)$. The generating function $t(\lambda)$
commutes with any permutation $K_{jn}$, then, conjugating by $K_{jn}$, we obtain
and using conjugations by permutations $K_{jn}$
\begin{equation}
    K_{jn} t(\lambda) z_n K_{jn} = K_{jn} z_n t(\lambda) K_{jn}
        \quad \Rightarrow \quad t(\lambda) z_j = z_j t(\lambda),
\end{equation}
which completes our proof.
\end{proof}

\section{The proof of the proposition \ref{prop:freezing point}}
\label{sec:Proof of fixing}

Proposition \ref{prop:freezing point} can be proven using the Lax matrix formula for Hamiltonians.
The Lax operator for Calogero--Moser system for $N$ particles is an $N \times N$ matrix
\begin{equation}
    L = P + M,
\end{equation}
where $P = \mathrm{diag}(p_1, \ldots, p_N)$ is the diagonal matrix and $M$ is a matrix
with $M_{ii} = 0$ and
\begin{equation}
    M_{ij} =  \frac{z_{i}}{z_{i}-z_{j}}, \qquad  \text{ for }  i\ne j.
\end{equation}

Let $S(\lambda)$ be the generating function of the Hamiltonians
\begin{equation}
    S(\lambda, p, z) = \det \left(\lambda + L\right) =
        \sum_{i = 1}^N \lambda^i H_{N - i}(p, z).
\end{equation}
The idea of the proof is to show that the differential of this generating function
\begin{equation}
    dS(\lambda, p, z) =
        \sum_{i = 1}^N
            \lambda^i \ dH_{N - i}(p, z) =
        \sum_{i = 1}^N
            \Big(F_{i}(\lambda, p, z) dp_i +
                G_i(\lambda, p, z) dq_{i} \Big)
\end{equation}
vanishes at $(0, \zeta)$,
\begin{equation}
 \zeta =  \{\zeta_1, \ldots, \zeta_n\}, \qquad \text{with }  \zeta_k = \exp \left( \tfrac{2 \pi i k}{n} \right).
\end{equation}

Let us show that we have
\begin{align}
    \label{eq:FGfixing-one}
        F_i(\lambda, 0, \zeta) &= F_j(\lambda, 0, \zeta),
            & i, j = 1, \ldots, N, \\
    \label{eq:FGfixing-two}
        G_i(\lambda, 0, \zeta) &= 0,
            & i = 1, \ldots, N.
\end{align}
The function $F_i(0, z)$ is the determinant of the matrix $M^{(i)}$ (of size
$(N - 1) \times (N - 1)$), obtained from $M$ by removing the $i$-th row and column.
\begin{equation}
    F_i (\lambda, 0, z) = \det \left( M^{(i)}(z) + \lambda \right).
\end{equation}
Note that the function $F_{i}$ is independent of $z_i$, because only the $i$-th column
and the $i$-th row of $M$ contain $z_i$. The function $F_i$ is also symmetric in the rest
variables $z_j$ (because the permutation of $z_k$ and $z_l$ is just a simultaneous
transposition of the $k$-th and $l$-th columns and $k$-th and $l$-th rows of $M$).

Therefore, all $F_i$ can be written in terms of one symmetric function depending
in $(N - 1)$ variables
\begin{equation}
    F_i(\lambda, 0, z) = f(\lambda, z_1, \ldots, \widehat{z_i}, \ldots, z_n),
\end{equation}
where $\widehat{z_i}$ means that this variable is omitted.
The matrix elements of $M$ are invariant with respect to the dilation
$z_i \mapsto \rho z_i$. Therefore
\begin{equation}
    f(\lambda, \rho x_1, \ldots, \rho x_{n - 1}) =
        f(\lambda, x_1, \ldots, x_{n - 1}).
\end{equation}
Now, observe that
\begin{equation}
    (\zeta_1, \ldots, \widehat{\zeta_j}, \ldots, \zeta_n) =
        \zeta_j (\zeta_{n - j + 1}, \ldots, \zeta_{n - 1}, \widehat{1},
            \zeta_1, \ldots, \zeta_{n - j}) =
        \zeta_j \sigma(\widehat{1}, \zeta_1, \ldots, \zeta_{n - 1}),
\end{equation}
where $\sigma$ is a cyclic permutation. Thus,
\begin{equation}
    f(\lambda, \ldots, \widehat{\zeta_j}, \ldots) =
        f(\lambda, \ldots, \widehat{\zeta_i}, \ldots), \quad
            i, j = 1, \ldots, n
\end{equation}
and we proved (\ref{eq:FGfixing-one}).

For the function $G_{i}$, we have
\begin{equation}
    G_i(\lambda, 0, z) =
        \frac{\partial}{\partial z_{i}} \det \left(M + \lambda \right).
\end{equation}
As above, it is sufficient to prove that only one of the functions $G_{i}$ is equal to zero.
Let us prove that $G_{N}(0, \zeta)$ vanishes
\begin{equation}
    G_N(\lambda, 0, z) = \mathrm{det}(M') + \mathrm{det}(M''),
\end{equation}
where $M'$ ($M''$) is the matrix $M + \lambda$ where the last row (column) is replaced by its
derivative in $z_{N}$.

Taking into account the identities
\begin{equation}
    \zeta_j = \zeta_{n - j}^{-1}, \qquad \left( \zeta_j \right)^n = 1,
\end{equation}
we have
\begin{equation}
    \det \left( M' \right)|_{z = \zeta} =
        - \det \left( M'' \right)|_{z = \zeta},
\end{equation}
i.e. $G_{N}(0, \zeta) = 0$. Together with the symmetry arguments, this implies
\eqref{eq:FGfixing-two}. This completes the proof of the proposition.

\section*{Declarations}
The authors do not have conflicts of interests and have no data to share.

\end{document}